\providecommand{\U}[1]{\protect\rule{.1in}{.1in}}
\newtheorem{theorem}{Theorem}
\newtheorem{definition}[theorem]{Definition}
\newtheorem{lemma}[theorem]{Lemma}
\newenvironment{proof}[1][Proof]{\noindent\textbf{#1.} }{\ \rule{0.5em}{0.5em}}
\DeclareMathOperator{\trace}{Tr}
\newcommand{\ket}[1]{\left| #1 \right>} %
\newcommand{\bra}[1]{\left< #1 \right|} %
\newcommand\myatop[2]{\genfrac{}{}{0pt}{}{#1}{#2}} %
\begin{document}
\preprint{UMass Technical Report UM-CS-2014-004}
\title{Covert Optical Communication}
\author{Boulat A. Bash,$^{1,2}$ Andrei H. Gheorghe,$^{2,3}$ Monika Patel,$^2$\\Jonathan L. Habif,$^2$ Dennis Goeckel,$^4$ Don Towsley,$^1$ and Saikat Guha$^2$}
\affiliation{$^1$\textit{School of Computer Science, University of Massachusetts, Amherst, Massachusetts, USA 01003},\\
$^2$\textit{Quantum Information Processing Group, Raytheon BBN Technologies, Cambridge, Massachusetts, USA 02138},\\
$^3$\textit{Amherst College, Amherst, Massachusetts, USA 01002},\\
$^4$\textit{Electrical and Computer Engineering Department, University of Massachusetts, Amherst, Massachusetts, USA 01003}
}
\thanks{This material is based upon work supported by the National Science Foundation under Grants CNS-1018464 and ECCS-1309573. SG was supported by the aforesaid NSF grant, under subaward number 14-007829 A 00, and the DARPA {\em Information in a Photon} program under contract number HR0011-10-C-0159. BAB, AHG, JLH and MP would like to acknowledge financial support from Raytheon BBN Technologies.}

\begin{abstract}
Encryption prevents unauthorized decoding, but does not ensure stealth---a security demand that a mere presence of a message be undetectable. We characterize the ultimate limit of covert communication that is secure against the most powerful physically-permissible adversary. We show that, although it is impossible over a pure-loss channel, covert communication is attainable in the presence of any excess noise, such as a $300$K thermal blackbody. In this case, $\mathcal{O}(\sqrt{n})$ bits can be transmitted reliably and covertly in $n$ optical modes using standard optical communication equipment. The all-powerful adversary may intercept all transmitted photons not received by the intended receiver, and employ arbitrary quantum memory and measurements. Conversely, we show that this square root scaling cannot be outperformed. We corroborate our theory in a proof-of-concept experiment. We believe that our findings will enable practical realizations of covert communication and sensing, both for point-to-point and networked scenarios.

\end{abstract}
\maketitle

Encryption prevents unauthorized access to transmitted information---a security need 
critical to modern-day electronic communication. 
Conventional computationally-secure encryption~\cite{menezes96HAC,talb2006}, 
  information-theoretic secrecy~\cite{wyner, csiszar78secrecy}, 
  and quantum cryptography~\cite{bb84} offer progressively higher levels of 
  security. 
Quantum key distribution (QKD) allows two distant parties to generate shared
  secret keys over a lossy-noisy channel
  that are secure from the most powerful adversary allowed by physics. 
This shared secret, when subsequently used to encrypt data using the 
  one-time-pad cipher~\cite{shannon49sec}, yields the most powerful form of 
  encryption. 
However, encryption does not mitigate the threat to the users' 
  privacy from the discovery of the very existence of the
  message itself (e.g.,~seeking of ``meta-data'' as detailed in the recent
  Snowden disclosures \cite{bbc14snowden}), nor does it provide the means to
  communicate when the adversary forbids it.
Thus, low probability of detection (LPD) or \emph{covert} communication systems
  are desirable that not only protect the message content, but also prevent
  the detection of the transmission attempt. 
Here we delineate, and experimentally demonstrate, the ultimate limit
  of covert communication that is secure against the most powerful adversary 
  physically permissible---the same benchmark of security to which quantum 
  cryptography adheres for encrypted communication.

Covert communication is an ancient discipline~\cite{herodotus} revived by the
  communication revolution of the last century.
Modern developments include spread-spectrum radio-frequency (RF) 
  communication~\cite{simon94ssh}, where the signal power
  is suppressed below the noise floor by bandwidth expansion; and 
  steganography~\cite{fridrich09stego}, where messages are hidden in 
  fixed-size, finite-alphabet covertext objects such as digital images.
We recently characterized the information-theoretic limit of classical covert 
  communication on an additive white Gaussian noise (AWGN) channel, the 
  standard model for RF channels~\cite{bash13squarerootjsac,bash12sqrtlawisit}.
We showed that the sender Alice can reliably transmit $\mathcal{O}(\sqrt{n})$ 
  bits to the intended receiver Bob in $n$ AWGN channel uses with arbitrarily 
  low probability of detection by the adversary Willie.
Thus, 
  a non-trivial burst of 
  covert bits can be transmitted when $n$ is large. 
Our work was generalized to other channel settings 
  \cite{bash14timing,che13sqrtlawbscisit,kadhe14sqrtlawmultipathisit,hou14isit}.
Similar square-root laws were also found in 
  steganography\footnote{The $\log n$ improvement in steganographic application 
  versus covert communication over a noisy channel is 
  attributable to the noiseless Alice-to-Bob channel, and the similarity in the
  square root laws is due to the mathematics of 
  classical~\cite{lehmann05stathyp} and quantum~\cite{helstrom76quantumdetect} 
  statistical hypothesis testing.}, where it was shown that Alice can 
  modify $\mathcal{O}(\sqrt{n})$ symbols in a covertext of 
  size $n$, embedding $\mathcal{O}(\sqrt{n}\log n)$ hidden 
  bits~\cite{ker07pool,fridrich09stego,filler09sqrtlawmarkov,ker09sqrtlawkey,ker10sqrtlawnokey,shaw11qstego}. 

Optical signaling~\cite{bouchet10fso,senior09fiber} is particularly attractive
  for covert communication due to its narrow diffraction-limited beam spread 
  in free space~\cite{gagliardi95optcomms,goodman05fourieroptics} and the ease 
  of detecting fiber taps using time-domain reflectometry~\cite{anderson04otdr}.
Our information-theoretic analysis of covert communication on the AWGN channel 
  also applies to a lossy optical channel with additive Gaussian noise when 
  Alice uses a laser-light transmitter and both Bob and Willie use 
  coherent-detection receivers.
However, modern high-sensitivity optical communication components are primarily 
  limited by noise of quantum-mechanical origin. 
Thus, recent studies on the performance of physical optical communication 
  have focused on this quantum-limited regime~\cite{giovannetti04cappureloss,
  Wol98,Wil12}.
Here we establish the quantum limits of covert communication.
We demonstrate that covert communication is impossible over a pure-loss channel.
However, when the channel has any 
  excess noise (e.g.,~the unavoidable thermal noise from the blackbody 
  radiation at the operating temperature), Alice can reliably transmit 
  $\mathcal{O}(\sqrt{n})$ covert bits to Bob using $n$ optical modes,
  even if Willie intercepts all the photons not reaching Bob and employs 
  arbitrary quantum memory and measurements.
This is achievable using standard laser-light modulation and homodyne detection (thus the Alice-Bob channel is still an AWGN channel).
Thus, noise enables stealth.
Indeed, if Willie's detector contributes excess noise 
  (e.g.,~dark counts in photon-counting detectors), Alice can covertly
  communicate to Bob, even when the channel itself is pure-loss. 
We also show that the square-root limit cannot be outperformed.
We corroborate our theoretical results with a proof-of-concept 
  experiment, where the excess noise in Willie's detection is emulated by dark 
  counts of his single photon detector. 
This is the first known implementation of a truly 
  quantum-information-theoretically secure covert communication system that 
  allows communication when all transmissions are prohibited.

\section*{Information-Theoretically Covert Communication}
Quantum and classical information-theoretic analyses of covert communication
  consider the \emph{reliability} and \emph{detectability} of a transmission.
We introduce these concepts next.
\\

\noindent {\em Reliability}---We consider a scenario where Alice attempts to 
  transmit $M$ bits to Bob using $n$ optical modes while Willie 
  attempts to detect her transmission attempt.
Each of the $2^M$ possible $M$-bit messages maps to an $n$-mode
  \emph{codeword}, and their collection forms a \emph{codebook}.
Since we consider single-spatial-mode fiber and free-space optical channels,
  each of the $n$ modes in the codeword corresponds to a signaling interval
  carrying one modulation symbol.
Desirable codebooks ensure that the codewords, when corrupted by the 
  channel, are distinguishable from one another.
This provides \emph{reliability}: a guarantee that the probability
  of Bob's error in decoding Alice's message $\mathbb{P}_e^{(b)}<\delta$
  with arbitrarily small $\delta>0$ for $n$ large enough.
In practice, error-correction codes (ECCs) are used to enable reliability.
\\

\noindent {\em Detectability}---
Willie's detector reduces to a binary hypothesis test of Alice's transmission
  state given his observations of the channel.
Denote by $\mathbb{P}_{\mathrm{FA}}$ the probability that Willie raises a false alarm 
  when Alice does not transmit, and by $\mathbb{P}_{\mathrm{MD}}$ the probability that 
  Willie misses the detection of Alice's transmission.
Under the assumption of equal prior
  probabilities on Alice's transmission state (unequal prior
  probabilities do not affect the asymptotics
  \cite{bash13squarerootjsac}), Willie's \emph{detection error
  probability}, 
  $\mathbb{P}_e^{(w)}=(\mathbb{P}_{\mathrm{FA}}+\mathbb{P}_{\mathrm{MD}})/2$.
Alice desires a reliable signaling scheme that is \emph{covert}, i.e., ensures
  $\mathbb{P}_e^{(w)}\geq1/2-\epsilon$ for an arbitrarily small 
  $\epsilon>0$ regardless of Willie's quantum measurement choice
  (since $\mathbb{P}_e^{(w)}=1/2$ for a random guess).
By decreasing her transmission power, Alice can decrease the effectiveness of
  Willie's hypothesis test at the expense of the reliability of Bob's decoding. 
\emph{Information-theoretically secure covert communication} is both
  reliable and covert. 
To achieve it, prior to transmission, Alice and Bob share a
  secret, the cost of which we assume to be substantially less than that of 
  being detected by Willie. 
Secret-sharing is consistent with other information-hiding systems
  \cite{bash13squarerootjsac,bash12sqrtlawisit,ker07pool,fridrich09stego,filler09sqrtlawmarkov,ker09sqrtlawkey,ker10sqrtlawnokey,shaw11qstego};
  however, as evidenced by the recent results for a restricted class of
  channels~\cite{che13sqrtlawbscisit,kadhe14sqrtlawmultipathisit}, we believe
  that certain scenarios (e.g.,~Willie's channel from Alice being worse than 
  Bob's) will allow secret-less optical covert communication.
\section*{Analysis of Covert Optical Communication}

Here we outline the theoretical development of quantum-information-theoretically
  secure covert optical communication. 
Formal theorem statements are deferred to the Methods, with detailed proofs
  in the Supplementary Information.
\\

\begin{figure}
\centering
\includegraphics[width=\columnwidth]{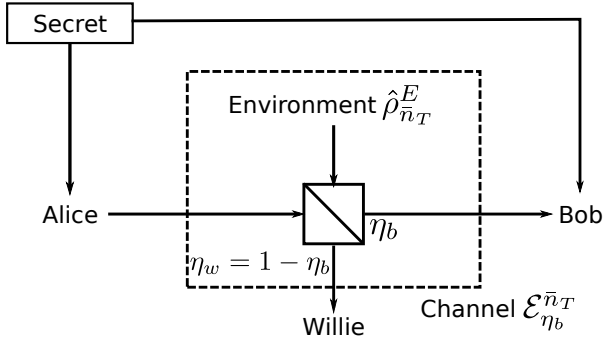}
\caption{Channel model. 
The input-output relationship is captured by a beamsplitter of transmissivity 
  $\eta_b$, with the transmitter Alice at one of the input ports and the 
  intended receiver Bob at one of the output ports, and
  $\eta_b$ being the fraction of Alice's signaling photons that reach Bob.
The other input and output ports of the beamsplitter correspond to the 
  environment and the adversary Willie.
Willie collects the entire $\eta_w=1-\eta_b$ fraction of Alice's 
  photons that do not reach Bob.
This models single-spatial-mode free-space and single-mode fiber optical 
  channels. 
Alice and Bob share a secret before the transmission.
\label{fig:channel}}
\end{figure}

\noindent {\em Channel model}---Consider a single-mode quasi-monochromatic 
  lossy optical channel ${\cal E}_{\eta_b}^{\bar{n}_T}$ of transmissivity
  $\eta_b \in (0,1]$ and thermal noise mean photon number per mode
  $\bar{n}_T \ge 0$, as depicted in Figure 
  \ref{fig:channel}.
Willie collects the entire $\eta_w=1-\eta_b$ fraction of Alice's 
  photons that do not reach Bob but otherwise remains passive, not injecting
  any light into the channel. 
Later we argue that being active does not help Willie to 
  detect Alice's transmissions.
For a pure loss channel ($\bar{n}_T=0$), the environment input is in the 
  \emph{vacuum} state $\hat{\rho}^E_0=|0\rangle\langle0|^E$, corresponding 
  to the minimum noise the channel must inject to preserve the 
  Heisenberg inequality of quantum mechanics. 
\\

\noindent{\em Pure loss insufficient for covert communication}---Regardless of 
  Alice's strategy, reliable and covert communication over a pure-loss channel
  to Bob is impossible. 
Theorem \ref{th:pureloss} in the Methods demonstrates that Willie can 
  effectively use an ideal \emph{single photon detector} (SPD) on each mode
  to discriminate between an $n$-mode vacuum state and any non-vacuum state in
  Alice's codebook.
Willie avoids false alarms since no photons impinge on his SPD when Alice is 
  silent.
However, a single \emph{click}---detection of one or more photons---gives away 
  Alice's transmission attempt regardless of the actual quantum state of Alice's
  signaling photons.
Alice is thus constrained to codewords that are nearly indistinguishable from 
  vacuum, rendering unreliable any communication attempt that is designed to be
  covert.
Furthermore, any communication attempt that is designed to be reliable 
  cannot remain covert, as Willie detects it with high 
  probability for large $n$. 
This is true even when Alice and Bob have access to an infinitely-large
  pre-shared secret.
Thus, if Willie controlled the environment (as assumed in QKD proofs), by
  setting it to vacuum, he could deny covert communication between Alice and 
  Bob.
However, a positive amount of non-adversarial excess noise---whether from 
  the thermal background or the detector itself---is unavoidable, which
  enables covert communication.
\\

\noindent{\em Channel noise yields the square root law}---Now consider 
  the lossy bosonic channel 
  ${\cal E}_{\eta_b}^{\bar{n}_T}$, where the environment mode is in a thermal 
  state with mean photon number $\bar{n}_T>0$. 
A thermal state is represented by a mixture of coherent states 
  $\ket{\alpha}$---quantum descriptors of ideal laser-light---weighted by a 
  Gaussian distribution over the field amplitude $\alpha \in {\mathbb C}$, 
  $\hat{\rho}^E_{\bar{n}_T}=\frac{1}{\pi \bar{n}_T}\int e^{-{|\alpha|^2}/{\bar{n}_T}}|\alpha\rangle\langle\alpha |^E{\rm d}^2\alpha$. 
This thermal noise masks Alice's transmission attempt, enabling covert 
  communication even when Willie has arbitrary resources, such as
  access to all signaling photons not captured by Bob and
  any quantum-limited measurement on the light he thus captures. 
Theorem~\ref{th:thermal} in the Methods demonstrates that in this scenario
  Alice can reliably transmit $\mathcal{O}(\sqrt{n})$ covert bits using $n$ 
  optical modes to Bob, who needs only a conventional homodyne-detection 
  receiver.
Alice achieves this using mean photon number per mode
  $\bar{n}=\mathcal{O}(1/\sqrt{n})$.
Conversely, Theorem~\ref{th:converse} states that if Alice
  exceeds the limit of $\mathcal{O}(\sqrt{n})$ covert bits in $n$ optical modes,
  transmission either is detected or unreliable.
\\
 
\noindent{\em Detector noise also enables covert communication}---While any 
  $\bar{n}_T>0$ enables covert communication, the number of 
  covertly-transmitted bits decreases with $\bar{n}_T$. 
Blackbody radiation is negligible at optical frequencies (e.g., a typical 
  daytime value of $\bar{n}_T \approx 10^{-6}$ photons per mode at the optical
  telecom wavelength of $1.55 \mu$m~\cite{Kop70}).
However, other sources of excess noise can also hide the transmissions
  (e.g.~detector dark counts and Johnson noise).
To illustrate information-hiding capabilities of these noise sources, we 
  consider the (hypothetical) pure-loss channel, for which
  Willie's optimal receiver is an 
  ideal photon number resolving (PNR) detector on each mode
  (as discussed in the Supplementary Information).
The prevalent form of excess noise afflicting PNR detectors is the \emph{dark 
  counts}---erroneous detection events stemming from internal 
  spontaneous emission processes.
Thus, we consider a pure-loss channel where
  Willie is equipped with a PNR detector. %
Theorem \ref{th:dark} in the Methods demonstrates that, using an on-off keying 
  (OOK) coherent state modulation where Alice transmits the {\em on} symbol 
  $\ket{\alpha}$ with probability $q=\mathcal{O}(1/\sqrt{n})$ and the {\em off} 
  symbol $\ket{0}$ with probability $1-q$, Alice can reliably transmit 
  $\mathcal{O}(\sqrt{n})$ covert bits using $n$ OOK symbols. 
\\

\noindent{\em A structured strategy for covert communication}---The skewed 
  on-off duty cycle of OOK modulation makes construction of
  efficient ECCs challenging. 
Constraining OOK signaling to $Q$-ary pulse position modulation (PPM)
  addresses this issue by sacrificing a constant fraction of throughput.
Each PPM symbol uses a PPM {\em frame} to transmit a sequence of $Q$ coherent 
  state pulses, $\ket{0}\ldots\ket{\alpha}\ldots\ket{0}$, encoding message 
  $i\in\{1,2,\ldots,Q\}$ by transmitting $\ket{\alpha}$ in the 
  $i^{\text{th}}$ mode of the PPM frame. 
Thus, instead of $\mathcal{O}(n)$ bits allowed by OOK, PPM lets 
  $\mathcal{O}\left(\frac{n\log Q}{Q}\right)$ bits be transmitted in $n$
  optical modes.
However, PPM performs well in the low photon number regime~\cite{wang12ppm} 
  and the symmetry of its symbols enables the use of many efficient ECCs.

To communicate covertly, Alice and Bob use a fraction 
  $\zeta=\mathcal{O}\left(\sqrt{Q/n}\right)$ of $n/Q$ available 
  PPM frames on average, effectively using $\bar{n}=\mathcal{O}(1/\sqrt{n})$
  photons per mode.
By keeping secret which frames they use, Alice and Bob force Willie to 
  examine all of them, increasing the likelihood of dark counts.
An ECC that is known by Willie ensures reliability.
However, the transmitted pulse positions
  are scrambled within the corresponding PPM frames via
  an operation resembling one-time pad encryption~\cite{shannon49sec},
  preventing Willie's exploitation of the ECC's structure for detection
  (rather than protecting the message content).
Theorem \ref{th:lpd_ppm} demonstrates that, using this scheme, Alice reliably
  transmits
  $\mathcal{O}\left(\sqrt{\frac{n}{Q}}\log Q\right)$ covert bits at the cost 
  of pre-sharing $\mathcal{O}\left(\sqrt{\frac{n}{Q}}\log n\right)$ secret bits.

\section*{Experimental Results}
\noindent\emph{Objective and design}---To demonstrate the square-root law of 
  covert optical communication we realized a proof-of-concept test-bed 
  implementation.
Alice and Bob engage in an $n$-mode communication 
  session consisting of $n/Q$ $Q$-ary PPM frames, $Q=32$.
As described in the Methods, Alice transmits $\zeta n/Q$ PPM symbols on average,
  using a first order Reed-Solomon (RS) code for error correction. 
RS codes perform well on channels dominated by {\em erasures},
  which occur in low received-power scenarios, e.g.,~covert and deep space
  communication~\cite{moision06deepspace}.
We defer the specifics of the generation of the transmitted signal to
  the Methods.
We varied $n$ from $3.2\times10^6$ to $3.2\times10^7$ in several communication 
  regimes: ``careful Alice'' ($\zeta=0.25\sqrt{Q/n}$),
  ``careless Alice'' ($\zeta=0.03\sqrt[4]{Q/n}$), and ``dangerously
  careless Alice'' ($\zeta=0.003$ and $\zeta=0.008$).  
For each $(n,\zeta)$ pair we conducted $100$ experiments and $10^5$ Monte-Carlo
  simulations, measuring Bob's total number of bits 
  received and Willie's detection error probability.
\\

\begin{figure}
\centering
\includegraphics[width=\columnwidth]{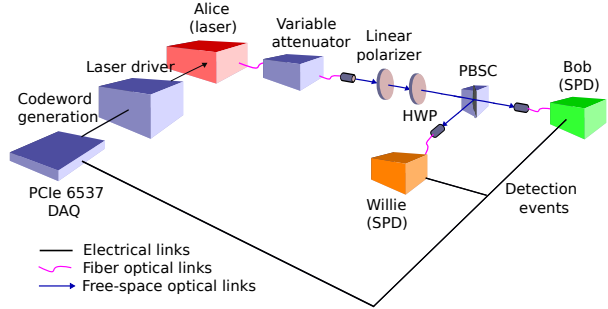}
\caption{Experimental setup. 
A National Instruments PCIe-6537 data acquisition card (DAQ), driven by a 1 MHz 
  clock, controlled the experiment, generating 
  transmissions and reading detection events.
Alice generated 1 ns optical pulses using
  a temperature-stabilized laser diode with center wavelength 1550.2 nm.
The pulses were sent into a free-space optical channel, where a half-wave plate 
  (HWP) and polarizing beamsplitter cube (PBSC) sent a fraction $\eta_b$ 
  of light to Bob, and the remaining light to Willie.  
Bob and Willie's receivers operated InGaAs 
  Geiger-mode avalanche photodiode SPDs that 
  were gated with 1 ns reverse bias triggered to match the
  arrival of Alice's pulses.
\label{fig:Exp_setup}}
\end{figure}

\begin{table}
\caption{Optical channel characteristics}
\label{tab:exp_params}
\ifx\naturesub\undefined
\begin{tabular}{|m{2.3cm}||c|c||c|c|}
\else
\begin{tabular}{|l||c|c||c|c|}
\fi
\hline
&\multicolumn{2}{c||}{Willie}&\multicolumn{2}{c|}{Bob}\\
\hline
Experimental observations& $p_D^{(w)}$ & $\bar{n}_{\mathit{det}}^{(w)}$ & $p_D^{(b)}$ & $\bar{n}_{\mathit{det}}^{(b)}$ \\
\hline
$\zeta=0.25\sqrt{Q/n}$ & $9.15\times10^{-5}$ & $0.036$ & $2.99\times10^{-6}$ & $1.52$ \\
$\zeta=0.03\sqrt[4]{Q/n}$ & $9.11\times10^{-5}$ & $0.032$ & $2.55\times10^{-6}$ & $1.14$ \\
$\zeta=0.003$ & $9.29\times10^{-5}$ & $0.032$ & $2.65\times10^{-6}$ & $1.07$ \\
$\zeta=0.008$ & $9.27\times10^{-5}$ & $0.028$ & $2.68\times10^{-6}$ & $1.05$ \\
\hline
\hline
Target: & $9\times10^{-5}$ & $0.03$ & $3\times10^{-6}$ & $1.4$ \\
\hline
\end{tabular}
\end{table}

\begin{figure}[t]
\centering
\begin{center}
\resizebox{\columnwidth}{!}{\input{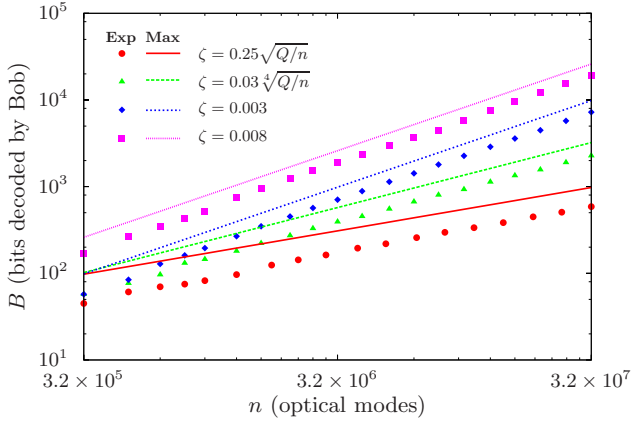}}
\end{center}
\vspace{-0.2in}
\caption{Bits decoded by Bob.  
  Each data point is an average from $100$ experiments, with negligibly small 
  95\% confidence intervals. 
  The symbol error rates are: $1.1\times10^{-4}$ for $\zeta=0.25\sqrt{Q/n}$, 
  $8.3\times10^{-3}$ for $\zeta=0.03/\sqrt[4]{Q/n}$, $4.5\times10^{-3}$ for 
  $\zeta=0.003$, and $1.8\times10^{-2}$ for $\zeta=0.008$. 
  We also report the maximum throughput $\frac{C_s\zeta n}{Q}$ computed in the 
  Methods using the experimentally-observed values from 
  Table~\ref{tab:exp_params}, where $C_s$ is
  the per-symbol Shannon capacity~\cite{shannon48it}.  
  Given the low observed symbol error rate for $\zeta=0.25\sqrt{Q/n}$, we note 
  that a square root scaling is achievable even using a relatively short RS 
  code; Figure~\ref{fig:willie} demonstrates that this is achieved covertly.
\label{fig:bits}}
\end{figure}

\begin{figure}[t]
\centering
\begin{center}
\resizebox{\columnwidth}{!}{\input{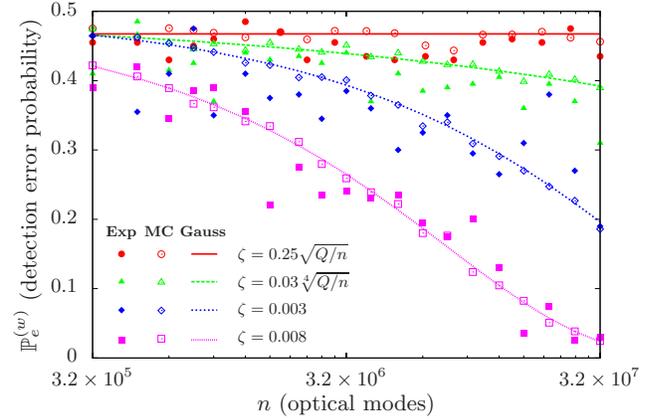}}
\end{center}
\vspace{-0.2in}
\caption{Willie's error probability.  
  Estimates from $100$ experiments have solid fill; estimates from $10^5$ 
  Monte-Carlo simulations have clear fill; and Gaussian approximations are 
  lines. 
  The 95\% confidence intervals (computed in the Methods) for the experimental
  estimates are $\pm 0.136$; for the Monte-Carlo simulations they are 
  $\pm 0.014$. 
  Alice transmits $\zeta n/Q$ PPM symbols on average and
  Willie's error probability remains constant when Alice obeys the square root 
  law and uses $\zeta=\mathcal{O}(\sqrt{Q/n})$; it drops as $n$ increases if 
  Alice breaks the square root law by using an asymptotically larger $\zeta$.
\label{fig:willie}}
\end{figure}

\noindent\emph{Implementation}---The experiment was conducted using a mixture of
  fiber-based and free-space
  optical elements implementing channels from Alice to both Bob and Willie  
  (see Figure \ref{fig:Exp_setup} for a schematic).
Due to the low intensity of Alice's pulses, direct detection using single
  photon detectors (SPDs), rather than PNR receivers, was sufficient.
Several configurations were considered for implementing the background noise
  at the receivers.  
We provided noise only during the gating period of the detectors since
  continuous wave light irradiating Geiger-mode avalanche photodiodes (APDs) 
  suppresses detection efficiency \cite{Ma09}.
Instead of providing extraneous optical pulses during the gating window of the 
  APD, we emulated optical noise at the detectors by 
  increasing the detector gate voltage, thus increasing the detector's dark 
  click probability.  
While the APD dark counts are Poisson-distributed with mean rate $\bar{n}_N$
  photons per mode, when $\bar{n}_N \ll 1$, the dark click probability
  $1-e^{-\bar{n}_N}$ is close to $\frac{\bar{n}_N}{1+\bar{n}_N}$, 
  the probability that an incoherent thermal 
  background with mean photon number per mode $\bar{n}_N$ produces a click.
In Table \ref{tab:exp_params} we report the experimentally-observed and 
  targeted values of dark click probabilities $p_D^{(b)}$ and $p_D^{(w)}$ of 
  Bob's and Willie's detectors, as well as the mean number of photons detected 
  by Bob $\bar{n}^{(b)}_{\mathit{det}}=\eta_{b}\eta_{\mathrm{QE}}^{(b)}\bar{n}$
  and Willie 
  $\bar{n}^{(w)}_{\mathit{det}} = (1-\eta_{b})\eta_{\mathrm{QE}}^{(w)}\bar{n}$, 
  where $\bar{n}=5$ is the mean photon number of Alice's pulses, $\eta_b=0.97$
  is the fraction of light sent to Bob, and $\eta_{\mathrm{QE}}^{(b)}$ and 
  $\eta_{\mathrm{QE}}^{(w)}$ 
  are the quantum efficiencies of Bob's and Willie's detectors, which
  we do not explicitly calculate.
However, quantum efficiency is strongly correlated with the detector's dark 
  click probability~\cite{Ri98}.

The amount of transmitted information, with other parameters fixed, is 
  proportional to $\bar{n}_{\mathit{det}}^{(b)}/\bar{n}_{\mathit{det}}^{(w)}$. 
Our choice of $\bar{n}_{\mathit{det}}^{(b)} \gg \bar{n}_{\mathit{det}}^{(w)}$ 
  allowed the experiment to gather a statistically meaningful data sample in 
  a reasonable duration.  
In an operational free-space laser communication system, a directional 
  transmitter will likely yield just such an asymmetry in coupling between Bob 
  and Willie; however, we note that the only fundamental requirement
  for implementing information-theoretically secure covert communication is 
  $p_D^{(w)}>0$, or $\bar{n}_T>0$.
\\

\noindent\emph{Results}---Alice and Bob
  use a $(31,15)$ RS code.
Figure \ref{fig:bits} reports the number of bits received by Bob with the 
  corresponding symbol error rate in our 
  experiments, and his maximum throughput from Alice
  (calculated for each regime using the experimentally-observed values from 
  Table \ref{tab:exp_params}).
The details of our analysis are in the Methods.
Our relatively short RS code achieves between 45\% and 60\% of the maximum  
  throughput in the ``careful Alice'' regime and between 55\% and 75\% of  
  the maximum in other regimes at reasonable error rates,
  showing that even a basic code demonstrates our theoretical scaling.

Willie's detection problem can be reduced to a test between two simple
  hypotheses where the log-likelihood ratio test minimizes 
  $\mathbb{P}_e^{(w)}$~\cite{lehmann05stathyp}.
Figure \ref{fig:willie} reports Willie's probability of error estimated 
  from the experiments and the Monte-Carlo study,
  as well as its analytical Gaussian approximation, with the implementation
  details deferred to the Methods.
Monte-Carlo simulations show that the Gaussian approximation is accurate.
More importantly, Figure \ref{fig:willie} highlights Alice's safety when she
  obeys the square root law and her peril when she does not.
When $\zeta=\mathcal{O}(1/\sqrt{n})$, $\mathbb{P}_e^{(w)}$ remains 
  constant as $n$ increases.
However, for asymptotically larger $\zeta$, $\mathbb{P}_e^{(w)}$ drops at a rate
  that depends on Alice's carelessness. 
The drop at $\zeta=0.008$ vividly demonstrates our converse.

\section*{Discussion}
\label{sec:conclusion}
We determined that covert communication is achievable provided that 
  the adversary's measurement is subject to non-adversarial excess noise.
Excess noise is crucial, as pure loss alone does not allow
  covert communication, starkly contrasting the QKD scenario. 
However, the existence of excess noise in practical systems  
  (e.g.,~blackbody radiation and dark counts)
  allows covert communication, as demonstrated for the first time in our
  proof-of-concept optical covert communication system.
Even though our results are for an optical channel, they are 
  relevant to RF communication due to the recent advances in 
  quantum-noise-limited microwave-frequency amplifiers and 
  detectors~\cite{abdo14josephson}.
Finally, our work provides a significant impetus towards the development of
  covert optical \emph{networks}, eventually scaling privacy to large 
  interconnected systems.

\section*{Methods}

\subsection*{Covert Optical Communication Theorems}

Here we state our theorems, with proofs deferred to the Supplementary 
  Information.
Each theorem can be classified as either an ``achievability'' or a ``converse''.
Achievability theorems (\ref{th:thermal}, \ref{th:dark}, and 
  \ref{th:lpd_ppm}) establish the lower limit on the amount of information
  that can be covertly transmitted from Alice to Bob, while the converse
  theorems (\ref{th:pureloss} and \ref{th:converse})
  demonstrate the upper limit.
In essence, the achievability results are obtained by 
\begin{enumerate}
\item fixing Alice's and Bob's communication system, revealing its construction 
  in entirety (except the shared secret) to Willie;
\item showing that, even with such information, any detector Willie can choose 
  within some natural constraints is ineffective at discriminating Alice's 
  transmission state; and 
\item demonstrating that the transmission can be reliably
  decoded by Bob using the shared secret.
\end{enumerate}
On the other hand, converses are established by 
\begin{enumerate}
\item fixing Willie's detection scheme (and revealing it to Alice and Bob); and
\item demonstrating that no amount of 
  resources allows Alice to both remain undetected by Willie and exceed the
  upper limit on the amount of information that is reliably transmitted to Bob.
\end{enumerate}
We start by claiming the inability to instantiate covert communication in the
  absence of excess noise.

\begin{theorem}\emph{(Insufficiency of pure-loss for covert 
  communication)}\label{th:pureloss}
Suppose Willie has a pure-loss channel from Alice and is limited only 
  by the laws of physics in his receiver measurement choice.
Then Alice cannot communicate to Bob reliably and covertly even if Alice and Bob
  have access to a pre-shared secret of unbounded size, an unattenuated 
  observation of the transmission, and a quantum-optimal receiver.

\end{theorem}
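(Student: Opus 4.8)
The plan is to use Willie's single photon detector (SPD) not as his actual strategy but as a witnessing measurement: since Willie may choose any physically permissible receiver, it suffices to show that the SPD alone already separates transmission from silence, for then so does his optimal receiver and Alice is not covert. Under silence the pure-loss channel maps Alice's vacuum input and the vacuum environment to vacuum at Willie, so the ideal SPD never clicks and $\mathbb{P}_{\mathrm{FA}}=0$; its error is therefore $\mathbb{P}_{\mathrm{MD}}/2$, and were this below $\tfrac12-\epsilon$ the optimal receiver would be too, so covertness forces $\mathbb{P}_{\mathrm{MD}}\geq 1-2\epsilon$. First I would record the operator identity that makes this tractable for arbitrary, possibly entangled, multimode codewords: for the pure-loss channel $\mathcal{E}_{\eta_w}^{0}$ of transmissivity $\eta_w=1-\eta_b$ one has $\langle 0|^{\otimes n}(\mathcal{E}_{\eta_w}^{0})^{\otimes n}(\hat{\rho})|0\rangle^{\otimes n}=\trace[\hat{\rho}\,\eta_b^{\hat{N}}]$, where $\hat{N}=\sum_{j=1}^{n}\hat{N}_j$ is the total photon-number operator. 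Thus the SPD no-click probability depends on Alice's state only through the distribution of her total photon number $N$, namely $\mathbb{E}[\eta_b^{N}]$.

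The engine of the argument is the elementary bound $1-\eta_b^{m}\geq\eta_w$ for every integer $m\geq 1$ (since $\eta_b^{m}\leq\eta_b=1-\eta_w$). Taking expectations gives Willie's click probability $1-\mathbb{E}[\eta_b^{N}]\geq \eta_w\,\Pr[N\geq 1]$, so the covertness constraint caps Alice's emission probability: averaged over the message and over the pre-shared secret (which Willie does not hold and must therefore be traced out), $\mathbb{E}\!\left[\Pr[N\geq 1]\right]\leq 2\epsilon/\eta_w$. This is the step that defeats the natural ``heavy-tail'' evasion of hiding a few high-photon pulses that a click/no-click detector might seem to tolerate, because any mode carrying even a single photon triggers Willie with probability at least $\eta_w$. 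On Bob's side, even granting him the un-attenuated signal $\hat{\rho}$ and an optimal receiver, his overlap with vacuum is $\langle 0|^{\otimes n}\hat{\rho}|0\rangle^{\otimes n}=\Pr[N=0]\geq 1-\Pr[N\geq 1]$, so the very events that would betray Alice to Willie are the only ones that let Bob's codewords differ from vacuum. Hence $\mathcal{O}(\epsilon)$ covertness forces $\mathcal{O}(\epsilon)$ emission, which starves Bob.

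To close, I would convert these scalar bounds into a lower bound on Bob's decoding error. By Fuchs--van de Graaf, a vacuum overlap of $1-\gamma$ implies trace distance at most $2\sqrt{\gamma}$ from vacuum; combined with Markov's inequality applied to $\mathbb{E}[\Pr[N\geq 1]]\leq 2\epsilon/\eta_w$, a fraction at least $1-\sqrt{2\epsilon/\eta_w}$ of (secret, message) pairs produce Bob states lying within trace distance $\mathcal{O}\!\big((\epsilon/\eta_w)^{1/4}\big)$ of vacuum, hence pairwise within $\mathcal{O}\!\big((\epsilon/\eta_w)^{1/4}\big)$ of one another. For any fixed value of the secret these near-vacuum states account for all but a vanishing fraction of the $2^{M}$ messages, so by a packing bound (equivalently Fano/Holevo) any decoder must confuse them, and Bob's average error probability is bounded below by a quantity depending only on $\epsilon$ and $\eta_w$, crucially independent of $n$ and of the size of the secret. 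Therefore covertness with any fixed $\epsilon$ precludes $\mathbb{P}_e^{(b)}\to 0$, establishing that reliability and covertness cannot coexist over a pure-loss channel.

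I expect the main obstacle to be precisely the passage from ``the SPD seldom clicks'' to ``Bob's codewords are near vacuum'': a priori Alice might hope to exploit states that fool a photon-presence detector while still delivering photons to Bob. The operator identity reducing everything to the total-photon distribution, together with the bound $1-\eta_b^{m}\geq\eta_w$, is what rules this out uniformly over all codewords, including entangled ones; the remaining care lies in making the final packing bound quantitative and in correctly handling the averaging over the unbounded secret so that Bob's knowledge of it confers no advantage.
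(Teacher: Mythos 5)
Your proposal is correct and takes essentially the same approach as the paper's proof: Willie's ideal SPD never false-alarms, his no-click probability reduces to $\mathbb{E}[(1-\eta_w)^{N}]$ (your operator identity is Lemma \ref{lemma:bs_on_phi} specialized to $\mathbf{s}=\mathbf{0}$), the elementary bound $1-(1-\eta_w)^{m}\geq\eta_w$ for $m\geq1$ caps the average emission probability at $2\epsilon/\eta_w$, and a Markov argument plus pairwise indistinguishability of near-vacuum codewords bounds Bob's error away from zero independently of $n$ and of the secret's size. The only differences are cosmetic: the paper instantiates your final ``packing bound'' concretely by pairing the near-vacuum codewords and applying the Helstrom fidelity bound together with the triangle inequality for trace distance, and its Markov threshold $4\epsilon/\eta_w$ yields $\mathbb{P}_e^{(b)}\geq\frac{1}{4}-\sqrt{\epsilon/\eta_w}$ rather than your $(\epsilon/\eta_w)^{1/4}$-type constants.
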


Next we claim the achievability of the square root law when Willie's
  channel is subject to excess noise.
We first consider a lossy optical channel with additive thermal noise,
  and claim achievability even when Willie has arbitrary resources such as any
  quantum-limited measurement on the \emph{isometric extension} of the 
  Alice-to-Bob quantum channel (i.e., access to all signaling photons not 
  captured by Bob).

\begin{theorem}\emph{(Square root law for the thermal noise
channel)}\label{th:thermal}
Suppose Willie has access to an arbitrarily complex receiver measurement 
  as permitted by the laws of quantum physics
  and can capture all the photons transmitted by Alice that do not reach Bob.
Let Willie's channel from 
  Alice be subject to noise from a thermal environment that injects 
  $\bar{n}_T>0$ photons per optical mode on average, and let Alice and Bob share
  a secret of sufficient length before communicating.
Then Alice can lower-bound Willie's detection error probability 
  $\mathbb{P}_e^{(w)}\geq\frac{1}{2}-\epsilon$ for any
  $\epsilon>0$ while reliably transmitting $\mathcal{O}(\sqrt{n})$ bits
  to Bob in $n$ optical modes even if Bob only has access to a (sub-optimal) 
  coherent detection receiver, such as an optical homodyne detector.

\end{theorem}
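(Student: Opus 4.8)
\begin{proofidea}
The plan is to combine a random Gaussian coherent-state codebook with a reduction of Willie's task to a single binary quantum hypothesis test, and then to balance the covertness and reliability requirements through the one free parameter $\bar n$. Concretely, I would have Alice modulate each of the $n$ modes with an independent coherent state $\ket{\alpha}$ whose amplitude is drawn from a zero-mean circularly-symmetric complex Gaussian of variance $\bar n=c/\sqrt n$, the constant $c$ being fixed at the very end in terms of $\epsilon$; Bob homodynes each mode. The pre-shared secret selects the codebook and the codeword within it, so that to Willie---who does not hold the secret---his state when Alice transmits is the \emph{average} of his conditional states over the modulation.

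For covertness, the first step is to identify Willie's two candidate states. When Alice is silent his per-mode state is a thermal state $\hat\rho_0$ of some mean $\lambda_0>0$ inherited through his (complementary) channel from the $\bar n_T>0$ environment. When Alice transmits, each amplitude reaches him attenuated to $\sqrt{\eta_w}\,\alpha$; averaging the resulting displaced thermal state over the Gaussian modulation and invoking the very coherent-state representation quoted for $\hat\rho^E_{\bar n_T}$, Willie's averaged per-mode state is again thermal, $\hat\rho_1$, now of mean $\lambda_0+\eta_w\bar n$. The key quantity is then the relative entropy $D(\hat\rho_1\|\hat\rho_0)$ between two thermal states, which is a classical relative entropy between geometric laws with a closed form; for small displacement it expands as $\mathcal{O}(\bar n^2)$ with constant proportional to $1/[\lambda_0(1+\lambda_0)]$. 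Since the modes are i.i.d., additivity gives $D(\hat\rho_1^{\otimes n}\|\hat\rho_0^{\otimes n})=n\,D(\hat\rho_1\|\hat\rho_0)=\mathcal{O}(n\bar n^2)=\mathcal{O}(c^2)$. Chaining the quantum Pinsker inequality $\|\hat\rho_1^{\otimes n}-\hat\rho_0^{\otimes n}\|_1\le\sqrt{2D(\hat\rho_1^{\otimes n}\|\hat\rho_0^{\otimes n})}$ with the Helstrom bound $\mathbb{P}_e^{(w)}\ge\frac12-\frac14\|\hat\rho_1^{\otimes n}-\hat\rho_0^{\otimes n}\|_1$ yields $\mathbb{P}_e^{(w)}\ge\frac12-\mathcal{O}(c)$, so a small enough $c$ secures $\mathbb{P}_e^{(w)}\ge\frac12-\epsilon$ for all large $n$. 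I would emphasize that $\lambda_0>0$ (i.e.\ $\bar n_T>0$) is exactly what keeps $\hat\rho_0$ full-rank and $D$ finite; at $\bar n_T=0$ this bound diverges, recovering the obstruction of Theorem~\ref{th:pureloss}.

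For reliability, Bob's homodyne measurement turns the thermal-loss Alice-to-Bob channel into a real AWGN channel with signal variance $\propto\eta_b\bar n$ over a bounded noise floor, giving per-mode Shannon capacity $\Theta(\bar n)$ nats. Transmitting at a rate a constant factor below this capacity, the AWGN channel-coding theorem (random Gaussian codebook with threshold decoding) forces Bob's error $\mathbb{P}_e^{(b)}<\delta$ for $n$ large. The number of reliably decoded bits is then $\Theta(n\bar n)=\Theta(c\sqrt n)=\mathcal{O}(\sqrt n)$, as claimed, and a short accounting shows a secret of length $\mathcal{O}(\sqrt n\log n)$ suffices to render Willie's conditional state equal to the average, matching the hypothesis of ``sufficient length.''

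The main obstacle is the covertness step, on two fronts. First, one must justify rigorously that Willie's \emph{conditional} state, given the true secret codeword he cannot read, may be replaced by the thermal average $\hat\rho_1$, and that comparing the product states $\hat\rho_1^{\otimes n}$ and $\hat\rho_0^{\otimes n}$ already captures his most general \emph{joint} measurement on the full isometric extension---this is precisely what the Helstrom/Pinsker chain together with additivity of the relative entropy buys, but it must be arranged so that no unaccounted cross-mode coherence can help him. Second, the argument hinges on $\hat\rho_1$ being \emph{exactly} thermal rather than merely thermal-like; establishing this through the Gaussian-mixture identity for coherent states is what makes $D$ computable in closed form and lets me pin $c$ precisely enough to meet $\frac12-\epsilon$ while simultaneously certifying the $\mathcal{O}(\sqrt n)$ throughput.
\end{proofidea}
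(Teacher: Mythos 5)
Your proposal is correct and follows essentially the same route as the paper's proof: a secret random codebook of i.i.d.\ zero-mean Gaussian coherent states with $\bar n=\mathcal{O}(1/\sqrt n)$, the observation that Willie's averaged per-mode states are exactly thermal (with means $\eta_b\bar n_T$ and $\eta_w\bar n+\eta_b\bar n_T$), the chain of Helstrom bound, quantum Pinsker inequality, additivity of relative entropy, and the closed-form/Taylor expansion $D=\mathcal{O}(\bar n^2/[\bar n_T(1+\bar n_T)])$ for covertness, and the homodyne-induced AWGN channel with capacity $\Theta(\bar n)$ per mode for reliability. The only immaterial deviations are the direction of the relative entropy ($D(\hat\rho_1\|\hat\rho_0)$ versus the paper's $D(\hat\rho_0\|\hat\rho_1)$) and that you leave the constants (e.g.\ $\lambda_0=\eta_b\bar n_T$) implicit.
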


In the remaining theorems Willie's detector is a noisy photon number resolving
  (PNR) receiver. 
An ideal PNR receiver is an asymptotically optimal detector for Willie in the 
  pure-loss regime (as discussed in the remark following the proof of Theorem
  \ref{th:pureloss} in the Supplementary Information).
However, any practical implementation of a PNR receiver has a non-zero 
  dark current.
Theorems \ref{th:dark} and \ref{th:lpd_ppm} show that noise from the 
  resulting dark counts enables covert communication even over a pure-loss
  channel.
We model the dark counts per mode in Willie's PNR detector as a 
  Poisson process with average number of dark counts per mode $\lambda_w$.

\begin{theorem}\emph{(Dark counts yield square root law)}
\label{th:dark}
Suppose that Willie has a pure-loss channel from Alice, captures all
  photons transmitted by Alice that do not reach Bob, but is limited to
  a receiver with a non-zero dark current.
Let Alice and Bob share a secret of sufficient length before communicating.
Then Alice can lower-bound Willie's detection error probability 
  $\mathbb{P}_e^{(w)}\geq\frac{1}{2}-\epsilon$ for any
  $\epsilon>0$ while reliably transmitting $\mathcal{O}(\sqrt{n})$ bits
  to Bob in $n$ optical modes.

\end{theorem}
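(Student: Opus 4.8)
The plan is to split the claim into its covertness and reliability halves, reduce Willie's quantum detection to a classical Poisson hypothesis test for the covertness half, control the resulting total variation distance through a $\chi^2$ estimate, and invoke Shannon's noisy-channel coding theorem for Bob; the shared secret is precisely what lets the two halves coexist. I would fix Alice's modulation as stated: each of the $n$ modes independently carries $\ket{\alpha}$ with probability $q=c/\sqrt{n}$ and $\ket{0}$ otherwise, with $|\alpha|$ and the constant $c$ fixed. Since Willie's channel is pure loss, his asymptotically optimal receiver is the PNR detector (the remark following Theorem~\ref{th:pureloss}), and his dark current adds an independent $\mathrm{Poisson}(\lambda_w)$ count per mode; the attenuated on symbol reaching his port is $\ket{\sqrt{\eta_w}\alpha}$ with $\eta_w=1-\eta_b$, so a PNR click record on a given mode is $\mathrm{Poisson}(\eta_w|\alpha|^2+\lambda_w)$ for an on symbol and $\mathrm{Poisson}(\lambda_w)$ for vacuum. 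His observation therefore becomes a purely classical integer record, reducing detection to the test of $P_0^{\otimes n}$ against $P_1^{\otimes n}$, where $P_0=\mathrm{Poisson}(\lambda_w)$ and $P_1=(1-q)P_0+q\,\mathrm{Poisson}(\eta_w|\alpha|^2+\lambda_w)$, with minimum error $\mathbb{P}_e^{(w)}=\tfrac{1}{2}\bigl(1-V(P_0^{\otimes n},P_1^{\otimes n})\bigr)$ for $V$ the total variation distance.

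To bound $V$ I would avoid expanding the mixture logarithm and instead use $D(P_1\|P_0)\le\chi^2(P_1\|P_0)$. Because $P_1-P_0=q\bigl(\mathrm{Poisson}(\eta_w|\alpha|^2+\lambda_w)-P_0\bigr)$, the $\chi^2$ factors as $q^2$ times the $\chi^2$ between the two pure Poisson laws, and the exact identity $\sum_k \mathrm{Poisson}(\mu_1)(k)^2/\mathrm{Poisson}(\mu_0)(k)=e^{(\mu_1-\mu_0)^2/\mu_0}$ with $\mu_0=\lambda_w$ and $\mu_1-\mu_0=\eta_w|\alpha|^2$ gives $\chi^2(P_1\|P_0)=q^2\bigl(e^{(\eta_w|\alpha|^2)^2/\lambda_w}-1\bigr)\equiv C' q^2$, a clean non-asymptotic bound with $C'$ constant. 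Additivity of relative entropy over the $n$ product modes followed by Pinsker's inequality then yields $V(P_0^{\otimes n},P_1^{\otimes n})\le\sqrt{\tfrac{n}{2}D(P_1\|P_0)}\le q\sqrt{nC'/2}=c\sqrt{C'/2}$, and choosing $c$ small enough that $c\sqrt{C'/2}\le 2\epsilon$ gives $\mathbb{P}_e^{(w)}\ge\tfrac{1}{2}-\epsilon$ for every $n$. The structural point is that the on-probability $q$ enters the distinguishability only through $q^2$, so $q=\Theta(1/\sqrt{n})$ is exactly the scale keeping $nq^2$, and hence $V$, bounded.

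For reliability I would regard the induced Alice-to-Bob link as a classical binary-input channel and apply the coding theorem. For fixed $|\alpha|$ the per-mode mutual information at input bias $q$ obeys $I(q)=\Omega(q)$ as $q\to0$, so over $n$ modes Bob reliably decodes $nI(q)=\Omega(nq)=\Omega(\sqrt{n})$ bits with $\mathbb{P}_e^{(b)}\to0$, matching the claimed square-root scaling. The subtlety, and the step I expect to be the main obstacle, is making a single scheme simultaneously covert and reliable: a code good enough for Bob is structured, and if that structure were visible to Willie the clean product form $P_1^{\otimes n}$ underlying the covertness bound would break. This is where the shared secret enters---Alice and Bob use it to scramble the on-symbol positions in a one-time-pad-like fashion, so that, averaged over the uniformly unknown secret, Willie genuinely faces the i.i.d.\ mixture $P_1^{\otimes n}$, while Bob, holding the secret, recovers the public code and decodes it. Certifying that this secret-driven scrambling renders Willie's marginal exactly product-form while preserving the $\Omega(\sqrt{n})$ decoding rate under the vanishing input probability $q$ is the crux; by comparison the Poisson $\chi^2$ estimate is routine.
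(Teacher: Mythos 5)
Your proposal is correct and follows essentially the same route as the paper's proof: reduce Willie's noisy-PNR observations to an i.i.d.\ classical test of $\mathrm{Poisson}(\lambda_w)$ against the mixture $(1-q)\,\mathrm{Poisson}(\lambda_w)+q\,\mathrm{Poisson}(\lambda_w+\eta_w|\alpha|^2)$, bound the per-mode divergence by $q^2\bigl(e^{(\eta_w|\alpha|^2)^2/\lambda_w}-1\bigr)$ (the paper obtains exactly this constant via Taylor's theorem with remainder applied to the KL expansion in $q$, whereas you use the exact Poisson $\chi^2$ identity together with $D\le\chi^2$---a cleaner, non-asymptotic derivation of the same bound), apply Pinsker's inequality with $q=\Theta(1/\sqrt{n})$, and give Bob $\Omega(nq)=\Omega(\sqrt{n})$ reliable bits by random coding over the induced classical binary channel (the paper makes this precise with Gallager's error exponent, $E_0=qC+\mathcal{O}(q^2)$, rather than a mutual-information scaling claim). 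The ``crux'' you flag at the end is resolved in the paper exactly in the spirit you suggest: the shared secret is the entire i.i.d.-generated random codebook, kept hidden from Willie, so that averaged over the secret his observation is genuinely the product mixture, while Bob decodes with the standard random-coding guarantee.
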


The proof of Theorem \ref{th:dark} demonstrates that $\mathcal{O}(\sqrt{n})$ 
  covert bits can be reliably transmitted using be on-off keying (OOK) coherent 
  state modulation where Alice transmits the {\em on} symbol 
  $\ket{\alpha}$ with probability $q=\mathcal{O}(1/\sqrt{n})$ and the {\em off} 
  symbol $\ket{0}$ with probability $1-q$.
However, the skewed on-off duty cycle of OOK modulation makes construction of
  efficient error correction codes (ECCs) challenging. 
We thus consider \emph{pulse position modulation} (PPM) which
  constrains the OOK signaling scheme, enabling the use of many
  efficient ECCs by sacrificing a constant fraction of throughput.
Each PPM symbol uses a PPM {\em frame} to transmit a sequence of $Q$ coherent 
  state pulses, $\ket{0}\ldots\ket{\alpha}\ldots\ket{0}$, encoding message 
  $i\in\{1,2,\ldots,Q\}$ by transmitting $\ket{\alpha}$ in the 
  $i^{\text{th}}$ mode of the PPM frame. 
Next we claim that the square root scaling is achievable under this
  structural constraint.

\begin{theorem}\emph{(Dark counts yield square root law under structured modulation)}
\label{th:lpd_ppm}
Suppose that Willie has a pure-loss channel from Alice,
  can capture all photons transmitted by Alice that do not reach Bob,
  but is limited to a PNR receiver with a non-zero dark current.
Let Alice and Bob share a secret of sufficient length before communicating.
Then Alice can lower-bound Willie's detection error probability
  $\mathbb{P}_e^{(w)}\geq\frac{1}{2}-\epsilon$ for any
  $\epsilon>0$ while reliably transmitting 
  $\mathcal{O}(\sqrt{\frac{n}{Q}}\log Q)$ bits
  to Bob using $n$ optical modes and a $Q$-ary PPM constellation.

\end{theorem}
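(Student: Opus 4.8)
The plan is to follow the three-step achievability template stated in the Methods, specializing the dark-count machinery behind Theorem~\ref{th:dark} to the structured PPM ensemble. Because the theorem restricts Willie to a PNR receiver with Poissonian dark current of mean $\lambda_w$, his entire record is the classical count vector $\mathbf{y}=(y_1,\dots,y_n)$, so his decision is a binary hypothesis test for which the minimum-error rule is the likelihood-ratio test~\cite{lehmann05stathyp}. Over the pure-loss channel the pulse $\ket{\alpha}$ delivers $\ket{\sqrt{\eta_w}\alpha}$ to Willie, so the count in a mode carrying the pulse is $\mathrm{Poisson}(\mu+\lambda_w)$ with $\mu=\eta_w|\alpha|^2$, while every vacuum mode is $\mathrm{Poisson}(\lambda_w)$. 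The scheme I would fix and reveal to Willie is: partition the $n$ modes into $N_f=n/Q$ frames; use each frame independently with probability $\zeta=c\sqrt{Q/n}$; in a used frame place $\ket{\alpha}$ (with $|\alpha|^2$ constant) in the position given by the ECC codeword symbol added, modulo $Q$, to a fresh uniform secret offset. Only the usage pattern and the offsets are secret; their role is precisely to render the pulse position uniform on $\{1,\dots,Q\}$ and independent across frames, so that Willie's transmitting law $\mathbb{P}_1$ factorizes over frames and the publicly known ECC structure gives him no leverage.

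The crux is the covertness bound. Since the silent law $\mathbb{P}_0$ (all modes $\mathrm{Poisson}(\lambda_w)$) and $\mathbb{P}_1$ both factorize over frames, I would write $D(\mathbb{P}_1\,\|\,\mathbb{P}_0)=N_f\,D(P_1^{(f)}\,\|\,P_0^{(f)})$, where per frame $P_1^{(f)}=(1-\zeta)P_0^{(f)}+\zeta S$ and $S=\tfrac1Q\sum_{j=1}^{Q}P_{0,j}^{(f)}$ is the uniform mixture over which single mode is elevated. For $\zeta\to0$ the expansion $D(P_1^{(f)}\,\|\,P_0^{(f)})=\tfrac{\zeta^2}{2}\chi^2(S\,\|\,P_0^{(f)})+o(\zeta^2)$ reduces the problem to a chi-squared, which I would evaluate using mode independence under $P_0^{(f)}$, the Poisson identity $\chi^2(\mathrm{Poisson}(\mu+\lambda_w)\,\|\,\mathrm{Poisson}(\lambda_w))=e^{\mu^2/\lambda_w}-1$, and the fact that each single-mode likelihood ratio has unit mean (so the cross terms collapse). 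This yields $\chi^2(S\,\|\,P_0^{(f)})=\tfrac1Q\bigl(e^{\mu^2/\lambda_w}-1\bigr)$ and hence $D(\mathbb{P}_1\,\|\,\mathbb{P}_0)\to\tfrac{c^2}{2Q}\bigl(e^{\mu^2/\lambda_w}-1\bigr)$, a constant I control through $c$. Choosing $c$ small enough that this is at most $8\epsilon^2$ and applying Pinsker's inequality bounds the total variation by $2\epsilon$, giving $\mathbb{P}_e^{(w)}\ge\tfrac12-\epsilon$ against every measurement the theorem permits.

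For reliability, Bob uses the shared secret to locate the used frames and undo the offsets. The induced $Q$-ary channel from pulse position to Bob's counts ($\mathrm{Poisson}(\eta_b|\alpha|^2+\lambda_b)$ in the pulse mode versus $\mathrm{Poisson}(\lambda_b)$ elsewhere) has mutual information $\Theta(\log Q)$ per symbol, limited chiefly by erasures; since the number of transmitted symbols $K=\zeta N_f=\Theta(\sqrt{n/Q})\to\infty$, the channel-coding theorem—instantiated by the Reed--Solomon code of the experiment—lets Bob decode at any rate below this value with $\mathbb{P}_e^{(b)}<\delta$, delivering $\Theta\!\left(\sqrt{n/Q}\,\log Q\right)$ reliable covert bits. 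The secret budget is $N_f H_b(\zeta)=\mathcal{O}(\sqrt{n/Q}\,\log n)$ bits to convey the usage pattern, plus $K\log Q=\mathcal{O}(\sqrt{n/Q}\,\log Q)$ bits for the offsets, the former dominating, as claimed in the main text.

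The main obstacle I anticipate lies entirely in the covertness step. First, one must justify rigorously that the uniform offsets decouple the ECC, so that $\mathbb{P}_1$ is genuinely a product over frames with uniform positions rather than the codebook-supported distribution; this is the formal content of the ``one-time-pad'' scrambling. Second, the single-mode likelihood ratio $e^{-\mu}(1+\mu/\lambda_w)^{y}$ is unbounded in the count $y$, so the $o(\zeta^2)$ remainder and the chi-squared tail cannot be dismissed by a naive Taylor estimate; they must be controlled uniformly in $n$ using the finiteness of $e^{\mu^2/\lambda_w}$ together with Poisson moment and tail bounds. Getting the $1/Q$ suppression in $\chi^2(S\,\|\,P_0^{(f)})$ exactly right—rather than an erroneous $\Theta(1)$—is what makes the $\sqrt{n/Q}\,\log Q$ scaling survive the PPM constraint, so this factor is the decisive quantity to pin down.
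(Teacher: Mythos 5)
Your proposal is correct and follows essentially the same route as the paper's proof: the identical secret frame-selection and modulo-$Q$ offset construction, the reduction of Willie's PNR record to an i.i.d.-over-frames Poisson binary hypothesis test, the same per-frame bound $\tfrac{\zeta^2}{2Q}\bigl(e^{(\eta_w|\alpha|^2)^2/\lambda_w}-1\bigr)$ (the paper gets it via Taylor's theorem with remainder, whose second-order coefficient is exactly your chi-squared quantity with the decisive $1/Q$ factor), Pinsker's inequality for covertness, and a standard random-coding argument for Bob's reliability at rate $\Theta(\log Q)$ per used frame. The only cosmetic differences are the direction of the relative entropy (you expand $D(\mathbb{P}_1\|\mathbb{P}_0)$, the paper $D(\mathbb{P}_0\|\mathbb{P}_1)$, which agree to second order) and your explicit secret-key accounting, neither of which alters the argument.
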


Finally, we claim the unsurmountability of the square root law.
We assume non-zero thermal noise ($\bar{n}_T>0$) in the channel and non-zero
  dark count rate ($\lambda_w>0$) in Willie's detector.
Setting $\lambda_w=0$ yields the converse for Theorem
  \ref{th:thermal}, and setting $\bar{n}_T=0$ yields the converse 
  for Theorems \ref{th:dark} and \ref{th:lpd_ppm}
Setting $\lambda_w=0$ and $\bar{n}_T=0$ yields the conditions for Theorem 
  \ref{th:pureloss}.
To state the theorem, we use the following asymptotic notation~\cite{clrs2e}: 
  we say $f(n)=\omega(g(n))$ when $g(n)$ is a lower bound
  that is not asymptotically tight.
  
\begin{theorem}\emph{(Converse of the square root law)}\label{th:converse}
Suppose Alice only uses $n$-mode codewords with total photon number variance
  $\sigma^2_x=\mathcal{O}(n)$.
Then, if she attempts to transmit $\omega(\sqrt{n})$ bits in $n$
  modes, as $n\rightarrow\infty$, she is either detected by Willie with 
  arbitrarily low detection error probability, or Bob 
  cannot decode with arbitrarily low decoding error probability.

\end{theorem}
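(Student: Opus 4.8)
The plan is to argue by contradiction: assume Alice is simultaneously covert and reliable while attempting $M=\omega(\sqrt{n})$ bits, and derive a contradiction. The argument splits into two nearly independent steps—one lower-bounding the photon budget required for reliability, the other showing that this budget makes a simple photon-counting detector succeed—which are then combined. Since this is a converse, it suffices to hand Willie one concrete, suboptimal measurement (per-mode photon counting) and show that even it defeats Alice; the quantum-optimal Willie can only do better, so covertness in the strong all-measurements sense is then ruled out as well.

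First I would establish the reliability lower bound on the transmitted energy. Let $\bar N$ denote the mean total photon number of Alice's codewords averaged over her codebook, so $\bar n=\bar N/n$ is the average per-mode input. By Fano's inequality, reliable decoding ($\mathbb{P}_e^{(b)}\to 0$) forces $M\le I(X;Y_B)+o(M)$, and the Holevo bound caps $I(X;Y_B)$ by the classical capacity of the thermal-loss channel ${\cal E}_{\eta_b}^{\bar{n}_T}$ under the average-energy constraint $\bar n$. Writing $g(x)=(x+1)\log_2(x+1)-x\log_2 x$ and $N_B=(1-\eta_b)\bar{n}_T$ for the noise-photon number at Bob's output, this capacity equals $n[g(\eta_b\bar n+N_B)-g(N_B)]$; concavity of $g$ (its tangent lies above the graph) gives the linear bound $g(\eta_b\bar n+N_B)-g(N_B)\le \eta_b\bar n\,\log_2\frac{N_B+1}{N_B}$. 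Hence $M\le \eta_b\,\bar N\log_2\frac{N_B+1}{N_B}$, so reliability forces $\bar N=\Omega(M)=\omega(\sqrt{n})$.

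Next I would show that such an energy is detectable. Fix Willie's measurement to be photon counting on each mode and let $S=\sum_i k_i$ be his total count. Under $H_0$ the per-mode mean is a constant $\mu_0=\mathcal{O}(1)$ set by the thermal photons reaching Willie and his dark-count rate $\lambda_w$, so $E[S\mid H_0]=n\mu_0$; under $H_1$ the signal adds $\eta_w\bar N$ on average, giving a mean shift $\Delta=\eta_w\bar N=\omega(\sqrt{n})$. The crux is to bound $V=\max\{\Var(S\mid H_0),\Var(S\mid H_1)\}$. The thermal-plus-dark contribution is $\mathcal{O}(n)$, and since the beamsplitter acts as binomial thinning, the signal contributes $\eta_w^2\sigma_x^2+\eta_w(1-\eta_w)\bar N$ to $\Var(S)$; here the first term is exactly where the hypothesis $\sigma_x^2=\mathcal{O}(n)$ enters, keeping it $\mathcal{O}(n)$, while the second is $\mathcal{O}(\bar N)$. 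Hence $V=\mathcal{O}(n)+\mathcal{O}(\bar N)$. Placing a threshold midway between the two means, Chebyshev's inequality bounds both $\mathbb{P}_{\mathrm{FA}}$ and $\mathbb{P}_{\mathrm{MD}}$ by $4V/\Delta^2$, so
\[
\mathbb{P}_e^{(w)}\le \frac{4V}{\Delta^2}=\frac{\mathcal{O}(n)+\mathcal{O}(\bar N)}{\eta_w^2\,\bar N^2}.
\]
Because $\bar N=\omega(\sqrt{n})$ gives $\bar N^2=\omega(n)$, this vanishes when $\bar N=\mathcal{O}(n)$; and when $\bar N=\omega(n)$ the bound reduces to $\mathcal{O}(1/\bar N)\to 0$. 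Either way $\mathbb{P}_e^{(w)}\to 0$, contradicting covertness $\mathbb{P}_e^{(w)}\ge\frac{1}{2}-\epsilon$.

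Combining the two steps closes the contradiction: reliability forces $\bar N=\omega(\sqrt{n})$, which forces $\mathbb{P}_e^{(w)}\to 0$, so covertness and reliability cannot coexist once $M=\omega(\sqrt{n})$. Setting $\lambda_w=0$ or $\bar{n}_T=0$ merely changes the constant $\mu_0$ and the coefficient in the reliability bound, recovering the converses specialized to Theorems \ref{th:thermal}, \ref{th:dark} and \ref{th:lpd_ppm}. I expect the main obstacle to be the variance bookkeeping in the detection step: one must verify rigorously—beyond the heuristic thinning formula, and including the partitioned thermal background and all cross terms—that $\sigma_x^2=\mathcal{O}(n)$ genuinely caps $\Var(S)$ at $\mathcal{O}(n+\bar N)$. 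This hypothesis is indispensable, since without it Alice could hide the mean shift beneath wildly fluctuating codeword energies and defeat the threshold test; the reliability step, by contrast, is comparatively routine given the Holevo bound and the concavity of $g$.
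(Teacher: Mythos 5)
Your overall architecture (reliability forces a photon budget; that budget makes a counting detector work) is the contrapositive mirror of the paper's proof, and both of your ingredients---Chebyshev on Willie's total photon count, and Fano plus Holevo for Bob---are the right ones. But there is a genuine gap in your detection step: you run the hypothesis test against the \emph{codebook-average} total photon number $\bar N$, whereas Willie's missed-detection probability is an average over codewords, $\mathbb{P}_{\mathrm{MD}}=2^{-M}\sum_u \mathbb{P}(\mathrm{miss}\mid u)$, and the mean shift he sees is the per-codeword energy $\eta_w \bar N_u$, not $\eta_w\bar N$. The hypothesis $\sigma^2_x=\mathcal{O}(n)$ bounds the photon-number variance \emph{within} each codeword; it says nothing about the spread of the means $\bar N_u$ \emph{across} the codebook, which is what actually enters $\Var(S\mid H_1)$ through the law of total variance. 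Concretely, if a $1/\log n$ fraction of codewords carries energy $\Theta(n)$ and the rest carry $\mathcal{O}(\sqrt n)$, then $\bar N=\omega(\sqrt n)$, yet your claimed bound $\Var(S\mid H_1)=\mathcal{O}(n)+\mathcal{O}(\bar N)$ fails---the across-codeword term $\Var_u\bigl(\eta_w\bar N_u\bigr)$ is of order $n^2/\log n\gg\Delta^2$---and indeed Willie misses essentially all of the low-energy codewords, so $\mathbb{P}_e^{(w)}$ stays near $1/2$ no matter what test he uses. This is precisely why the paper argues per codeword: its Chebyshev step (eq.~\eqref{eq:indep_var}, which uses that codeword's own $\sigma^2_u$) shows Willie detects \emph{every individual} codeword with $\bar n_u=\omega(\sqrt n)$; covertness therefore forces a constant fraction $\kappa$ of codewords into a low-energy set $\mathcal{U}$; and Fano/Holevo is then applied to the sub-codebook $\mathcal{U}$, with the $\kappa$ prefactor of eq.~\eqref{eq:conv_pe_lb}, to defeat reliability. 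Your two lemmas cannot be composed through the single scalar $\bar N$; the codeword-level partition is not optional bookkeeping but the load-bearing step.

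A second, smaller error: your closing claim that setting $\bar n_T=0$ ``merely changes the constant'' is wrong. Your reliability coefficient $\log_2\frac{N_B+1}{N_B}$ diverges as $N_B=(1-\eta_b)\bar n_T\to 0$, so in exactly the pure-loss-to-Bob settings (the converses to Theorems~\ref{th:dark} and~\ref{th:lpd_ppm}) your Step 1 yields no lower bound on $\bar N$ at all; with $N_B=0$ the energy--bits tradeoff acquires a logarithm, $M\lesssim \bar N\log_2(n/\bar N)$, and the argument has to be rerun with that weaker bound. Relatedly, your invocation of the thermal-loss capacity formula $g(\eta_b\bar n+N_B)-g(N_B)$ as a converse bound (for possibly entangled $n$-mode codewords) rests on the minimum-output-entropy theorem; the paper deliberately avoids this by granting Bob a noiseless, unattenuated channel and bounding the Holevo information directly via subadditivity of von Neumann entropy and the maximum-entropy property of thermal states, which is self-contained.
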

The restriction on the photon number variance of Alice's input states is not 
  onerous since it subsumes all well-known quantum states of a bosonic mode.
However, proving this theorem for input states with unbounded photon number 
  variance per mode remains an open problem.

Next we provide details of the experimental methodology.

\subsection*{Experimental methodology}
\subsubsection*{Alice's encoder}
Prior to communication, Alice and Bob secretly select a random subset 
  $\mathcal{S}$ of PPM frames to use for transmission: each of the $n/Q$ 
  available PPM frames is selected independently with probability $\zeta$. 
Alice and Bob then secretly generate a vector $\mathbf{k}$ containing 
  $|\mathcal{S}|$ numbers selected independently uniformly at random from 
  $\{0,1,\ldots,Q-1\}$, where $|\mathcal{S}|$ denotes the cardinality of
  $\mathcal{S}$. 
Alice encodes a message into a codeword of size $|\mathcal{S}|$ using a
  Reed-Solomon (RS) code.
She adds $\mathbf{k}$ modulo $Q$ to this message and transmits it on the 
  PPM frames in $\mathcal{S}$. 
We note that this is almost identical to the construction of the coding
  scheme in the proof of Theorem \ref{th:lpd_ppm} (see the Supplementary
  Information), with the exception of the use of an RS code for error 
  correction.

\subsubsection*{Generation of transmitted symbols}
Alice generates the length-$n$ binary sequence describing the transmitted
signal, with a ``1'' at a given location indicating a pulse in that mode,
and a ``0'' indicating the absence of a pulse.  First, Alice encodes
random data, organized into $Q$-ary symbols, with
an RS code and modulo-$Q$ addition of $\mathbf{k}$ to produce a coded
  sequence of $Q$-ary symbols.
The value of the $i^{\text{th}}$ symbol in this sequence indicates which
mode in the $i^{\text{th}}$ PPM symbol in the set $\mathcal{S}$ contains a 
  pulse,
whereas all modes of the PPM frames not in $\mathcal{S}$
remain empty.  
Mapping occupied modes to ``1'' and unoccupied modes to ``0'' results in the
desired length-$n$ binary sequence.

To accurately estimate Willie's detection error probability in the face
  of optical power fluctuations, the length-$n$ binary sequence
from above is alternated with a length-$n$ sequence of all ``0''s, to
produce the final length-$2n$ sequence that is passed to the experimental
  setup.
Willie gets a ``clean'' look at the channel when Alice is silent using these 
  interleaved ``0''s, thus allowing the estimation of both the false alarm 
  and the missed detection probabilities under the same conditions.
Bob simply discards the interleaved ``0''s.

\subsubsection*{Bob's decoder}
Bob examines only the PPM frames in $\mathcal{S}$. 
If two or more pulses are detected in a PPM frame, one of them is selected 
  uniformly at random. 
If no pulses are detected, it is labeled as an \emph{erasure}.
After subtracting $\mathbf{k}$ modulo $Q$ from this vector of PPM symbols 
  (subtraction is not performed on erasures), the resultant vector is passed 
  to the RS decoder.

For each experiment we record the total number of bits in the 
  successfully-decoded codewords; the undecoded codewords are discarded.
For each pair of parameters $(\zeta,n)$ we report the mean of the total number 
  of decoded bits over 100 experiments.
The reported symbol error rate is the total number of lost data symbols 
  during all the experiments at the specified communication regime divided
  by the total number of data symbols transmitted during these experiments.
The calculation of the theoretical channel capacity is presented in the
  Supplementary Information.

\subsubsection*{Willie's detector}
\emph{Estimation of} $\mathbb{P}_e^{(w)}$---The test statistic for the 
  log-likelihood ratio test is defined as:
\begin{align}
\label{eq:pure_LRT}L&=\log\frac{f_1(\mathbf{x}_w)}{f_0(\mathbf{x}_w)}=\sum_{i=1}^{n/Q}\log\left[1+\zeta p_r^{(w)}\left(\frac{y_i^{(w)}}{Qp_D^{(w)}}-1\right)\right],
\end{align}
where $f_0(\mathbf{x}_w)$ and $f_1(\mathbf{x}_w)$ are the likelihood
  functions of the click record $\mathbf{x}_w$ corresponding to Alice being
  quiet and transmitting, $y^{(w)}_i$ is the number of clicks Willie
  observes in the $i^{\text{th}}$ PPM frame, and 
  $p^{(w)}_r=1-e^{-\eta_w\bar{n}}$ is the probability of Willie observing 
  a click stemming from Alice's transmission.
Equation \eqref{eq:pure_LRT} is derived in the Supplementary Information.
Willie calculates $L$ using equation \eqref{eq:pure_LRT} and compares it to a
  threshold $S$, accusing Alice if $L\geq S$.
Willie chooses the value of $S$ that minimizes Willie's detection error
  probability $\mathbb{P}_e^{(w)}$.

For each pair of parameters $(n,\zeta)$ as well as Alice's transmission state,
  we perform $m$ experiments, obtaining a sample vector $\mathbf{y}_w$ from
  each experiment and calculating the log-likelihood ratio $L$
  using \eqref{eq:pure_LRT}.
We denote by $\mathbf{L}^{(0)}=[L^{(0)}_1,\ldots,L^{(0)}_m]$ 
  and $\mathbf{L}^{(1)}=[L^{(1)}_1,\ldots,L^{(1)}_m]$ the vectors of
  experimentally observed log-likelihood ratios when Alice does not transmit and
  transmits, respectively.
To estimate Willie's probability of error $\mathbb{P}_e^{(w)}$, we construct
  empirical distribution functions
  $\hat{F}^{(0)}_{m}(x)=\frac{1}{n}\sum_{i=1}^m\mathbf{1}_{L^{(0)}_i\leq x}(x)$
  and
  $\hat{F}^{(1)}_{m}(x)=\frac{1}{m}\sum_{i=1}^m\mathbf{1}_{L^{(1)}_i\leq x}(x)$,
  where $\mathbf{1}_{\mathcal{A}}(x)=\left\{\begin{array}{rl}1&\text{if}~x\in~\mathcal{A}\\0&\text{if}~x\notin~\mathcal{A}\end{array}\right.$ denotes
  the indicator function.
The estimated probability of error is then 
\begin{align}
\label{eq:pe_estimate}\hat{\mathbb{P}}_e^{(w)}&=\frac{1}{2}\min_{S}(1-\hat{F}^{(0)}_{m}(S)+\hat{F}^{(1)}_{m}(S)).
\end{align}
\\
\emph{Monte-Carlo simulation and Gaussian approximation}---We perform a 
  Monte-Carlo study using $10^5$ simulations per $(n,\zeta)$ pair.
We generate, encode, and detect the messages as in the physical experiment,
  and use equation \eqref{eq:pe_estimate} to estimate Willie's probability of 
  error, but simulate the optical channel induced by our choice of a laser-light
  transmitter and an SPD using its measured characteristics reported
  in Table~\ref{tab:exp_params}.
Similarly, we use the values in Table~\ref{tab:exp_params} for our analytical
  Gaussian approximation of $\mathbb{P}_e^{(w)}$ described in the
  Supplementary Information.
\\

\emph{Confidence intervals}---We compute the confidence intervals for the 
  estimate in equation \eqref{eq:pe_estimate} using
  Dvoretzky-Keifer-Wolfowitz inequality \cite{dkw56,massart90dkw}, 
  which relates the distribution 
  function $F_X(x)$ of random variable $X$ to the empirical distribution 
  function $\hat{F}_m(x)=\frac{1}{m}\sum_{i=1}^m\mathbf{1}_{X_i\leq x}(x)$ 
  associated with a sequence $\{X_i\}_{i=1}^m$ of $m$ i.i.d.~draws of the
  random variable $X$ as follows:
\begin{align}
\mathbb{P}(\sup_{x}|\hat{F}_m(x)-F_X(x)|>\xi)&\leq2e^{-2m\xi^2},
\end{align}
  where $\xi>0$.
For $x_0$, the $(1-\alpha)$ confidence interval for the empirical
  estimate of $F(x_0)$ is given by 
  $[\max\{\hat{F}_m(x_0)-\xi,0\},\min\{\hat{F}_m(x_0)+\xi,1\}]$ where 
  $\xi=\sqrt{\frac{\log(2/\alpha)}{2m}}$.
Thus, $\pm\xi$ is used for reporting the confidence intervals in Figure 
  \ref{fig:willie}.

\bibliography{refs}

%Merlin.mbs v4.21 2009-07-09.
\begin{thebibliography}{10}%
\makeatletter
\providecommand \@ifxundefined [1]{%
 \ifx #1\undefined \expandafter \@firstoftwo
 \else \expandafter \@secondoftwo
\fi
}%
\providecommand \@ifnum [1]{%
 \ifnum #1\expandafter \@firstoftwo
 \else \expandafter \@secondoftwo
\fi
}%
\providecommand \enquote [1]{``#1''}%
\providecommand \bibnamefont  [1]{#1}%
\providecommand \bibfnamefont [1]{#1}%
\providecommand \citenamefont [1]{#1}%
\providecommand\href[0]{\@sanitize\@href}%
\providecommand\@href[1]{\endgroup\@@startlink{#1}\endgroup\@@href}%
\providecommand\@@href[1]{#1\@@endlink}%
\providecommand \@sanitize [0]{\begingroup\catcode`\&12\catcode`\#12\relax}%
\@ifxundefined \pdfoutput {\@firstoftwo}{%
 \@ifnum{\z@=\pdfoutput}{\@firstoftwo}{\@secondoftwo}%
}{%
 \providecommand\@@startlink[1]{\leavevmode\special{html:<a href="#1">}}%
 \providecommand\@@endlink[0]{\special{html:</a>}}%
}{%
 \providecommand\@@startlink[1]{%
  \leavevmode
  \pdfstartlink
   attr{/Border[0 0 1 ]/H/I/C[0 1 1]}%
   user{/Subtype/Link/A<</Type/Action/S/URI/URI(#1)>>}%
  \relax
 }%
 \providecommand\@@endlink[0]{\pdfendlink}%
}%
\providecommand \url  [0]{\begingroup\@sanitize \@url }%
\providecommand \@url [1]{\endgroup\@href {#1}{\urlprefix}}%
\providecommand \urlprefix [0]{URL }%
\providecommand \Eprint[0]{\href }%
\@ifxundefined \urlstyle {%
  \providecommand \doi [1]{doi:\discretionary{}{}{}#1}%
}{%
  \providecommand \doi [0]{doi:\discretionary{}{}{}\begingroup
  \urlstyle{rm}\Url }%
}%
\providecommand \doibase [0]{http://dx.doi.org/}%
\providecommand \Doi[1]{\href{\doibase#1}}%
\providecommand \bibAnnote [3]{%
  \BibitemShut{#1}%
  \begin{quotation}\noindent
    \textsc{Key:}\ #2\\\textsc{Annotation:}\ #3%
  \end{quotation}%
}%
\providecommand \bibAnnoteFile [2]{%
  \IfFileExists{#2}{\bibAnnote {#1} {#2} {\input{#2}}}{}%
}%
\providecommand \typeout [0]{\immediate \write \m@ne }%
\providecommand \selectlanguage [0]{\@gobble}%
\providecommand \bibinfo [0]{\@secondoftwo}%
\providecommand \bibfield [0]{\@secondoftwo}%
\providecommand \translation [1]{[#1]}%
\providecommand \BibitemOpen[0]{}%
\providecommand \bibitemStop [0]{}%
\providecommand \bibitemNoStop [0]{.\EOS\space}%
\providecommand \EOS [0]{\spacefactor3000\relax}%
\providecommand \BibitemShut [1]{\csname bibitem#1\endcsname}%
%</preamble>
\bibitem{menezes96HAC}%
  \BibitemOpen
  \bibfield{author}{%
  \bibinfo {author} {\bibfnamefont{Alfred~J.}\ \bibnamefont{Menezes}}, \bibinfo
  {author} {\bibfnamefont{Scott~A.}\ \bibnamefont{Vanstone}},\ and\ \bibinfo
  {author} {\bibfnamefont{Paul C.~Van}\ \bibnamefont{Oorschot}},\ }%
  \emph{\bibinfo {title} {Handbook of Applied Cryptography}},\ \bibinfo
  {edition} {1st}\ ed.\ (\bibinfo {publisher} {CRC Press, Inc.},\ \bibinfo
  {address} {Boca Raton, FL, USA},\ \bibinfo {year} {1996})%
  \bibAnnoteFile{NoStop}{menezes96HAC}%
\bibitem{talb2006}%
  \BibitemOpen
  \bibfield{author}{%
  \bibinfo {author} {\bibfnamefont{J.}~\bibnamefont{Talbot}}\ and\ \bibinfo
  {author} {\bibfnamefont{D.J.A.}\ \bibnamefont{Welsh}},\ }%
  \emph{\bibinfo {title} {Complexity and Cryptography: An Introduction}}\
  (\bibinfo {publisher} {Cambridge University Press},\ \bibinfo {year} {2006})%
  \bibAnnoteFile{NoStop}{talb2006}%
\bibitem{wyner}%
  \BibitemOpen
  \bibfield{author}{%
  \bibinfo {author} {\bibfnamefont{A.~D.}\ \bibnamefont{Wyner}},\ }%
  \bibfield{title}{%
  \enquote{\bibinfo {title} {The wiretap channel},}\ }%
  \bibfield{journal}{%
  \bibinfo {journal} {Bell System Technical Journal}\ }%
  \textbf{\bibinfo {volume} {54}},\ \bibinfo {pages} {1355} (\bibinfo {year}
  {1975})%
  \bibAnnoteFile{NoStop}{wyner}%
\bibitem{csiszar78secrecy}%
  \BibitemOpen
  \bibfield{author}{%
  \bibinfo {author} {\bibfnamefont{Imre}\ \bibnamefont{Csisz\'ar}}\ and\
  \bibinfo {author} {\bibfnamefont{J\'anos}\ \bibnamefont{K\"orner}},\ }%
  \bibfield{title}{%
  \enquote{\bibinfo {title} {Broadcast channels with confidential messages},}\
  }%
  \bibfield{journal}{%
  \bibinfo {journal} {Information Theory, IEEE Transactions on}\ }%
  \textbf{\bibinfo {volume} {24}},\ \bibinfo {pages} {339--348} (\bibinfo
  {month} {May}\ \bibinfo {year} {1978})%
  \bibAnnoteFile{NoStop}{csiszar78secrecy}%
\bibitem{bb84}%
  \BibitemOpen
  \bibfield{author}{%
  \bibinfo {author} {\bibfnamefont{C.~H.}\ \bibnamefont{Bennett}}\ and\
  \bibinfo {author} {\bibfnamefont{G.}~\bibnamefont{Brassard}},\ }%
  \enquote{\bibinfo {title} {{Quantum Cryptography: Public Key Distribution and
  Coin Tossing}},}\ in\ \emph{\bibinfo {booktitle} {Proceedings of the IEEE
  International Conference on Computers, Systems and Signal Processing}}\
  (\bibinfo {publisher} {IEEE Press},\ \bibinfo {address} {New York},\ \bibinfo
  {year} {1984})\ pp.\ \bibinfo {pages} {175--179}%
  \bibAnnoteFile{NoStop}{bb84}%
\bibitem{shannon49sec}%
  \BibitemOpen
  \bibfield{author}{%
  \bibinfo {author} {\bibfnamefont{Claude~E.}\ \bibnamefont{Shannon}},\ }%
  \bibfield{title}{%
  \enquote{\bibinfo {title} {Communication theory of security},}\ }%
  \bibfield{journal}{%
  \bibinfo {journal} {Bell System Technical Journal}\ }%
  \textbf{\bibinfo {volume} {28}},\ \bibinfo {pages} {656--715} (\bibinfo
  {year} {1949})%
  \bibAnnoteFile{NoStop}{shannon49sec}%
\bibitem{bbc14snowden}%
  \BibitemOpen
  \bibfield{author}{%
  \bibinfo {author} {\bibnamefont{{BBC}}},\ }%
  \enquote{\bibinfo {title} {{Edward Snowden: Leaks that exposed US spy
  programme}},}\ \bibinfo {howpublished}
  {\url{http://www.bbc.com/news/world-us-canada-23123964}} (\bibinfo {month}
  {Jan.}\ \bibinfo {year} {2014})%
  \bibAnnoteFile{NoStop}{bbc14snowden}%
\bibitem{herodotus}%
  \BibitemOpen
  \bibfield{author}{%
  \bibinfo {author} {\bibnamefont{Herodotus}}\ }%
  (\bibinfo {year} {c.~440 BC})\ \bibinfo {note} {5.35 and 7.239}%
  \bibAnnoteFile{NoStop}{herodotus}%
\bibitem{simon94ssh}%
  \BibitemOpen
  \bibfield{author}{%
  \bibinfo {author} {\bibfnamefont{Marvin~K.}\ \bibnamefont{Simon}}, \bibinfo
  {author} {\bibfnamefont{Jim~K.}\ \bibnamefont{Omura}}, \bibinfo {author}
  {\bibfnamefont{Robert~A.}\ \bibnamefont{Scholtz}},\ and\ \bibinfo {author}
  {\bibfnamefont{Barry~K.}\ \bibnamefont{Levitt}},\ }%
  \emph{\bibinfo {title} {Spread Spectrum Communications Handbook}},\ \bibinfo
  {edition} {revised}\ ed.\ (\bibinfo {publisher} {McGraw-Hill},\ \bibinfo
  {year} {1994})%
  \bibAnnoteFile{NoStop}{simon94ssh}%
\bibitem{fridrich09stego}%
  \BibitemOpen
  \bibfield{author}{%
  \bibinfo {author} {\bibfnamefont{Jessica}\ \bibnamefont{Fridrich}},\ }%
  \emph{\bibinfo {title} {Steganography in Digital Media: Principles,
  Algorithms, and Applications}},\ \bibinfo {edition} {1st}\ ed.\ (\bibinfo
  {publisher} {Cambridge University Press},\ \bibinfo {address} {New York, NY,
  USA},\ \bibinfo {year} {2009})%
  \bibAnnoteFile{NoStop}{fridrich09stego}%
\bibitem{bash13squarerootjsac}%
  \BibitemOpen
  \bibfield{author}{%
  \bibinfo {author} {\bibfnamefont{Boulat~A.}\ \bibnamefont{Bash}}, \bibinfo
  {author} {\bibfnamefont{Dennis}\ \bibnamefont{Goeckel}},\ and\ \bibinfo
  {author} {\bibfnamefont{Don}\ \bibnamefont{Towsley}},\ }%
  \bibfield{title}{%
  \enquote{\bibinfo {title} {Limits of reliable communication with low
  probability of detection on {AWGN} channels},}\ }%
  \bibfield{journal}{%
  \bibinfo {journal} {IEEE Journal on Selected Areas in Communications}\ }%
  \textbf{\bibinfo {volume} {31}},\ \bibinfo {pages} {1921--1930} (\bibinfo
  {year} {2013}),\
  \Eprint{http://arxiv.org/abs/arXiv:1202.6423}{arXiv:1202.6423}%
  \bibAnnoteFile{NoStop}{bash13squarerootjsac}%
\bibitem{bash12sqrtlawisit}%
  \BibitemOpen
  \bibfield{author}{%
  \bibinfo {author} {\bibfnamefont{Boulat~A.}\ \bibnamefont{Bash}}, \bibinfo
  {author} {\bibfnamefont{Dennis}\ \bibnamefont{Goeckel}},\ and\ \bibinfo
  {author} {\bibfnamefont{Don}\ \bibnamefont{Towsley}},\ }%
  \enquote{\bibinfo {title} {Square root law for communication with low
  probability of detection on {AWGN} channels},}\ in\ \emph{\bibinfo
  {booktitle} {Proc.~of {IEEE} International Symposium on Information Theory
  ({ISIT})}}\ (\bibinfo {address} {Cambridge, MA, USA},\ \bibinfo {year}
  {2012})%
  \bibAnnoteFile{NoStop}{bash12sqrtlawisit}%
\bibitem{bash14timing}%
  \BibitemOpen
  \bibfield{author}{%
  \bibinfo {author} {\bibfnamefont{Boulat~A.}\ \bibnamefont{Bash}}, \bibinfo
  {author} {\bibfnamefont{Dennis}\ \bibnamefont{Goeckel}},\ and\ \bibinfo
  {author} {\bibfnamefont{Don}\ \bibnamefont{Towsley}},\ }%
  \enquote{\bibinfo {title} {{LPD Communication when the Warden Does Not Know
  When}},}\ in\ \emph{\bibinfo {booktitle} {Proc.~of {IEEE} International
  Symposium on Information Theory ({ISIT})}}\ (\bibinfo {address} {Honolulu,
  HI, USA},\ \bibinfo {year} {2014})\ \bibinfo {note} {arXiv:1403.1013}%
  \bibAnnoteFile{NoStop}{bash14timing}%
\bibitem{che13sqrtlawbscisit}%
  \BibitemOpen
  \bibfield{author}{%
  \bibinfo {author} {\bibfnamefont{Pak~Hou}\ \bibnamefont{Che}}, \bibinfo
  {author} {\bibfnamefont{Mayank}\ \bibnamefont{Bakshi}},\ and\ \bibinfo
  {author} {\bibfnamefont{Sidharth}\ \bibnamefont{Jaggi}},\ }%
  \enquote{\bibinfo {title} {Reliable deniable communication: Hiding messages
  in noise},}\ in\ \emph{\bibinfo {booktitle} {Proc.~of {IEEE} International
  Symposium on Information Theory ({ISIT})}}\ (\bibinfo {address} {Istanbul,
  Turkey},\ \bibinfo {year} {2013})\ \bibinfo {note} {arXiv:1304.6693}%
  \bibAnnoteFile{NoStop}{che13sqrtlawbscisit}%
\bibitem{kadhe14sqrtlawmultipathisit}%
  \BibitemOpen
  \bibfield{author}{%
  \bibinfo {author} {\bibfnamefont{Swanand}\ \bibnamefont{Kadhe}}, \bibinfo
  {author} {\bibfnamefont{Sidharth}\ \bibnamefont{Jaggi}}, \bibinfo {author}
  {\bibfnamefont{Mayank}\ \bibnamefont{Bakshi}},\ and\ \bibinfo {author}
  {\bibfnamefont{Alex}\ \bibnamefont{Sprintson}},\ }%
  \enquote{\bibinfo {title} {Reliable, deniable, and hidable communication over
  multipath networks},}\ in\ \emph{\bibinfo {booktitle} {Proc.~of {IEEE}
  International Symposium on Information Theory ({ISIT})}}\ (\bibinfo {address}
  {Honolulu, HI, USA},\ \bibinfo {year} {2014})\ \bibinfo {note}
  {arXiv:1401.4451}%
  \bibAnnoteFile{NoStop}{kadhe14sqrtlawmultipathisit}%
\bibitem{hou14isit}%
  \BibitemOpen
  \bibfield{author}{%
  \bibinfo {author} {\bibfnamefont{Jie}\ \bibnamefont{Hou}}\ and\ \bibinfo
  {author} {\bibfnamefont{Gerhard}\ \bibnamefont{Kramer}},\ }%
  \enquote{\bibinfo {title} {Effective secrecy: Reliability, confusion and
  stealth},}\ in\ \emph{\bibinfo {booktitle} {Proc.~of {IEEE} International
  Symposium on Information Theory ({ISIT})}}\ (\bibinfo {address} {Honolulu,
  HI, USA},\ \bibinfo {year} {2014})\ \bibinfo {note} {arXiv:1311.1411}%
  \bibAnnoteFile{NoStop}{hou14isit}%
\bibitem{Note1}%
  \BibitemOpen
  \bibinfo {note} {The $\protect \qopname \relax o{log}n$ improvement in
  steganographic application versus covert communication over a noisy channel
  is attributable to the noiseless Alice-to-Bob channel, and the similarity in
  the square root laws is due to the mathematics of classical~\cite
  {lehmann05stathyp} and quantum~\cite {helstrom76quantumdetect} statistical
  hypothesis testing.}%
  \bibAnnoteFile{Stop}{Note1}%
\bibitem{ker07pool}%
  \BibitemOpen
  \bibfield{author}{%
  \bibinfo {author} {\bibfnamefont{Andrew~D.}\ \bibnamefont{Ker}},\ }%
  \enquote{\bibinfo {title} {Batch steganography and pooled steganalysis},}\ \
  (\bibinfo {publisher} {Springer Berlin Heidelberg},\ \bibinfo {year} {2007})\
  pp.\ \bibinfo {pages} {265--281}%
  \bibAnnoteFile{NoStop}{ker07pool}%
\bibitem{filler09sqrtlawmarkov}%
  \BibitemOpen
  \bibfield{author}{%
  \bibinfo {author} {\bibfnamefont{Tom\'{a}\v{s}}\ \bibnamefont{Filler}},
  \bibinfo {author} {\bibfnamefont{Andrew~D.}\ \bibnamefont{Ker}},\ and\
  \bibinfo {author} {\bibfnamefont{Jessica}\ \bibnamefont{Fridrich}},\ }%
  \bibfield{title}{%
  \enquote{\bibinfo {title} {The square root law of steganographic capacity for
  markov covers},}\ }%
  \bibfield{journal}{%
  \bibinfo {journal} {Media Forensics and Security}\ }%
  \textbf{\bibinfo {volume} {7254}} (\bibinfo {year} {2009})%
  \bibAnnoteFile{NoStop}{filler09sqrtlawmarkov}%
\bibitem{ker09sqrtlawkey}%
  \BibitemOpen
  \bibfield{author}{%
  \bibinfo {author} {\bibfnamefont{Andrew~D.}\ \bibnamefont{Ker}},\ }%
  \enquote{\bibinfo {title} {The square root law requires a linear key},}\ in\
  \emph{\bibinfo {booktitle} {Proceedings of the 11th ACM workshop on
  Multimedia and Security}},\ \bibinfo {series and number} {MM\&Sec '09}\
  (\bibinfo {address} {Princeton, NJ, USA},\ \bibinfo {year} {2009})\ pp.\
  \bibinfo {pages} {85--92}%
  \bibAnnoteFile{NoStop}{ker09sqrtlawkey}%
\bibitem{ker10sqrtlawnokey}%
  \BibitemOpen
  \bibfield{author}{%
  \bibinfo {author} {\bibfnamefont{Andrew~D.}\ \bibnamefont{Ker}},\ }%
  \enquote{\bibinfo {title} {The square root law does not require a linear
  key},}\ in\ \emph{\bibinfo {booktitle} {Proceedings of the 12th ACM workshop
  on Multimedia and Security}},\ \bibinfo {series and number} {MM\&Sec '10}\
  (\bibinfo {address} {Rome, Italy},\ \bibinfo {year} {2010})\ pp.\ \bibinfo
  {pages} {213--224}%
  \bibAnnoteFile{NoStop}{ker10sqrtlawnokey}%
\bibitem{shaw11qstego}%
  \BibitemOpen
  \bibfield{author}{%
  \bibinfo {author} {\bibfnamefont{Bilal~A.}\ \bibnamefont{Shaw}}\ and\
  \bibinfo {author} {\bibfnamefont{Todd~A.}\ \bibnamefont{Brun}},\ }%
  \bibfield{title}{%
  \enquote{\bibinfo {title} {Quantum steganography with noisy quantum
  channels},}\ }%
  \bibfield{journal}{%
  \bibinfo {journal} {Phys. Rev. A}\ }%
  \textbf{\bibinfo {volume} {83}},\ \bibinfo {pages} {022310} (\bibinfo {month}
  {Feb}\ \bibinfo {year} {2011})%
  \bibAnnoteFile{NoStop}{shaw11qstego}%
\bibitem{bouchet10fso}%
  \BibitemOpen
  \bibfield{author}{%
  \bibinfo {author} {\bibfnamefont{O.}~\bibnamefont{Bouchet}}, \bibinfo
  {author} {\bibfnamefont{H.}~\bibnamefont{Sizun}}, \bibinfo {author}
  {\bibfnamefont{C.}~\bibnamefont{Boisrobert}}, \bibinfo {author}
  {\bibfnamefont{F.}~\bibnamefont{de~Fornel}},\ and\ \bibinfo {author}
  {\bibfnamefont{P.N.}\ \bibnamefont{Favennec}},\ }%
  \emph{\bibinfo {title} {Free-Space Optics: Propagation and Communication}}\
  (\bibinfo {publisher} {Wiley},\ \bibinfo {year} {2010})%
  \bibAnnoteFile{NoStop}{bouchet10fso}%
\bibitem{senior09fiber}%
  \BibitemOpen
  \bibfield{author}{%
  \bibinfo {author} {\bibfnamefont{John}\ \bibnamefont{Senior}},\ }%
  \emph{\bibinfo {title} {Optical Fiber Communications}},\ \bibinfo {edition}
  {3rd}\ ed.\ (\bibinfo {publisher} {Pearson Education},\ \bibinfo {year}
  {2009})%
  \bibAnnoteFile{NoStop}{senior09fiber}%
\bibitem{gagliardi95optcomms}%
  \BibitemOpen
  \bibfield{author}{%
  \bibinfo {author} {\bibfnamefont{R.M.}\ \bibnamefont{Gagliardi}}\ and\
  \bibinfo {author} {\bibfnamefont{S.}~\bibnamefont{Karp}},\ }%
  \emph{\bibinfo {title} {Optical Communications}},\ \bibinfo {edition} {2nd}\
  ed.\ (\bibinfo {publisher} {Wiley},\ \bibinfo {year} {1995})%
  \bibAnnoteFile{NoStop}{gagliardi95optcomms}%
\bibitem{goodman05fourieroptics}%
  \BibitemOpen
  \bibfield{author}{%
  \bibinfo {author} {\bibfnamefont{J.W.}\ \bibnamefont{Goodman}},\ }%
  \emph{\bibinfo {title} {Introduction to Fourier Optics}},\ \bibinfo {edition}
  {3rd}\ ed.\ (\bibinfo {publisher} {Roberts \& Company},\ \bibinfo {year}
  {2005})%
  \bibAnnoteFile{NoStop}{goodman05fourieroptics}%
\bibitem{anderson04otdr}%
  \BibitemOpen
  \bibfield{author}{%
  \bibinfo {author} {\bibfnamefont{Duwayne~R.}\ \bibnamefont{Anderson}},
  \bibinfo {author} {\bibfnamefont{Larry~M.}\ \bibnamefont{Johnson}},\ and\
  \bibinfo {author} {\bibfnamefont{Florian~G.}\ \bibnamefont{Bell}},\ }%
  \emph{\bibinfo {title} {Troubleshooting Optical Fiber Networks: Understanding
  and Using Optical Time-Domain Reflectometers}},\ \bibinfo {edition} {2nd}\
  ed.\ (\bibinfo {publisher} {Elsevier Academic Press},\ \bibinfo {year}
  {2004})%
  \bibAnnoteFile{NoStop}{anderson04otdr}%
\bibitem{giovannetti04cappureloss}%
  \BibitemOpen
  \bibfield{author}{%
  \bibinfo {author} {\bibfnamefont{V.}~\bibnamefont{Giovannetti}}, \bibinfo
  {author} {\bibfnamefont{S.}~\bibnamefont{Guha}}, \bibinfo {author}
  {\bibfnamefont{S.}~\bibnamefont{Lloyd}}, \bibinfo {author}
  {\bibfnamefont{L.}~\bibnamefont{Maccone}}, \bibinfo {author}
  {\bibfnamefont{J.~H.}\ \bibnamefont{Shapiro}},\ and\ \bibinfo {author}
  {\bibfnamefont{H.~P.}\ \bibnamefont{Yuen}},\ }%
  \bibfield{title}{%
  \enquote{\bibinfo {title} {Classical capacity of the lossy bosonic channel:
  The exact solution},}\ }%
  \bibfield{journal}{%
  \Doi{10.1103/PhysRevLett.92.027902}{\bibinfo {journal} {Phys. Rev. Lett.}}\
  }%
  \textbf{\bibinfo {volume} {92}},\ \bibinfo {pages} {027902} (\bibinfo {month}
  {Jan}\ \bibinfo {year} {2004})%
  \bibAnnoteFile{NoStop}{giovannetti04cappureloss}%
\bibitem{Wol98}%
  \BibitemOpen
  \bibfield{author}{%
  \bibinfo {author} {\bibfnamefont{Michael~M.}\ \bibnamefont{Wolf}}, \bibinfo
  {author} {\bibfnamefont{David}\ \bibnamefont{P\'erez-Garc\'ia}},\ and\
  \bibinfo {author} {\bibfnamefont{Geza}\ \bibnamefont{Giedke}},\ }%
  \bibfield{title}{%
  \enquote{\bibinfo {title} {Quantum capacities of bosonic channels},}\ }%
  \bibfield{journal}{%
  \bibinfo {journal} {Phys. Rev. Lett.}\ }%
  \textbf{\bibinfo {volume} {98}},\ \bibinfo {pages} {130501} (\bibinfo {month}
  {Mar}\ \bibinfo {year} {2007})%
  \bibAnnoteFile{NoStop}{Wol98}%
\bibitem{Wil12}%
  \BibitemOpen
  \bibfield{author}{%
  \bibinfo {author} {\bibfnamefont{Mark~M.}\ \bibnamefont{Wilde}}, \bibinfo
  {author} {\bibfnamefont{Patrick}\ \bibnamefont{Hayden}},\ and\ \bibinfo
  {author} {\bibfnamefont{Saikat}\ \bibnamefont{Guha}},\ }%
  \bibfield{title}{%
  \enquote{\bibinfo {title} {Information trade-offs for optical quantum
  communication},}\ }%
  \bibfield{journal}{%
  \bibinfo {journal} {Phys. Rev. Lett.}\ }%
  \textbf{\bibinfo {volume} {108}},\ \bibinfo {pages} {140501} (\bibinfo
  {month} {Apr}\ \bibinfo {year} {2012})%
  \bibAnnoteFile{NoStop}{Wil12}%
\bibitem{Kop70}%
  \BibitemOpen
  \bibfield{author}{%
  \bibinfo {author} {\bibfnamefont{N.S.}\ \bibnamefont{Kopeika}}\ and\ \bibinfo
  {author} {\bibfnamefont{J.}~\bibnamefont{Bordogna}},\ }%
  \bibfield{title}{%
  \enquote{\bibinfo {title} {Background noise in optical communication
  systems},}\ }%
  \bibfield{journal}{%
  \bibinfo {journal} {Proc.~of the IEEE}\ }%
  \textbf{\bibinfo {volume} {58}},\ \bibinfo {pages} {1571--1577} (\bibinfo
  {month} {Oct.}\ \bibinfo {year} {1970})%
  \bibAnnoteFile{NoStop}{Kop70}%
\bibitem{wang12ppm}%
  \BibitemOpen
  \bibfield{author}{%
  \bibinfo {author} {\bibfnamefont{Ligong}\ \bibnamefont{Wang}}\ and\ \bibinfo
  {author} {\bibfnamefont{Gregory~W}\ \bibnamefont{Wornell}},\ }%
  \enquote{\bibinfo {title} {Refined analysis of the poisson channel in the
  high-photon-efficiency regime},}\ in\ \emph{\bibinfo {booktitle} {Proc.~of
  IEEE Information Theory Workshop (ITW)}}\ (\bibinfo {address} {Lausanne,
  Switzerland},\ \bibinfo {year} {2012})\ pp.\ \bibinfo {pages} {582--586}%
  \bibAnnoteFile{NoStop}{wang12ppm}%
\bibitem{moision06deepspace}%
  \BibitemOpen
  \bibfield{author}{%
  \bibinfo {author} {\bibfnamefont{Bruce}\ \bibnamefont{Moision}}, \bibinfo
  {author} {\bibfnamefont{Jon}\ \bibnamefont{Hamkins}},\ and\ \bibinfo {author}
  {\bibfnamefont{Michael}\ \bibnamefont{Cheng}},\ }%
  \enquote{\bibinfo {title} {Design of a coded modulation for deep space
  optical communications},}\ in\ \emph{\bibinfo {booktitle} {Information Theory
  and its Applications (ITA) Workshop}}\ (\bibinfo {organization} {University
  of California San Diego},\ \bibinfo {year} {2006})%
  \bibAnnoteFile{NoStop}{moision06deepspace}%
\bibitem{shannon48it}%
  \BibitemOpen
  \bibfield{author}{%
  \bibinfo {author} {\bibfnamefont{Claude~E.}\ \bibnamefont{Shannon}},\ }%
  \bibfield{title}{%
  \enquote{\bibinfo {title} {A mathematical theory of communication},}\ }%
  \bibfield{journal}{%
  \bibinfo {journal} {Bell System Technical Journal}\ }%
  \textbf{\bibinfo {volume} {27}} (\bibinfo {year} {1948})%
  \bibAnnoteFile{NoStop}{shannon48it}%
\bibitem{Ma09}%
  \BibitemOpen
  \bibfield{author}{%
  \bibinfo {author} {\bibfnamefont{Vadim}\ \bibnamefont{Makarov}},\ }%
  \bibfield{title}{%
  \enquote{\bibinfo {title} {Controlling passively quenched single photon
  detectors by bright light},}\ }%
  \bibfield{journal}{%
  \bibinfo {journal} {New J.~Phys.}\ }%
  \textbf{\bibinfo {volume} {11}},\ \bibinfo {pages} {065003} (\bibinfo {year}
  {2009})%
  \bibAnnoteFile{NoStop}{Ma09}%
\bibitem{Ri98}%
  \BibitemOpen
  \bibfield{author}{%
  \bibinfo {author} {\bibfnamefont{Gregoire}\ \bibnamefont{Ribordy}}, \bibinfo
  {author} {\bibfnamefont{Jean-Daniel}\ \bibnamefont{Gautier}}, \bibinfo
  {author} {\bibfnamefont{Hugo}\ \bibnamefont{Zbinden}},\ and\ \bibinfo
  {author} {\bibfnamefont{Nicolas}\ \bibnamefont{Gisin}},\ }%
  \bibfield{title}{%
  \enquote{\bibinfo {title} {Performance of {InGaAs/InP} avalanche photodiodes
  as gated-mode photon counters},}\ }%
  \bibfield{journal}{%
  \bibinfo {journal} {App.~Opt.}\ }%
  \textbf{\bibinfo {volume} {37}},\ \bibinfo {pages} {2272--2277} (\bibinfo
  {year} {1998})%
  \bibAnnoteFile{NoStop}{Ri98}%
\bibitem{lehmann05stathyp}%
  \BibitemOpen
  \bibfield{author}{%
  \bibinfo {author} {\bibfnamefont{Erich}\ \bibnamefont{Lehmann}}\ and\
  \bibinfo {author} {\bibfnamefont{Joseph}\ \bibnamefont{Romano}},\ }%
  \emph{\bibinfo {title} {Testing Statistical Hypotheses}},\ \bibinfo {edition}
  {3rd}\ ed.\ (\bibinfo {publisher} {Springer},\ \bibinfo {address} {New York,
  NY, USA},\ \bibinfo {year} {2005})%
  \bibAnnoteFile{NoStop}{lehmann05stathyp}%
\bibitem{abdo14josephson}%
  \BibitemOpen
  \bibfield{author}{%
  \bibinfo {author} {\bibfnamefont{Baleegh}\ \bibnamefont{Abdo}}, \bibinfo
  {author} {\bibfnamefont{Katrina}\ \bibnamefont{Sliwa}}, \bibinfo {author}
  {\bibfnamefont{S.}~\bibnamefont{Shankar}}, \bibinfo {author}
  {\bibfnamefont{Michael}\ \bibnamefont{Hatridge}}, \bibinfo {author}
  {\bibfnamefont{Luigi}\ \bibnamefont{Frunzio}}, \bibinfo {author}
  {\bibfnamefont{Robert}\ \bibnamefont{Schoelkopf}},\ and\ \bibinfo {author}
  {\bibfnamefont{Michel}\ \bibnamefont{Devoret}},\ }%
  \bibfield{title}{%
  \enquote{\bibinfo {title} {Josephson directional amplifier for quantum
  measurement of superconducting circuits},}\ }%
  \bibfield{journal}{%
  \bibinfo {journal} {Phys. Rev. Lett.}\ }%
  \textbf{\bibinfo {volume} {112}},\ \bibinfo {pages} {167701} (\bibinfo
  {month} {Apr}\ \bibinfo {year} {2014})%
  \bibAnnoteFile{NoStop}{abdo14josephson}%
\bibitem{clrs2e}%
  \BibitemOpen
  \bibfield{author}{%
  \bibinfo {author} {\bibfnamefont{Thomas~H.}\ \bibnamefont{Cormen}}, \bibinfo
  {author} {\bibfnamefont{Charles~E.}\ \bibnamefont{Leiserson}}, \bibinfo
  {author} {\bibfnamefont{Ronald~L.}\ \bibnamefont{Rivest}},\ and\ \bibinfo
  {author} {\bibfnamefont{Clifford}\ \bibnamefont{Stein}},\ }%
  \emph{\bibinfo {title} {{Introduction to Algorithms}}},\ \bibinfo {edition}
  {2nd}\ ed.\ (\bibinfo {publisher} {{MIT} Press},\ \bibinfo {address}
  {Cambridge, Massachusetts},\ \bibinfo {year} {2001})%
  \bibAnnoteFile{NoStop}{clrs2e}%
\bibitem{dkw56}%
  \BibitemOpen
  \bibfield{author}{%
  \bibinfo {author} {\bibfnamefont{A.}~\bibnamefont{Dvoretzky}}, \bibinfo
  {author} {\bibfnamefont{J.}~\bibnamefont{Kiefer}},\ and\ \bibinfo {author}
  {\bibfnamefont{J.}~\bibnamefont{Wolfowitz}},\ }%
  \bibfield{title}{%
  \enquote{\bibinfo {title} {Asymptotic minimax character of the sample
  distribution function and of the classical multinomial estimator},}\ }%
  \bibfield{journal}{%
  \bibinfo {journal} {Ann. Math. Statist.}\ }%
  \textbf{\bibinfo {volume} {27}},\ \bibinfo {pages} {642--669} (\bibinfo
  {year} {1956})%
  \bibAnnoteFile{NoStop}{dkw56}%
\bibitem{massart90dkw}%
  \BibitemOpen
  \bibfield{author}{%
  \bibinfo {author} {\bibfnamefont{P.}~\bibnamefont{Massart}},\ }%
  \bibfield{title}{%
  \enquote{\bibinfo {title} {The tight constant in the
  {Dvoretzky-Kiefer-Wolfowitz} inequality},}\ }%
  \bibfield{journal}{%
  \Doi{10.1214/aop/1176990746}{\bibinfo {journal} {The Annals of Probability}}\
  }%
  \textbf{\bibinfo {volume} {18}},\ \bibinfo {pages} {1269--1283} (\bibinfo
  {month} {07}\ \bibinfo {year} {1990})%
  \bibAnnoteFile{NoStop}{massart90dkw}%
\bibitem{helstrom76quantumdetect}%
  \BibitemOpen
  \bibfield{author}{%
  \bibinfo {author} {\bibfnamefont{Carl~W.}\ \bibnamefont{Helstrom}},\ }%
  \emph{\bibinfo {title} {Quantum Detection and Estimation Theory}}\ (\bibinfo
  {publisher} {Academic Press, Inc.},\ \bibinfo {address} {New York, NY, USA},\
  \bibinfo {year} {1976})%
  \bibAnnoteFile{NoStop}{helstrom76quantumdetect}%
\bibitem{gallager68IT}%
  \BibitemOpen
  \bibfield{author}{%
  \bibinfo {author} {\bibfnamefont{Robert~G.}\ \bibnamefont{Gallager}},\ }%
  \emph{\bibinfo {title} {Information Theory and Reliable Communication}}\
  (\bibinfo {publisher} {John Wiley and Sons, Inc.},\ \bibinfo {address} {New
  York},\ \bibinfo {year} {1968})%
  \bibAnnoteFile{NoStop}{gallager68IT}%
\bibitem{campos89beamsplitter}%
  \BibitemOpen
  \bibfield{author}{%
  \bibinfo {author} {\bibfnamefont{Richard~A.}\ \bibnamefont{Campos}}, \bibinfo
  {author} {\bibfnamefont{Bahaa E.~A.}\ \bibnamefont{Saleh}},\ and\ \bibinfo
  {author} {\bibfnamefont{Malvin~C.}\ \bibnamefont{Teich}},\ }%
  \bibfield{title}{%
  \enquote{\bibinfo {title} {Quantum-mechanical lossless beam splitter: {SU}(2)
  symmetry and photon statistics},}\ }%
  \bibfield{journal}{%
  \bibinfo {journal} {Phys. Rev. A}\ }%
  \textbf{\bibinfo {volume} {40}},\ \bibinfo {pages} {1371--1384} (\bibinfo
  {month} {Aug}\ \bibinfo {year} {1989})%
  \bibAnnoteFile{NoStop}{campos89beamsplitter}%
\bibitem{wilde13quantumit}%
  \BibitemOpen
  \bibfield{author}{%
  \bibinfo {author} {\bibfnamefont{M.M.}\ \bibnamefont{Wilde}},\ }%
  \emph{\bibinfo {title} {Quantum Information Theory}}\ (\bibinfo {publisher}
  {Cambridge University Press},\ \bibinfo {year} {2013})%
  \bibAnnoteFile{NoStop}{wilde13quantumit}%
\bibitem{pirandola08bounds}%
  \BibitemOpen
  \bibfield{author}{%
  \bibinfo {author} {\bibfnamefont{Stefano}\ \bibnamefont{Pirandola}}\ and\
  \bibinfo {author} {\bibfnamefont{Seth}\ \bibnamefont{Lloyd}},\ }%
  \bibfield{title}{%
  \enquote{\bibinfo {title} {Computable bounds for the discrimination of
  gaussian states},}\ }%
  \bibfield{journal}{%
  \bibinfo {journal} {Phys. Rev. A}\ }%
  \textbf{\bibinfo {volume} {78}},\ \bibinfo {pages} {012331} (\bibinfo {month}
  {Jul}\ \bibinfo {year} {2008})%
  \bibAnnoteFile{NoStop}{pirandola08bounds}%
\bibitem{cover02IT}%
  \BibitemOpen
  \bibfield{author}{%
  \bibinfo {author} {\bibfnamefont{Thomas~M.}\ \bibnamefont{Cover}}\ and\
  \bibinfo {author} {\bibfnamefont{Joy~A.}\ \bibnamefont{Thomas}},\ }%
  \emph{\bibinfo {title} {Elements of Information Theory}},\ \bibinfo {edition}
  {2nd}\ ed.\ (\bibinfo {publisher} {John Wiley \& Sons},\ \bibinfo {address}
  {Hoboken, NJ, USA},\ \bibinfo {year} {2002})%
  \bibAnnoteFile{NoStop}{cover02IT}%
\bibitem{Note2}%
  \BibitemOpen
  \bibinfo {note} {We use \cite [Theorem 5.6.2]{gallager68IT}, setting
  parameter $s=1$.}%
  \bibAnnoteFile{Stop}{Note2}%
\bibitem{hol97}%
  \BibitemOpen
  \bibfield{author}{%
  \bibinfo {author} {\bibfnamefont{A.~S.}\ \bibnamefont{Holevo}},\ }%
  \bibfield{title}{%
  \enquote{\bibinfo {title} {The capacity of the quantum channel with general
  signal states},}\ }%
  \bibfield{journal}{%
  \bibinfo {journal} {{IEEE} Trans. Inf. Theory}\ }%
  \textbf{\bibinfo {volume} {44}},\ \bibinfo {pages} {269--273} (\bibinfo
  {year} {1998})%
  \bibAnnoteFile{NoStop}{hol97}%
\end{thebibliography}%

\widetext
\clearpage
\setcounter{equation}{0}
\setcounter{figure}{0}
\setcounter{table}{0}
\makeatletter
\renewcommand{\theequation}{S\arabic{equation}}
\renewcommand{\thefigure}{S\arabic{figure}}
\renewcommand{\bibnumfmt}[1]{[S#1]}
\section*{Supplementary Information}
\setcounter{theorem}{0}

\section{Covert Optical Communication Theorems}

Here we re-state the theorems from the main paper and provide their proofs.

\begin{theorem}\emph{(Insufficiency of pure-loss for covert 
  communication)}\label{th:pureloss_s}

\end{theorem}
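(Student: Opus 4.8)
The plan is to hand Willie the single-photon-detector (SPD) strategy already singled out in the main text, use covertness to conclude that Alice's transmitted light is nearly indistinguishable from the multimode vacuum, and then show that such near-vacuum signalling cannot carry even one reliably-decodable bit to Bob, no matter how powerful Bob's receiver or how large the pre-shared secret.

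First I would fix Willie's measurement to be an ideal SPD on each mode, declaring ``transmission'' iff at least one mode clicks. Over the pure-loss channel the environment is vacuum, so Alice's silence produces the vacuum at Willie; this test therefore has zero false alarm and error exactly $\frac12\trace[\bar\rho^W\Pi_0]$, where $\Pi_0=(|0\rangle\langle 0|)^{\otimes n}$ is the global vacuum projector and $\bar\rho^W$ is Willie's state averaged over messages and key. Since covertness guarantees $\mathbb{P}_e^{(w)}\ge\frac12-\epsilon$ against \emph{every} measurement, the SPD included, I immediately obtain $\trace[\bar\rho^W\Pi_0]\ge 1-2\epsilon$.

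Next I would pull this constraint back to Alice's input. Willie's channel is pure attenuation by $\eta_w=1-\eta_b$, whose Heisenberg adjoint sends $\Pi_0$ to the operator $\eta_b^{\hat N}$, with $\hat N=\sum_i\hat n_i$ the total photon-number operator; hence $\trace[\bar\rho^A\,\eta_b^{\hat N}]\ge 1-2\epsilon$. Using the operator bound $\eta_b^{\hat N}\le\Pi_0+\eta_b(I-\Pi_0)$ on the total-number eigenbasis then forces $\mathbb{P}_{\bar\rho^A}(\hat N\ge 1)\le 2\epsilon/\eta_w$, i.e.\ the message-averaged input state overlaps the single pure state $\Pi_0$ with weight at least $1-2\epsilon/\eta_w$. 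This is the step that genuinely exploits pure loss: because the reference is a \emph{pure} state, an arbitrarily small detection advantage at Willie pins Alice's entire ensemble near one fixed state, which is exactly what fails once the environment is thermal, motivating the separate energy-constrained converse of Theorem~\ref{th:converse}.

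Finally I would convert near-vacuum input into unreliability at Bob. Writing $\delta_m=1-\langle 0|^{\otimes n}\rho_m^A|0\rangle^{\otimes n}$, convexity gives $\mathbb{E}_m\delta_m\le 2\epsilon/\eta_w$ and $\|\rho_m^A-\Pi_0\|_1\le 2\sqrt{\delta_m}$. For any decoding POVM $\{\Lambda_m\}$, with Bob granted the unattenuated signal and the key, $\trace[\Lambda_m\rho_m^A]\le\trace[\Lambda_m\Pi_0]+2\sqrt{\delta_m}$; summing with $\sum_m\Lambda_m=I$ and applying Jensen bounds the success probability by $2^{-M}+2\sqrt{2\epsilon/\eta_w}$, so $\mathbb{P}_e^{(b)}\ge 1-2^{-M}-2\sqrt{2\epsilon/\eta_w}$, which is bounded away from zero as $\epsilon\to 0$ for every $M\ge 1$; averaging the same bound over the key shows an unbounded secret cannot help. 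I expect the main obstacle to be the bookkeeping in this last step for arbitrary entangled, non-Gaussian, key-dependent codewords, keeping the covertness constraint (a statement about the fully averaged state $\bar\rho^W$) aligned with Bob's per-message, per-key decoding through convexity rather than any single clean inequality.
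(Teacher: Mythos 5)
Your proposal is correct, and it reaches the theorem by a genuinely different route on the reliability side, so a comparison is worthwhile. On the covertness side both arguments hinge on the same ideal-SPD test with zero false alarms, but the paper extracts the constraint $\frac{1}{2^M}\sum_u\bigl(1-|a_{\mathbf{0}}(u)|^2\bigr)\le 2\epsilon/\eta_w$ from an explicit beamsplitter computation in the Fock basis (its Lemma~\ref{lemma:bs_on_phi}), stated for pure codewords with a convexity remark covering mixed ones; your adjoint-channel identity $\mathcal{E}^{\dagger}_{\eta_w}(\Pi_0)=\eta_b^{\hat{N}}$ together with the operator bound $\eta_b^{\hat{N}}\le\Pi_0+\eta_b(I-\Pi_0)$ gives the identical constraint in one step and applies directly to the fully averaged, key-dependent, possibly mixed input. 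On the reliability side the two proofs diverge: the paper keeps the half of the codebook that is $4\epsilon/\eta_w$-close to vacuum (a Markov-type count), pairs those codewords by a bijection, and invokes Helstrom's binary-discrimination bound plus the trace-distance triangle inequality to conclude $\mathbb{P}_e^{(b)}\ge\frac{1}{4}-\sqrt{\epsilon/\eta_w}$; you instead attack the full $2^M$-ary decoding at once, using $\|\rho_m-\Pi_0\|_1\le 2\sqrt{\delta_m}$, the resolution $\sum_m\Lambda_m=I$ against the rank-one projector $\Pi_0$, and Jensen, to obtain $\mathbb{P}_e^{(b)}\ge 1-2^{-M}-2\sqrt{2\epsilon/\eta_w}$. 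Your bound is strictly stronger (Bob's error tends to one for large $M$, rather than merely exceeding roughly $1/4$), and it makes explicit the bookkeeping the paper leaves implicit: the pre-shared secret enters only as one more average folded into the same Jensen step, which is exactly how the "unbounded secret" clause of the theorem should be discharged. What the paper's pairing argument buys in exchange is mainly self-containedness---a reduction to the textbook two-state Helstrom formula instead of the variational characterization of trace distance---plus the reusable beamsplitter lemma that its other proofs cite; as a proof of this theorem alone, your version is both tighter and cleaner.
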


In the proof of this theorem we denote a tensor product of $n$ Fock (or photon
  number) states by 
 $\ket{\mathbf{u}}\equiv\ket{u_1}\otimes\ket{u_2}\otimes\cdots\otimes\ket{u_n}$,
  where vector $\mathbf{u}\in\mathbb{N}_{0}^n$ and $\mathbb{N}_{0}$ 
  is the set of non-negative integers.
Specifically, $\ket{\mathbf{0}}\equiv\ket{0}^{\otimes n}$.
Before proceeding with the proof, we prove the following lemma:
\begin{lemma}
\label{lemma:bs_on_phi}
Given the input of $n$-mode vacuum state $\ket{\mathbf{0}}^{E^n}$
  on the ``environment'' port and an $n$-mode entangled state 
  $\ket{\psi}^{A^n}=\sum_{\mathbf{k}}a_{\mathbf{k}}\ket{\mathbf{k}}^{A^n}$ 
  on the ``Alice'' port of a beamsplitter with transmissivity $\eta_b=1-\eta_w$,
  the diagonal elements of the output state $\rho^{W^n}$
  on the ``Willie'' port can be expressed in the $n$-fold Fock state basis as 
  follows:
\begin{align}
\label{eq:diag_pl_output}{\vphantom{\ket{\phi}}}^{W^n}\hspace{-4pt}\bra{\mathbf{s}}\hat{\rho}^{W^n}\ket{\mathbf{s}}^{W^n}&=\sum_{\mathbf{k}\in\mathbb{N}_{0}^n}\left|a_{\mathbf{k}}\right|^2\prod_{i=1}^n\binom{k_i}{s_i}(1-\eta_w)^{k_i-s_i}\eta_w^{s_i}.
\end{align}
\end{lemma}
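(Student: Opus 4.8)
The plan is to work in the Heisenberg picture, pushing the beamsplitter unitary onto the creation operators, and to exploit the product structure of the device: since the channel acts identically and independently on each of the $n$ modes and the environment is the product vacuum $\ket{\mathbf{0}}^{E^n}$, the full unitary factors as $U=\bigotimes_{i=1}^{n}U_i$, so it suffices to treat a single mode first. I would then obtain the diagonal element $\bra{\mathbf{s}}\hat{\rho}^{W^n}\ket{\mathbf{s}}$ by writing it as a sum over the (traced-out) Bob outcomes of squared amplitudes, which is the step at which the off-diagonal coherences drop out.

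First I would record the single-mode action. Writing Bob's and Willie's output creation operators as $\hat{b}^\dagger$ and $\hat{w}^\dagger$, the beamsplitter sends Alice's $\hat{a}^\dagger\mapsto\sqrt{1-\eta_w}\,\hat{b}^\dagger+\sqrt{\eta_w}\,\hat{w}^\dagger$ (the environment operator maps to the orthogonal combination but never acts, since that port is vacuum). Applying this to $\ket{k}^A\ket{0}^E=\frac{(\hat{a}^\dagger)^k}{\sqrt{k!}}\ket{0}^B\ket{0}^W$ and expanding with the binomial theorem—the two output operators commute—I get
\begin{align}
U_i\,\ket{k}^A\ket{0}^E=\sum_{s=0}^{k}\sqrt{\binom{k}{s}(1-\eta_w)^{k-s}\eta_w^{s}}\;\ket{k-s}^B\ket{s}^W,\nonumber
\end{align}
after simplifying the coefficient $\binom{k}{s}\sqrt{(k-s)!\,s!}/\sqrt{k!}=\sqrt{\binom{k}{s}}$. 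This is just the statement that each of the $k$ photons is independently routed to Willie with probability $\eta_w$.

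Next I would assemble the $n$-mode output by tensoring these single-mode expansions and superposing against the amplitudes $a_{\mathbf{k}}$, giving a pure state $\ket{\Psi}^{B^nW^n}$ on the two output registers. To extract the diagonal element I would use $\bra{\mathbf{s}}\hat{\rho}^{W^n}\ket{\mathbf{s}}=\sum_{\mathbf{m}\in\mathbb{N}_0^n}\bigl|\bra{\mathbf{m}}^{B^n}\bra{\mathbf{s}}^{W^n}\ket{\Psi}\bigr|^2$. In each mode the projection $\bra{m_i}^B\bra{s_i}^W$ forces $k_i=m_i+s_i$, collapsing the $\mathbf{k}$-sum to the single term $a_{\mathbf{m}+\mathbf{s}}\prod_i\sqrt{\binom{m_i+s_i}{s_i}(1-\eta_w)^{m_i}\eta_w^{s_i}}$. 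Squaring, summing over $\mathbf{m}$, and then reindexing $k_i=m_i+s_i$—using that $\binom{k_i}{s_i}=0$ whenever $k_i<s_i$, so the constraint $k_i\ge s_i$ can be dropped—yields exactly the claimed formula \eqref{eq:diag_pl_output}.

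The computation is essentially bookkeeping, so I do not expect a genuine conceptual obstacle; the one place that warrants care is the reduction of the reduced-state diagonal to a sum of squared amplitudes over Bob's Fock outcomes, which is valid precisely because distinct Bob Fock states are orthogonal so that no Bob cross-terms survive, together with the reindexing of the resulting double sum, where the vanishing binomial coefficients are what allow the answer to be written as an unconstrained sum over all $\mathbf{k}\in\mathbb{N}_0^n$.
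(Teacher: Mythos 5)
Your proposal is correct and follows essentially the same route as the paper's proof: single-mode Fock-state splitting at the beamsplitter, tensoring over modes, extracting the diagonal of Willie's reduced state as a sum of squared amplitudes over Bob's Fock outcomes, and reindexing $\mathbf{k}=\mathbf{m}+\mathbf{s}$ with the vanishing binomial coefficients absorbing the constraint $k_i\ge s_i$. The only difference is cosmetic: you derive the single-mode relation $U\ket{k}^A\ket{0}^E=\sum_{s=0}^{k}\sqrt{\binom{k}{s}(1-\eta_w)^{k-s}\eta_w^{s}}\,\ket{k-s}^B\ket{s}^W$ from the Heisenberg-picture binomial expansion, whereas the paper cites it from the beamsplitter literature.
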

\begin{proof}
A beamsplitter can be described as a unitary transformation $U_{BS}$ from 
  the two input modes (Alice's and the environment's ports) to the two output 
  modes (Bob's and Willie's ports).
Given a Fock state input $\ket{t}^A$ on Alice's port and vacuum input 
  $\ket{0}^E$ on the environment's port, the output at Bob's and Willie's ports
  is described as follows~\cite[Section IV.D]{campos89beamsplitter}:
\begin{align*}
U_{BS}\ket{t}^A\ket{0}^E&=\sum_{m=0}^t\sqrt{\binom{t}{m}\eta_w^m(1-\eta_w)^{t-m}}\ket{m}^W\ket{t-m}^B.
\end{align*}
Thus,
\begin{align*}
U_{BS}^{\otimes n}\ket{\mathbf{t}}^{A^n}\ket{\mathbf{0}}^{E^n}&=\bigotimes_{i=1}^n\sum_{m_i=0}^{t_i}\sqrt{\binom{t_i}{m_i}\eta_w^{m_i}(1-\eta_w)^{t_i-m_i}}\ket{m_i}^{W_i}\ket{t_i-m_i}^{B_i},
\end{align*}
which implies
\begin{align*}
U_{BS}^{\otimes n}\ket{\psi}^{A^n}\ket{\mathbf{0}}^{E^n}&=\sum_{\mathbf{t}\in\mathbb{N}_{0}^n}a_{\mathbf{t}}\bigotimes_{i=1}^n\sum_{m_i=0}^{t_i}\sqrt{\binom{t_i}{m_i}\eta_w^{m_i}(1-\eta_w)^{t_i-m_i}}\ket{m_i}^{W_i}\ket{t_i-m_i}^{B_i}\equiv\ket{\phi}^{W^nB^n}.
\end{align*}
Now, the partial trace of the output state $\rho^{BW}=\ket{\phi}^{W^nB^n}$ over 
  Bob's system reveals Willie's output state:
\begin{align*}
\rho^{W^n}&=\trace_{B^n}\left[\ket{\phi}^{W^nB^n}{\vphantom{\ket{\phi}}}^{W^nB^n}\hspace{-3pt}\bra{\phi}\right]\\
&=\sum_{\mathbf{x}\in\mathbb{N}_{0}^n}\left|{\vphantom{\ket{\phi}}}^{B^n}\hspace{-2pt}\langle\mathbf{x}|\phi\rangle^{W^nB^n}\right|^2,
\end{align*}
with 
\begin{align}
{\vphantom{\ket{\phi}}}^{B^n}\hspace{-2pt}\langle\mathbf{x}|\phi\rangle^{W^nB^n}&=\sum_{\mathbf{t}\in\mathbb{N}_{0}^n}a_{\mathbf{t}}\bigotimes_{i=1}^n\sum_{m_i=0}^{t_i}\sqrt{\binom{t_i}{m_i}\eta_w^{m_i}(1-\eta_w)^{t_i-m_i}}\ket{m_i}^{W_i}{\vphantom{\ket{m_i}}}^{B_i}\langle x_i|t_i-m_i\rangle^{B_i}\nonumber\\
\label{eq:orth}&=\sum_{\mathbf{t}\in\mathbb{N}_{0}^n}a_{\mathbf{t}}\bigotimes_{i=1}^n\sqrt{\binom{t_i}{x_i}\eta_w^{t_i-x_i}(1-\eta_w)^{x_i}}\ket{t_i-x_i}^{W_i},
\end{align}
where equation \eqref{eq:orth} is due to the orthogonality of the Fock states.
Thus, 
\begin{align}
\label{eq:sx}{\vphantom{\ket{\phi}}}^{W^n}\hspace{-4pt}\bra{\mathbf{s}}\hat{\rho}^{W^n}\ket{\mathbf{s}}^{W^n}&=\sum_{\mathbf{x}\in\mathbb{N}_{0}^n}\left|{\vphantom{\ket{\phi}}}^{W^n}\hspace{-4pt}\bra{\mathbf{s}}{\vphantom{\ket{\phi}}}^{B^n}\hspace{-2pt}\langle\mathbf{x}|\phi\rangle^{W^nB^n}\right|^2,
\end{align}
where
\begin{align}
{\vphantom{\ket{\phi}}}^{W^n}\hspace{-4pt}\bra{\mathbf{s}}{\vphantom{\ket{\phi}}}^{B^n}\hspace{-2pt}\langle\mathbf{x}|\phi\rangle^{W^nB^n}&=\sum_{\mathbf{t}\in\mathbb{N}_{0}^n}a_{\mathbf{t}}\prod_{i=1}^n\sqrt{\binom{t_i}{x_i}\eta_w^{t_i-x_i}(1-\eta_w)^{x_i}}\delta_{s_i,t_i-x_i}\nonumber\\
\label{eq:x+s=k}&=a_{\mathbf{x}+\mathbf{s}}\prod_{i=1}^n\sqrt{\binom{x_i+s_i}{x_i}\eta_w^{s_i}(1-\eta_w)^{x_i}},
\end{align}
with $\delta_{a,b}=\left\{\begin{array}{rl}1&\text{if~}a=b\\0&\text{otherwise}\end{array}\right.$.
Substituting $\mathbf{k}=\mathbf{x}+\mathbf{s}$ into equation
  \eqref{eq:x+s=k} and substituting the right-hand side (RHS) of 
  \eqref{eq:x+s=k} into equation \eqref{eq:sx} yields
\begin{align}
{\vphantom{\ket{\phi}}}^{W^n}\hspace{-4pt}\bra{\mathbf{s}}\hat{\rho}^{W^n}\ket{\mathbf{s}}^{W^n}&=\sum_{\mathbf{k}\in\mathbb{N}_{0}^n}\left|a_{\mathbf{k}}\prod_{i=1}^n\sqrt{\binom{k_i}{s_i}\eta_w^{s_i}(1-\eta_w)^{k_i-s_i}}\right|^2\nonumber\\
\label{eq:realprod}&=\sum_{\mathbf{k}\in\mathbb{N}_{0}^n}\left|a_{\mathbf{k}}\right|^2\prod_{i=1}^n\binom{k_i}{s_i}\eta_w^{s_i}(1-\eta_w)^{k_i-s_i}
\end{align}
where equation \eqref{eq:realprod} is due to $\eta_w\in[0,1)$.
\end{proof}
\\
\begin{proof}\emph{(Theorem \ref{th:pureloss_s})}
Alice sends one of $2^M$ (equally likely) $M$-bit messages by choosing an 
  element from an arbitrary codebook $\{\hat{\rho}^{A^n}_x,x=1,\ldots,2^M\}$,
  where a state 
  $\hat{\rho}^{A^n}_x=\ket{\psi_x}^{A^nA^n}\hspace{-4pt}\bra{\psi_x}$
  encodes an $M$-bit message $W_x$.
$\ket{\psi_x}^{A^n}=\sum_{\mathbf{k}\in\mathbb{N}_{0}^n}a_{\mathbf{k}}(x)\ket{\mathbf{k}}^{A^n}$ is a general 
  $n$-mode pure state, where
  $\ket{\mathbf{k}}\equiv\ket{k_1}\otimes\ket{k_2}\otimes\cdots\otimes\ket{k_n}$
  is a tensor product of $n$ Fock states. 
We limit our analysis to pure input states since, by convexity, using mixed
  states as inputs can only degrade the performance (since that is
  equivalent to transmitting a randomly chosen pure state from an ensemble and 
  discarding the knowledge of that choice).

Let Willie use an ideal SPD on all $n$ modes, given by positive operator-valued
  measure (POVM)
  $\left\{\ket{0}\bra{0},\sum_{j=1}^\infty\ket{j}\bra{j}\right\}^{\otimes n}$.
When $W_u$ is transmitted, Willie's hypothesis test reduces to 
  discriminating between the states
\begin{align}
\label{eq:pureloss_rho0}\hat{\rho}_0^{W^n}&=\ket{\mathbf{0}}^{W^nW^n}\hspace{-4pt}\bra{\mathbf{0}} \;{\text{and}}\\
\label{eq:pureloss_rho1}\hat{\rho}_1^{W^n}&=\hat{\rho}^{W^n}_u,
\end{align}
where $\hat{\rho}^{W^n}_u$ is the output state of a pure-loss channel with 
  transmissivity $\eta_w$ corresponding to an input state $\hat{\rho}^{A^n}_u$.
Thus, Willie's average error probability is:
\begin{align}
\label{eq:pew}\mathbb{P}_e^{(w)}&=\frac{1}{2^{M+1}}\sum_{u=1}^{2^M}{\vphantom{\ket{\phi}}}^{W^n}\hspace{-4pt}\bra{\mathbf{0}}\hat{\rho}^{W^n}_u\ket{\mathbf{0}}^{W^n},
\end{align}
  since messages are sent equiprobably.
Note that the error is entirely due to missed codeword detections,
  as Willie's receiver never raises a false alarm.
By Lemma \ref{lemma:bs_on_phi},
\begin{align}
{\vphantom{\ket{\phi}}}^{W^n}\hspace{-4pt}\bra{\mathbf{0}}\hat{\rho}^W_u\ket{\mathbf{0}}^{W^n}&=\sum_{\mathbf{k}\in\mathbb{N}_{0}^n}\left|a_{\mathbf{k}}(u)\right|^2(1-\eta_w)^{\sum_{i=1}^n k_i}\nonumber\\
&\leq\left|a_{\mathbf{0}}(u)\right|^2 + (1-\left|a_{\mathbf{0}}(u)\right|^2)(1-\eta_w)\nonumber\\
\label{eq:pgf_ub}&=1-\eta_w\left(1-\left|a_{\mathbf{0}}(u)\right|^2\right).
\end{align}
Substituting equation \eqref{eq:pgf_ub} into equation \eqref{eq:pew} yields:
\begin{align*}
\mathbb{P}_e^{(w)}&\leq\frac{1}{2}-\frac{\eta_w}{2}\left(1-\frac{1}{2^M}\sum_{u=1}^{2^M}\left|a_{\mathbf{0}}(u)\right|^2\right).
\end{align*}
Thus, to ensure $\mathbb{P}_e^{(w)}\geq\frac{1}{2}-\epsilon$, Alice must use a 
  codebook with the probability of transmitting zero photons:
\begin{align}
\label{eq:restrict_a0}\frac{1}{2^M}\sum_{u=1}^{2^M}\left|a_{\mathbf{0}}(u)\right|^2&\geq1-\frac{2\epsilon}{\eta_w}.
\end{align}
Equation \eqref{eq:restrict_a0} can be restated as an upper
  bound on the probability of transmitting one or more photons:
\begin{align}
\label{eq:restrict_not_a0}\frac{1}{2^M}\sum_{u=1}^{2^M}\left(1-\left|a_{\mathbf{0}}(u)\right|^2\right)&\leq\frac{2\epsilon}{\eta_w}.
\end{align}
Now we show that there exists an interval $(0,\epsilon_0]$, $\epsilon_0>0$
  such that if $\epsilon\in(0,\epsilon_0]$, Bob's average decoding error
  probability $\mathbb{P}_e^{(b)}\geq\delta_0$ where $\delta_0>0$, thus making 
  covert communication over a pure-loss channel unreliable.

Denote by $E_{u\rightarrow v}$ the event that the transmitted message
  $W_u$ is decoded by Bob as $W_v\neq W_u$.
Given that $W_u$ is transmitted, the decoding error probability 
  is the probability of the union of events 
  $\cup_{v=0,v\neq u}^{2^M}E_{u\rightarrow v}$.
Let Bob choose a POVM $\{\Lambda_j^*\}$ that minimizes the average probability
  of error over $n$ optical channel modes:
\begin{align}
\label{eq:pe_b_1}\mathbb{P}_e^{(b)}&=\inf_{\{\Lambda_j\}}\frac{1}{2^M}\sum_{u=1}^{2^M}\mathbb{P}\left(\cup_{v=0,v\neq u}^{2^M}E_{u\rightarrow v}\right).
\end{align}
Now consider a codebook that meets the necessary condition for covert 
  communication given in equation \eqref{eq:restrict_not_a0}.
Define the subset of this codebook 
  $\left\{\hat{\rho}^{A^n}_u,u\in\mathcal{A}\right\}$ where
  $\mathcal{A}=\left\{u:1-|a_{\mathbf{0}}(u)|^2\leq \frac{4\epsilon}{\eta_w}\right\}$.
We lower-bound \eqref{eq:pe_b_1} as follows:
\begin{align}
\label{eq:pe_b_2}\mathbb{P}_e^{(b)}&=\frac{1}{2^M}\sum_{u\in\bar{\mathcal{A}}}\mathbb{P}\left(\cup_{v=0,v\neq u}^{2^M}E_{u\rightarrow v}\right)+\frac{1}{2^M}\sum_{u\in\mathcal{A}}\mathbb{P}\left(\cup_{v=0,v\neq u}^{2^M}E_{u\rightarrow v}\right)\\
\label{eq:pe_b_3}&\geq\frac{1}{2^M}\sum_{u\in\mathcal{A}}\mathbb{P}\left(\cup_{v=0,v\neq u}^{2^M}E_{u\rightarrow v}\right),
\end{align}
where the probabilities in equation \eqref{eq:pe_b_2} are with respect to the 
  POVM $\{\Lambda_j^*\}$ that minimizes equation \eqref{eq:pe_b_1} over the 
  entire codebook.
Without loss of generality, let's assume that $|\mathcal{A}|$ is even, and 
  split $\mathcal{A}$ into two equal-sized subsets $\mathcal{A}^{(\text{left})}$
  and $\mathcal{A}^{(\text{right})}$ (formally, 
  $\mathcal{A}^{(\text{left})}\cup\mathcal{A}^{(\text{right})}=\mathcal{A}$, 
  $\mathcal{A}^{(\text{left})}\cap\mathcal{A}^{(\text{right})}=\emptyset$, 
  and $|\mathcal{A}^{(\text{left})}|=|\mathcal{A}^{(\text{right})}|$).
Let $g:\mathcal{A}^{(\text{left})}\rightarrow\mathcal{A}^{(\text{right})}$ be
  a bijection.
We can thus re-write \eqref{eq:pe_b_3}:
\begin{align}
\mathbb{P}_e^{(b)}&\geq\frac{1}{2^M}\sum_{u\in\mathcal{A}^{(\text{left})}}2\left(\frac{\mathbb{P}\left(\cup_{v=0,v\neq u}^{2^M}E_{u\rightarrow v}\right)}{2}+\frac{\mathbb{P}\left(\cup_{v=0,v\neq g(u)}^{2^M}E_{g(u)\rightarrow v}\right)}{2}\right)\nonumber\\
\label{eq:pe_b_lb}&\geq\frac{1}{2^M}\sum_{u\in\mathcal{A}^{(\text{left})}}2\left(\frac{\mathbb{P}\left(E_{u\rightarrow g(u)}\right)}{2}+\frac{\mathbb{P}\left(E_{g(u)\rightarrow u}\right)}{2}\right),
\end{align}
where the second lower bound is due to the events
  $E_{u\rightarrow g(u)}$ and $E_{g(u)\rightarrow u}$ being contained
  in the unions $\cup_{v=0,v\neq u}^{2^M}E_{u\rightarrow v}$ and 
  $\cup_{v=0,v\neq g(u)}^{2^M}E_{g(u)\rightarrow v}$, respectively.
The summation term in equation \eqref{eq:pe_b_lb},
\begin{align}
\label{eq:p_e_u}\mathbb{P}_e(u)\equiv\frac{\mathbb{P}\left(E_{u\rightarrow g(u)}\right)}{2}+\frac{\mathbb{P}\left(E_{g(u)\rightarrow u}\right)}{2},
\end{align}
  is Bob's average probability of error when Alice only sends 
  messages $W_u$ and $W_{g(u)}$ equiprobably.
We thus reduce the analytically intractable problem of
  discriminating between many states in equation \eqref{eq:pe_b_1} to a 
  quantum binary hypothesis test.

The lower bound on the probability of error in discriminating two received 
  codewords is obtained by lower-bounding the probability of error in 
  discriminating two codewords \emph{before} they are sent (this is equivalent 
  to Bob having an unattenuated unity-transmissivity channel from Alice).
Recalling that 
  $\hat{\rho}^{A^n}_u=\ket{\psi_u}^{A^nA^n}\hspace{-4pt}\bra{\psi_u}$ and
  $\hat{\rho}^{A^n}_{g(u)}=\ket{\psi_{g(u)}}^{A^nA^n}\hspace{-4pt}\bra{\psi_{g(u)}}$
  are pure states, the lower bound on the probability of error in
  discriminating between $\ket{\psi_u^{A^n}}$ and $\ket{\psi_{g(u)}^{A^n}}$
  is~\cite[Chapter IV.2 (c), Equation (2.34)]{helstrom76quantumdetect}:
\begin{align}
\label{eq:p_e_bintest}\mathbb{P}_e(u)&\geq\left.\left[1-\sqrt{1-F\left(\ket{\psi_u}^{A^n},\ket{\psi_{g(u)}}^{A^n}\right)}\right]\middle/2\right.,
\end{align}
  where  $F(\ket{\psi},\ket{\phi})=|\left<\psi|\phi\right>|^2$  is the fidelity
  between the pure states $\ket{\psi}$ and $\ket{\phi}$. 
Lower-bounding $F\left(\ket{\psi_u}^{A^n},\ket{\psi_{g(u)}}^{A^n}\right)$
  lower-bounds the RHS of equation \eqref{eq:p_e_bintest}.
For pure states $\ket{\psi}$ and $\ket{\phi}$, 
  $F(\ket{\psi},\ket{\phi})=1-\left(\frac{1}{2}\|\ket{\psi}\bra{\psi}-\ket{\phi}\bra{\phi}\|_1\right)^2$,
  where $\|\rho-\sigma\|_1$ is the trace 
  distance~\cite[Equation (9.134)]{wilde13quantumit}.
Thus,
\begin{align}
F\left(\ket{\psi_u}^{A^n},\ket{\psi_{g(u)}}^{A^n}\right)&=1-\left(\frac{1}{2}\|\hat{\rho}^{A^n}_u-\hat{\rho}^{A^n}_{g(u)}\|_1\right)^2\nonumber\\
&\geq1-\left(\frac{\|\hat{\rho}^{A^n}_u-\ket{\mathbf{0}}^{A^nA^n}\hspace{-4pt}\bra{\mathbf{0}}\|_1}{2}+\frac{\|\hat{\rho}^{A^n}_{g(u)}-\ket{\mathbf{0}}^{A^nA^n}\hspace{-4pt}\bra{\mathbf{0}}\|_1}{2}\right)^2\nonumber\\
\label{eq:back_to_F}&=1-\left(\sqrt{1-\left|{\vphantom{\ket{\phi}}}^{A^n}\hspace{-4pt}\left<\mathbf{0}|\psi_u\right>^{A^n}\right|^2}+\sqrt{1-\left|{\vphantom{\ket{\phi}}}^{A^n}\hspace{-4pt}\left<\mathbf{0}|\psi_{g(u)}\right>^{A^n}\right|^2}\right)^2,
\end{align}
  where the inequality is due to the triangle inequality 
  for trace distance.
Substituting \eqref{eq:back_to_F} into \eqref{eq:p_e_bintest} yields:
\begin{align}
\label{eq:p_e_bintest2}\mathbb{P}_e(u)&\geq\left.\left[1-\sqrt{1-\left|{\vphantom{\ket{\phi}}}^{A^n}\hspace{-4pt}\left<\mathbf{0}|\psi_u\right>^{A^n}\right|^2}-\sqrt{1-\left|{\vphantom{\ket{\phi}}}^{A^n}\hspace{-4pt}\left<\mathbf{0}|\psi_{g(u)}\right>^{A^n}\right|^2}\right]\middle/2\right..
\end{align}
Since $\left|{\vphantom{\ket{\phi}}}^{A^n}\hspace{-4pt}\left<\mathbf{0}|\psi_u\right>^{A^n}\right|^2=|a_{\mathbf{0}}(u)|^2$ and,
  by the construction of $\mathcal{A}$, 
  $1-|a_{\mathbf{0}}(u)|^2\leq\frac{4\epsilon}{\eta_w}$ and
  $1-|a_{\mathbf{0}}(g(u))|^2\leq\frac{4\epsilon}{\eta_w}$, we have:
\begin{align}
\label{eq:p_e_bintest3}\mathbb{P}_e(u)&\geq\frac{1}{2}-2\sqrt{\frac{\epsilon}{\eta_w}}.
\end{align}
Recalling the definition of $\mathbb{P}_e(u)$ in equation \eqref{eq:p_e_u}, 
  we substitute \eqref{eq:p_e_bintest3} into \eqref{eq:pe_b_lb} to obtain:
\begin{align}
\label{eq:p_e_lb2}\mathbb{P}_e^{(b)}&\geq\frac{|\mathcal{A}|}{2^M}\left(\frac{1}{2}-2\sqrt{\frac{\epsilon}{\eta_w}}\right),
\end{align}
Now, re-stating the condition for covert communication 
  \eqref{eq:restrict_not_a0} yields:
\begin{align}
\frac{2\epsilon}{\eta_w}&\geq\frac{1}{2^M}\sum_{u\in\overline{\mathcal{A}}}\left(1-\left|a_{\mathbf{0}}(u)\right|^2\right)\nonumber\\
\label{eq:frac_lb1}&\geq\frac{\left(2^M-|\mathcal{A}|\right)}{2^M}\frac{4\epsilon}{\eta_w}
\end{align}
  with equality \eqref{eq:frac_lb1} due to 
  $1-\left|a_{\mathbf{0}}(u)\right|^2> \frac{4\epsilon}{\eta_w}$ 
  for all codewords in $\overline{\mathcal{A}}$ by the construction of 
  $\mathcal{A}$.
Solving inequality in \eqref{eq:frac_lb1} for $\frac{|\mathcal{A}|}{2^M}$ yields
  the lower bound on the fraction of the codewords in $\mathcal{A}$,
\begin{align}
\label{eq:frac_lb2}\frac{|\mathcal{A}|}{2^M}&\geq\frac{1}{2}.
\end{align}
Combining equations \eqref{eq:p_e_lb2} and \eqref{eq:frac_lb2} results in a 
  positive lower bound on Bob's probability of decoding
  error 
  $\mathbb{P}_e^{(b)}\geq\frac{1}{4}-\sqrt{\frac{\epsilon}{\eta_w}}$
  for $\epsilon\in\left(0,\frac{\eta_w}{16}\right]$ and any $n$,
  and demonstrates that reliable covert communication over a pure-loss channel
  is impossible.
\end{proof}
\\

\noindent\emph{Remark}---The minimum probability of discrimination error 
  between the states given by equations \eqref{eq:pureloss_rho0} and 
  \eqref{eq:pureloss_rho1} satisfies~\cite[Section III]{pirandola08bounds}:
\begin{align*}
\frac{1-\sqrt{1-{\vphantom{\ket{\phi}}}^{W^n}\hspace{-4pt}\bra{\mathbf{0}}\hat{\rho}^{W^n}_u\ket{\mathbf{0}}^{W^n}}}{2}\leq\min\mathbb{P}^{(w)}_e\leq \frac{1}{2}{\vphantom{\ket{\phi}}}^{W^n}\hspace{-4pt}\bra{\mathbf{0}}\hat{\rho}^{W^n}_u\ket{\mathbf{0}}^{W^n}.
\end{align*}
Since $\frac{{\vphantom{\ket{\phi}}}^{W^n}\hspace{-2pt}\bra{\mathbf{0}}\hat{\rho}^{W^n}_u\ket{\mathbf{0}}^{W^n}}{4}\leq \frac{1-\sqrt{1-{\vphantom{\ket{\phi}}}^{W^n}\hspace{-2pt}\bra{\mathbf{0}}\hat{\rho}^{W^n}_u\ket{\mathbf{0}}^{W^n}}}{2}$, 
  the error probability for the SPD is at most twice that of an optimal 
  discriminator.
Thus, the SPD is an asymptotically optimal detector when the channel from
  Alice is pure-loss.
Since the photon number resolving (PNR) receiver, given by the POVM elements
  $\left\{\ket{0}\bra{0},\ket{1}\bra{1},\ket{2}\bra{2},\ldots\right\}^{\otimes n}$,
  could be used to mimic the SPD with the detection event threshold set at one
  photon, the PNR receiver is also asymptotically optimal in this scenario.
\\
\begin{theorem}\emph{(Square root law for the thermal noise
channel)}\label{th:thermal_s}

\end{theorem}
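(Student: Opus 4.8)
\begin{proofidea}\emph{(Theorem \ref{th:thermal_s})}
The plan is to split the claim into a \emph{covertness} bound and a \emph{reliability} bound, and to use coherent-state (laser-light) modulation so that both reduce to tractable Gaussian calculations. Concretely, I would have Alice draw each of the $n$ symbols of every codeword as an independent coherent state $\ket{\alpha}$ with $\alpha$ distributed as a circularly-symmetric complex Gaussian of variance $\bar{n}=c/\sqrt{n}$, for a constant $c>0$ fixed later; the resulting random codebook of $2^M$ codewords is the shared secret held by Alice and Bob, while its \emph{construction} (the input distribution and rate) is revealed to Willie, and Bob decodes with a homodyne receiver. I will then show that $M=\mathcal{O}(\sqrt{n})$ bits can be carried, that Bob's error probability vanishes, and that no quantum measurement lets Willie beat $\mathbb{P}_e^{(w)}=\tfrac12-\epsilon$.

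For covertness, first invoke the Helstrom bound~\cite{helstrom76quantumdetect} so that for \emph{any} POVM Willie applies to all $n$ of his modes, $\mathbb{P}_e^{(w)}\ge\tfrac12\bigl(1-\tfrac12\|\hat\rho_1^{W^n}-\hat\rho_0^{W^n}\|_1\bigr)$, where $\hat\rho_0^{W^n}$ and $\hat\rho_1^{W^n}$ are Willie's reduced states (obtained from the beamsplitter relation as in Lemma~\ref{lemma:bs_on_phi}) when Alice is silent and transmitting. Because the codebook is secret, Willie's transmitting state is the expectation over the codebook, i.e.\ the $n$-fold product of the average single-mode state, and this holds for every realized codebook. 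The key simplification is that a Gaussian average of thermal states displaced by $\sqrt{\eta_w}\,\alpha$ is again a thermal state: hence $\hat\rho_0^{W^n}=\hat\rho_{N_0}^{\otimes n}$ and $\hat\rho_1^{W^n}=\hat\rho_{N_0+\eta_w\bar{n}}^{\otimes n}$, where $N_0>0$ is Willie's thermal floor, strictly positive precisely because $\bar{n}_T>0$. I would then bound the trace distance by the quantum Pinsker inequality~\cite{wilde13quantumit}, $\tfrac12\|\hat\rho_1^{W^n}-\hat\rho_0^{W^n}\|_1\le\sqrt{\tfrac12 D(\hat\rho_1^{W^n}\|\hat\rho_0^{W^n})}$, and use additivity of relative entropy over identical factors, $D(\hat\rho_1^{W^n}\|\hat\rho_0^{W^n})=n\,D(\hat\rho_{N_0+\eta_w\bar{n}}\|\hat\rho_{N_0})$.

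The crux, and the step I expect to be the main obstacle, is the per-mode relative entropy. Using the closed form for thermal states and expanding in the small displacement $\eta_w\bar{n}$, the first-order term cancels and $D(\hat\rho_{N_0+\eta_w\bar{n}}\|\hat\rho_{N_0})=\frac{(\eta_w\bar{n})^2}{2N_0(N_0+1)}+\mathcal{O}(\bar{n}^3)$, so $D(\hat\rho_1^{W^n}\|\hat\rho_0^{W^n})=\mathcal{O}\!\bigl(n\bar{n}^2\bigr)=\mathcal{O}(c^2)$; choosing $c$ small enough forces this below $8\epsilon^2$ and yields $\mathbb{P}_e^{(w)}\ge\tfrac12-\epsilon$. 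This is the conceptual heart of the result: the vanishing of the linear term makes the penalty scale as $\bar{n}^2$ rather than $\bar{n}$, which is exactly what permits $\bar{n}=\mathcal{O}(1/\sqrt{n})$, and the finiteness of the coefficient hinges on the full-rank floor $N_0(N_0+1)>0$, i.e.\ on $\bar{n}_T>0$. Setting $N_0=0$ makes the coefficient diverge, consistent with the pure-loss impossibility of Theorem~\ref{th:pureloss_s}.

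For reliability, homodyne detection reduces Bob's thermal-loss channel to a real additive-white-Gaussian-noise channel whose received amplitude is proportional to $\sqrt{\eta_b}\,\alpha$ over a fixed noise variance set by vacuum plus the transmitted-through thermal noise, so the per-symbol signal-to-noise ratio is $\Theta(\bar{n})=\Theta(c/\sqrt{n})$. The classical channel-coding theorem~\cite{shannon48it} then gives a mutual information per use of $\tfrac12\log(1+\mathrm{SNR})=\Theta(\bar{n})$, so a rate supporting $M=n\cdot\Theta(\bar{n})=\Theta(\sqrt{n})$ bits is achievable with vanishing expected error over the Gaussian ensemble. Since the covertness bound above holds for every realized codebook while the expected decoding error is small, a standard random-coding/expurgation argument extracts one codebook that is simultaneously covert and reliable; the only remaining routine task is to check that a single constant $c$ can be chosen to meet the covertness target while keeping the rate $\Theta(\sqrt{n})$ strictly positive.
\end{proofidea}
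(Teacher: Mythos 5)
Your proposal is correct and follows essentially the same route as the paper's proof: a secret random codebook of i.i.d.\ circularly-symmetric Gaussian coherent states with $\bar{n}=\mathcal{O}(1/\sqrt{n})$, covertness via the Helstrom bound, reduction of Willie's two hypotheses to $n$-fold product thermal states, quantum Pinsker plus additivity of relative entropy, and the crucial observation that the per-mode relative entropy between thermal states is quadratic (not linear) in $\eta_w\bar{n}$ with coefficient $\bigl(2N_0(N_0+1)\bigr)^{-1}$, followed by homodyne-detection AWGN random coding for Bob. The only cosmetic differences are your ordering $D(\hat\rho_1\Vert\hat\rho_0)$ versus the paper's $D(\hat\rho_0\Vert\hat\rho_1)$ and your $\mathcal{O}(\bar{n}^3)$ remainder control, where the paper instead uses Taylor's theorem with the sign of the fourth-order term to get an exact upper bound by the quadratic term.
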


\noindent First, we define quantum relative entropy and prove a lemma:
\begin{definition}\emph{\textbf{Quantum relative entropy}} between states 
  $\hat{\rho}_0$ and $\hat{\rho}_1$ is 
  $D(\hat{\rho}_0\|\hat{\rho}_1)\equiv\trace\{\hat{\rho}_0(\ln\hat{\rho}_0-\ln\hat{\rho}_1)\}$.
\end{definition}

\begin{lemma}[Quantum relative entropy between two thermal states]
\label{lemma:qre_thermal}
\begingroup
\setlength{\thinmuskip}{1mu}
\setlength{\thickmuskip}{2mu}
If $\hat{\rho}_0=\sum_{n=0}^\infty \frac{{\bar{n}_0}^n}{(1+\bar{n}_0)^{1+n}}\ket{n}\bra{n}$
and $\hat{\rho}_1=\sum_{n=0}^\infty \frac{{\bar{n}_1}^n}{(1+\bar{n}_1)^{1+n}}\ket{n}\bra{n}$,
then 
$D(\hat{\rho}_0\|\hat{\rho}_1)=\bar{n}_0\ln\frac{\bar{n}_0(1+\bar{n}_1)}{\bar{n}_1(1+\bar{n}_0)}+\ln\frac{1+\bar{n}_1}{1+\bar{n}_0}$
\endgroup
\end{lemma}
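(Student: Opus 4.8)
The plan is to exploit the fact that $\hat{\rho}_0$ and $\hat{\rho}_1$ are diagonal in the same (Fock) basis, which collapses the quantum relative entropy to the ordinary classical Kullback--Leibler divergence between two geometric distributions. Writing $p_j(k)\equiv\frac{\bar{n}_j^{\,k}}{(1+\bar{n}_j)^{1+k}}$ for the Fock-state populations of $\hat{\rho}_j$ (I rename the summation index to $k$ to avoid collision with $\bar{n}_0,\bar{n}_1$), the operators $\ln\hat{\rho}_0$ and $\ln\hat{\rho}_1$ are simultaneously diagonal, so that $\trace\{\hat{\rho}_0(\ln\hat{\rho}_0-\ln\hat{\rho}_1)\}=\sum_{k=0}^{\infty}p_0(k)\,[\ln p_0(k)-\ln p_1(k)]$ with no matrix-logarithm subtleties.

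First I would compute $\ln p_j(k)=k\ln\frac{\bar{n}_j}{1+\bar{n}_j}-\ln(1+\bar{n}_j)$, so that the log-ratio splits cleanly into a part linear in $k$ and a constant part:
\[
\ln p_0(k)-\ln p_1(k)=k\left(\ln\frac{\bar{n}_0}{1+\bar{n}_0}-\ln\frac{\bar{n}_1}{1+\bar{n}_1}\right)+\ln\frac{1+\bar{n}_1}{1+\bar{n}_0}.
\]

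Second, I would take the $p_0$-expectation term by term, using only two elementary moments of the geometric law: normalization $\sum_{k}p_0(k)=1$ and the mean $\sum_{k}k\,p_0(k)=\bar{n}_0$. The latter is exactly the identification of the thermal-state parameter with the mean photon number, obtained by writing $p_0(k)=(1-x)x^{k}$ with $x=\bar{n}_0/(1+\bar{n}_0)$ and summing $\sum_{k}k(1-x)x^{k}=x/(1-x)=\bar{n}_0$. This yields $D(\hat{\rho}_0\|\hat{\rho}_1)=\bar{n}_0\left(\ln\frac{\bar{n}_0}{1+\bar{n}_0}-\ln\frac{\bar{n}_1}{1+\bar{n}_1}\right)+\ln\frac{1+\bar{n}_1}{1+\bar{n}_0}$, and combining the two logarithms in the first term via $\ln\frac{\bar{n}_0}{1+\bar{n}_0}-\ln\frac{\bar{n}_1}{1+\bar{n}_1}=\ln\frac{\bar{n}_0(1+\bar{n}_1)}{\bar{n}_1(1+\bar{n}_0)}$ produces the claimed expression.

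There is essentially no hard step here: the result is routine once one observes the shared eigenbasis. The only points requiring care are justifying the reduction to a classical sum (both states commute, so the logarithm acts entrywise) and confirming the geometric-series mean $\sum_{k}k\,p_0(k)=\bar{n}_0$; the mildest friction is purely notational, namely keeping the Fock index distinct from the mean-photon-number symbols, which the renaming to $k$ resolves.
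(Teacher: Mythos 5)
Your proposal is correct and follows essentially the same route as the paper: both exploit the shared Fock eigenbasis to reduce the quantum relative entropy to a classical sum over geometric distributions, evaluated using exactly the two identities you cite (normalization and the geometric mean $\sum_k k\,p_0(k)=\bar{n}_0$). The only cosmetic difference is organizational --- the paper writes $D(\hat{\rho}_0\|\hat{\rho}_1)=-\trace\{\hat{\rho}_0\ln\hat{\rho}_1\}-S(\hat{\rho}_0)$ and quotes the standard thermal-state von Neumann entropy formula for the second term while computing the cross term by the same series manipulations, whereas you evaluate the single classical Kullback--Leibler sum directly.
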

\begin{proof}
Express
  $D(\hat{\rho}_0\|\hat{\rho}_1)=-\trace\{\hat{\rho}_0(\ln\hat{\rho}_1)\}-S(\hat{\rho}_0)$, 
  where $S(\hat{\rho}_0)\equiv-\trace[\hat{\rho}_0\ln\hat{\rho}_0]$ 
  is the von Neumann entropy of the state $\hat{\rho}_0$:
\begin{align}
\label{eq:vonNeumann_thermal}
S(\hat{\rho}_0)&=\ln(1+\bar{n}_0)+\bar{n}_0\ln\left(1+\frac{1}{\bar{n}_0}\right).
\end{align}
Now, 
\begin{align}
\trace[\hat{\rho}_0\ln\hat{\rho}_1]&=\trace\left[\left(\sum_{n=0}^\infty \frac{\bar{n}_0^n}{(1+\bar{n}_0)^{1+n}}\ket{n}\bra{n}\right)\left(\sum_{n=0}^\infty \ln\frac{\bar{n}_1^n}{(1+\bar{n}_1)^{1+n}}\ket{n}\bra{n}\right)\right]\nonumber\\
&=\sum_{n=0}^\infty \frac{\bar{n}_0^n}{(1+\bar{n}_0)^{1+n}}\ln\frac{\bar{n}_1^n}{(1+\bar{n}_1)^{1+n}}\nonumber\\
&=\frac{1}{1+\bar{n}_0}\ln\frac{1}{1+\bar{n}_1}\sum_{n=0}^\infty \left(\frac{\bar{n}_0}{1+\bar{n}_0}\right)^n+\ln\frac{\bar{n}_1}{1+\bar{n}_1}\sum_{n=0}^\infty \frac{n}{1+\bar{n}_0}\cdot\left(\frac{\bar{n}_0}{1+\bar{n}_0}\right)^n\nonumber\\
\label{eq:geom}&=\ln\frac{1}{1+\bar{n}_1}+\bar{n}_0\ln\frac{\bar{n}_1}{1+\bar{n}_1}
\end{align}
where \eqref{eq:geom} is due to the geometric series 
  $\sum_{n=0}^\infty \left(\frac{\bar{n}_0}{1+\bar{n}_0}\right)^n=\left(1-\frac{\bar{n}_0}{1+\bar{n}_0}\right)^{-1}$
  and
  $\sum_{n=0}^\infty \frac{n}{1+\bar{n}_0} \left(\frac{\bar{n}_0}{1+\bar{n}_0}\right)^n=\bar{n}_0$
  being the expression for the mean of geometrically-distributed random
  variable $X\sim\text{Geom}\left(\frac{1}{1+\bar{n}_0}\right)$.
Combining \eqref{eq:vonNeumann_thermal} and \eqref{eq:geom}
  yields the lemma.
\end{proof}
\\
\begin{proof}\emph{(Theorem \ref{th:thermal_s})}
\textit{Construction:} 
Let Alice use a zero-mean isotropic Gaussian-distributed coherent state input
  $\left\{p(\alpha),\ket{\alpha}\right\}$, where $\alpha \in {\mathbb C}$, 
  $p(\alpha) = e^{-|\alpha|^2/{\bar n}}/{\pi {\bar n}}$ 
  with mean photon number per symbol
  $\bar{n}=\int_{\mathbb C}|\alpha|^2 p(\alpha){\rm d}^2\alpha$. 
Alice encodes $M$-bit blocks of input into codewords of length
  $n$ symbols by generating $2^{M}$ codewords 
  $\{\bigotimes_{i=1}^n\ket{\alpha_i}_k\}_{k=1}^{2^{M}}$, each
  according to $p(\bigotimes_{i=1}^n\ket{\alpha_i})=\prod_{i=1}^{n}p(\alpha_i)$,
  where $\bigotimes_{i=1}^n\ket{\alpha_i}=\ket{\alpha_1\ldots\alpha_{n}}$ is an
  $n$-mode tensor-product coherent state.
The codebook is used only once to send a single message and is kept secret from 
  Willie, though he knows how it is constructed.

\textit{Analysis (Willie):}
Since Willie does not have access to Alice's codebook, Willie has to 
  discriminate between the following $n$-copy quantum states:
\begin{align*}
\hat{\rho}_0^{\otimes n}&=\left(\sum_{i=0}^\infty \frac{(\eta_b \bar{n}_T)^i}{(1+\eta_b \bar{n}_T)^{1+i}}\ket{i}\bra{i}\right)^{\otimes n}, \;{\text{and}}\\
\hat{\rho}_1^{\otimes n}&=\left(\sum_{i=0}^\infty \frac{(\eta_w\bar{n}+\eta_b \bar{n}_T)^i}{(1+\eta_w\bar{n}+\eta^{(n)} \bar{n}_T)^{1+i}}\ket{i}\bra{i}\right)^{\otimes n}.
\end{align*}
Willie's average probability of error in discriminating between
  $\hat{\rho}_0^{\otimes n}$ and $\hat{\rho}_1^{\otimes n}$ 
  is~\cite[Section 9.1.4]{wilde13quantumit}:
\begin{align*}
\mathbb{P}_{e}^{(w)}&\geq\frac12\left[1 - \frac12\| \hat{\rho}_1^{\otimes n} - \hat{\rho}_0^{\otimes n} \|_1\right],
\end{align*}
  where the minimum in this case is attained by a PNR detection.
The trace distance $\|\hat{\rho}_0-\hat{\rho}_1\|_1$ between states 
  $\hat{\rho}_1$ and $\hat{\rho}_1$ is upper-bounded the quantum relative 
  entropy (QRE) using quantum Pinsker's 
  Inequality~\cite[Theorem 11.9.2]{wilde13quantumit} as follows:
\begin{align*}
\|\hat{\rho}_0-\hat{\rho}_1\|_1&\leq\sqrt{2D(\hat{\rho}_0\|\hat{\rho}_1)},
\end{align*}
Thus,
\begin{align}
\label{eq:qre_pe_lb}\mathbb{P}_{e}^{(w)}&\geq\frac{1}{2}-\sqrt{\frac{1}{8}D(\hat{\rho}_0^{\otimes n}\|\hat{\rho}_1^{\otimes n})}.
\end{align}
QRE is additive for tensor product states:
\begin{align}
\label{eq:qre_additive}D(\hat{\rho}_0^{\otimes n}\|\hat{\rho}_1^{\otimes n})&=nD(\hat{\rho}_0\|\hat{\rho}_1). 
\end{align}
By Lemma \ref{lemma:qre_thermal},
\begin{align}
\label{eq:kl_therm}D(\hat{\rho}_0\|\hat{\rho}_1)&=\eta_b \bar{n}_T\ln\frac{(1+\eta_w\bar{n}+\eta_b \bar{n}_T)\eta_b \bar{n}_T}{(\eta_w\bar{n}+\eta_b \bar{n}_T)(1+\eta_b \bar{n}_T)}+\ln\frac{1+\eta_w\bar{n}+\eta_b \bar{n}_T}{1+\eta_b \bar{n}_T}.
\end{align}
The first two terms of the Taylor series expansion of the RHS of
  \eqref{eq:kl_therm} with respect to $\bar{n}$ at $\bar{n}=0$ are zero and 
  the fourth term is negative.
Thus, using Taylor's Theorem with the remainder, we can upper-bound 
  equation \eqref{eq:kl_therm} by the third term as follows:
\begin{align}
\label{eq:qre_ub}D(\hat{\rho}_0\|\hat{\rho}_1)&\leq\frac{\eta_w^2\bar{n}^2}{2\eta_b \bar{n}_T (1+\eta_b \bar{n}_T)}.
\end{align}
Combining equations \eqref{eq:qre_pe_lb}, \eqref{eq:qre_additive}, and 
  \eqref{eq:qre_ub} yields:
\begin{align}
\label{eq:thermal_pe_lb}\mathbb{P}_e^{(w)}&\geq\frac{1}{2}-\frac{\eta_w\bar{n}\sqrt{n}}{4\sqrt{\eta_b\bar{n}_T(1+\eta_b\bar{n}_T)}}
\end{align}
Therefore, setting 
\begin{align}
\label{eq:nbar}\bar{n}&=\frac{4\epsilon\sqrt{\eta_b \bar{n}_T(1+\eta_b \bar{n}_T)}}{\sqrt{n}\eta_w}
\end{align}
  ensures that Willie's error probability is lower-bounded by
  $\mathbb{P}_{e}^{(w)}\geq\frac{1}{2}-\epsilon$ over $n$ optical modes.

\textit{Analysis (Bob):} 
Suppose Bob uses a coherent detection receiver. 
A homodyne receiver, which is more efficient than a heterodyne receiver in the 
  low photon number regime~\cite{giovannetti04cappureloss}, induces an AWGN
  channel with noise power 
  $\sigma_{b}^2=\frac{2(1-\eta_b)\bar{n}_T+1}{4\eta_b}$.
Since Alice uses Gaussian modulation with symbol power $\bar{n}$ defined in 
  equation \eqref{eq:nbar}, we can upper-bound $\mathbb{P}_e^{(b)}$ by
  \cite[Equation (9)]{bash13squarerootjsac}:
\begin{align}
\label{eq:P_e_hom}\mathbb{P}_{e}^{(b)}&\leq2^{B-\frac{n}{2}\log_2\left(1+\bar{n}/2\sigma_{b}^2\right)},
\end{align}
where $B$ is the number of transmitted bits.
Substitution of $\bar{n}$ from \eqref{eq:nbar} into 
  \eqref{eq:P_e_hom} shows that $\mathcal{O}(\sqrt{n})$ bits can be
  covertly transmitted from Alice to Bob with 
  $\mathbb{P}_{e}^{(b)}<\delta$ for arbitrary $\delta>0$ given large enough
  $n$.
\end{proof}
\\

\noindent Before proving Theorems \ref{th:dark_s} and \ref{th:lpd_ppm_s}, we 
  state a lemma that is used in their proofs.

\begin{lemma}[Classical relative entropy bound on $\mathbb{P}_e$ of binary hypothesis test]
\label{lemma:cre_lb}
Denote by $\mathbb{P}_0$ and $\mathbb{P}_1$ the respective probability 
  distributions of observations when $H_0$ and $H_1$ is true.
Assuming equal prior probabilities for each hypothesis, the probability
  of discrimination error is $\mathbb{P}_e\leq\frac{1}{2}-\sqrt{\frac{1}{8}D(\mathbb{P}_0\|\mathbb{P}_1)}$, where
  $D(\mathbb{P}_0\|\mathbb{P}_1)=-\sum_{x}p_0(x)\ln\frac{p_1(x)}{p_0(x)}$ 
  is the classical relative entropy between $\mathbb{P}_0$ and $\mathbb{P}_1$
  and $p_0(x)$ and $p_1(x)$ are the respective probability mass functions
  of $\mathbb{P}_0$ and $\mathbb{P}_1$.
\end{lemma}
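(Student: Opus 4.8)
The plan is to derive the bound from two standard ingredients: an exact expression for the minimum error probability of an equiprobable binary test in terms of the total variation (statistical) distance between $\mathbb{P}_0$ and $\mathbb{P}_1$, and Pinsker's inequality relating that distance to the relative entropy $D(\mathbb{P}_0\|\mathbb{P}_1)$. This mirrors exactly the quantum argument already used for the thermal channel, where quantum Pinsker's inequality plays the same role; the present lemma is its classical shadow, so the cleanest route is to reuse that skeleton verbatim in the commutative setting.

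First I would fix the decision rule. With equal priors the Bayes-optimal detector is the likelihood-ratio (maximum-likelihood) test, which declares $H_1$ precisely when $p_1(x)\geq p_0(x)$. Writing $\mathbb{P}_e=\tfrac12(\mathbb{P}_{\mathrm{FA}}+\mathbb{P}_{\mathrm{MD}})$ and splitting the sample space according to this rule gives the familiar closed form $\mathbb{P}_e=\tfrac12\sum_x\min\{p_0(x),p_1(x)\}$. Using $\min\{a,b\}=\tfrac12(a+b-|a-b|)$ together with $\sum_x p_0(x)=\sum_x p_1(x)=1$ collapses this to $\mathbb{P}_e=\tfrac12\bigl(1-V(\mathbb{P}_0,\mathbb{P}_1)\bigr)$, where $V(\mathbb{P}_0,\mathbb{P}_1)=\tfrac12\sum_x|p_0(x)-p_1(x)|$ is the total variation distance. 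No slack is introduced at this stage; this step is an exact identity.

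Next I would apply Pinsker's inequality $V(\mathbb{P}_0,\mathbb{P}_1)\leq\sqrt{\tfrac12 D(\mathbb{P}_0\|\mathbb{P}_1)}$ and substitute it into the identity above, obtaining $\mathbb{P}_e\geq\tfrac12-\tfrac12\sqrt{\tfrac12 D(\mathbb{P}_0\|\mathbb{P}_1)}=\tfrac12-\sqrt{\tfrac18 D(\mathbb{P}_0\|\mathbb{P}_1)}$. This is the relative-entropy bound the lemma supplies to the proofs of Theorems \ref{th:dark_s} and \ref{th:lpd_ppm_s}: Alice drives $D(\mathbb{P}_0\|\mathbb{P}_1)$ toward zero, using additivity over the $n$ i.i.d.\ per-mode observations so that $D$ grows only linearly while her per-mode signal energy shrinks, thereby forcing $\mathbb{P}_e^{(w)}\geq\tfrac12-\epsilon$. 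Because $V$ is \emph{upper}-bounded by the square-root term, the chain necessarily produces a \emph{lower} bound on $\mathbb{P}_e$; the orientation of the inequality should accordingly read ``$\geq$,'' consistent with the quantum Pinsker bound used for the thermal channel and with the lemma's role of lower-bounding Willie's error (the ``$\leq$'' as typeset is an evident sign typo).

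The only nontrivial analytic input is Pinsker's inequality, which is also the main obstacle if a self-contained argument is wanted. I would handle it in the standard way: first reduce to the two-point case via the data-processing property of relative entropy under the binary partition $\{x:p_1(x)\geq p_0(x)\}$ and its complement, and then verify the scalar Bernoulli inequality $p\ln\tfrac{p}{q}+(1-p)\ln\tfrac{1-p}{1-q}\geq 2(p-q)^2$ by elementary calculus. The error-probability identity and the additivity $D(\mathbb{P}_0^{\otimes n}\|\mathbb{P}_1^{\otimes n})=nD(\mathbb{P}_0\|\mathbb{P}_1)$ are immediate and need no further work. Alternatively, Pinsker may simply be cited \cite{wilde13quantumit,lehmann05stathyp}, exactly as its quantum analogue is cited in the thermal-channel proof.
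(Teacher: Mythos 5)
Your proof is correct and follows essentially the same route as the paper's: the paper likewise starts from the exact identity $\min\mathbb{P}_e=\frac{1}{2}-\frac{1}{4}\|p_0-p_1\|_1$ (citing Lehmann--Romano, where you derive it directly from the maximum-likelihood rule via $\min\{a,b\}=\frac{1}{2}(a+b-|a-b|)$) and then applies classical Pinsker's inequality to conclude $\mathbb{P}_e\geq\frac{1}{2}-\sqrt{\frac{1}{8}D(\mathbb{P}_0\|\mathbb{P}_1)}$. You are also right that the ``$\leq$'' in the lemma as typeset is a sign typo for ``$\geq$'': the paper's own proof and the lemma's invocations in the proofs of Theorems \ref{th:dark} and \ref{th:lpd_ppm} both use the lower-bound orientation.
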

\begin{proof}
The minimum probability of discrimination error between $H_0$ and $H_1$ is 
  characterized by~\cite[Theorem 13.1.1]{lehmann05stathyp}:
\begin{align*}
\min\mathbb{P}_e&=\frac{1}{2}-\frac{1}{4}\|p_0(x)-p_1(x)\|_1,
\end{align*}
  where $\|a-b\|_1$ is the $\mathcal{L}_1$ norm.
By classical Pinsker's inequality~\cite[Lemma 11.6.1]{cover02IT}, 
\begin{align*}
\|p_0(x)-p_1(x)\|_1&\leq\sqrt{2D(\mathbb{P}_0\|\mathbb{P}_1)},
\end{align*}
and the lemma follows.
\end{proof}

\begin{theorem}\emph{(Dark counts yield square root law)}
\label{th:dark_s}

\end{theorem}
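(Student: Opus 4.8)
\noindent The plan is to mirror the two-part structure of the proof of Theorem~\ref{th:thermal_s}: first I would lower-bound Willie's detection error using the classical relative-entropy bound of Lemma~\ref{lemma:cre_lb}, and then I would establish Bob's reliability with a channel-coding argument. For the construction, Alice uses the on-off keying (OOK) scheme from the main text: she transmits the coherent state $\ket{\alpha}$ (with fixed mean photon number $\bar{n}=|\alpha|^2$) with probability $q$ and the vacuum $\ket{0}$ with probability $1-q$, independently across the $n$ modes, with pulse placement and payload determined by the shared secret so that, absent the secret, each mode appears to Willie as an independent Bernoulli$(q)$ on/off choice. Because the remark following Theorem~\ref{th:pureloss_s} shows the photon-number-resolving (PNR) receiver is asymptotically optimal for Willie in the pure-loss regime, it suffices to establish covertness against a PNR detector carrying the stipulated non-zero dark current.

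\noindent For Willie's analysis, the pure-loss channel of transmissivity $\eta_w$ maps an on-symbol to $\ket{\sqrt{\eta_w}\alpha}$, whose PNR photon count is Poisson with mean $\eta_w\bar{n}$; adding the independent dark current, modeled as Poisson$(\lambda_w)$, makes an on-mode count Poisson$(\eta_w\bar{n}+\lambda_w)$ and an off-mode count Poisson$(\lambda_w)$. Since the secret renders Alice's on/off pattern uniformly random to Willie, his observation is distributed as the i.i.d.\ product hypotheses $\mathbb{P}_0=\mathrm{Pois}(\lambda_w)^{\otimes n}$ and $\mathbb{P}_1=[(1-q)\mathrm{Pois}(\lambda_w)+q\,\mathrm{Pois}(\eta_w\bar{n}+\lambda_w)]^{\otimes n}$. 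Applying Lemma~\ref{lemma:cre_lb} together with additivity of relative entropy over the $n$ independent modes gives $\mathbb{P}_e^{(w)}\geq\frac12-\sqrt{\frac{n}{8}D(p_0\|p_1)}$, where $p_0,p_1$ are the single-mode distributions.

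\noindent The crux, and the step I expect to be the main obstacle, is to show that the per-mode relative entropy is second order in $q$, i.e.\ $D(p_0\|p_1)=\mathcal{O}(q^2)$. Writing the likelihood ratio as $p_1(k)/p_0(k)=1+q\,g(k)$ with $g(k)=e^{-\eta_w\bar{n}}(1+\eta_w\bar{n}/\lambda_w)^{k}-1$, the probability-generating-function identity for the Poisson law gives $\sum_k p_0(k)g(k)=0$, so the term linear in $q$ in $-\sum_k p_0(k)\ln(1+q\,g(k))$ vanishes, exactly as the leading Taylor coefficients vanish in Theorem~\ref{th:thermal_s}. To control the remainder uniformly I would bound the relative entropy by the $\chi^2$-divergence, $D(p_0\|p_1)\leq\sum_k (p_0(k)-p_1(k))^2/p_1(k)\leq\frac{q^2}{1-q}\left(e^{\eta_w^2\bar{n}^2/\lambda_w}-1\right)$, the closed form following from $\sum_k \mathrm{Pois}(\eta_w\bar{n}+\lambda_w)(k)^2/\mathrm{Pois}(\lambda_w)(k)=e^{\eta_w^2\bar{n}^2/\lambda_w}$. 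Thus $\frac{n}{8}D(p_0\|p_1)=\mathcal{O}(nq^2)$, and setting $q=c\epsilon/\sqrt{n}$ for a constant $c$ depending only on $\eta_w,\bar{n},\lambda_w$ secures $\mathbb{P}_e^{(w)}\geq\frac12-\epsilon$.

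\noindent Finally, for Bob's reliability I would observe that OOK over the pure-loss channel of transmissivity $\eta_b$, read out by Bob's photon detector, induces a memoryless binary-input channel with on-detection probability $1-e^{-\eta_b\bar{n}}$. By Shannon's noisy-channel coding theorem applied to this discrete channel under the $q$-sparse input, Bob reliably decodes $(1-o(1))\,nI(X;Y)$ bits. Since the on- and off-symbols yield distinguishable detection statistics, the per-mode mutual information satisfies $I(X;Y)=\Omega(q)$, so $nI(X;Y)=\Omega(nq)=\Omega(\sqrt{n})$ bits are reliably conveyed, matching the $\mathcal{O}(\sqrt{n})$ claim. The covertness constraint $q=\Theta(\epsilon/\sqrt{n})$ still leaves $\Theta(\sqrt{n})$ expected pulses each carrying $\Omega(q)$ bits, so the covert and reliable requirements are met simultaneously, completing the argument.
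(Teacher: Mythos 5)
Your construction and covertness analysis coincide with the paper's: the same i.i.d.\ OOK codebook, the same pair of hypotheses $\mathbb{P}_0=\mathrm{Poisson}(\lambda_w)^{\otimes n}$ versus $\mathbb{P}_1=\left[(1-q)\,\mathrm{Poisson}(\lambda_w)+q\,\mathrm{Poisson}(\lambda_w+\eta_w|\alpha|^2)\right]^{\otimes n}$, and the same route through Lemma~\ref{lemma:cre_lb} plus additivity of relative entropy over the $n$ modes. The one place you genuinely depart is how the $\mathcal{O}(q^2)$ bound on the per-mode divergence is obtained: the paper expands the relative entropy in $q$ and invokes Taylor's theorem with remainder, whereas you pass to the $\chi^2$-divergence and evaluate $\sum_k \mathrm{Pois}(\lambda_w+\eta_w\bar{n})(k)^2/\mathrm{Pois}(\lambda_w)(k)=e^{\eta_w^2\bar{n}^2/\lambda_w}$ in closed form, giving $D(p_0\|p_1)\leq\frac{q^2}{1-q}\bigl(e^{\eta_w^2\bar{n}^2/\lambda_w}-1\bigr)$. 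That calculation is correct (it uses $p_1\geq(1-q)p_0$ together with $D\leq\chi^2$), it recovers the paper's constant $e^{(\eta_w|\alpha|^2)^2/\lambda_w}-1$ up to the factor $\frac{1}{1-q}$ in place of $\frac{1}{2}$, and it is arguably easier to verify than the paper's appeal to the signs of Taylor coefficients. This half of your proposal is sound.

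The gap is in Bob's reliability. You cannot simply invoke Shannon's noisy-channel coding theorem: that theorem fixes a channel and a rate strictly below a fixed capacity and then lets $n\to\infty$, but in your scheme the input distribution is Bernoulli$(q)$ with $q=\Theta(1/\sqrt{n})$, so the per-symbol mutual information itself vanishes as $n$ grows. A claim of the form ``Bob reliably decodes $(1-o(1))\,nI(X;Y)$ bits'' for a sequence of input distributions that changes with the blocklength is exactly what the asymptotic theorem does not deliver; you need an achievability bound whose error estimate is explicit in both $n$ and the input distribution. This is what the paper does: it applies Gallager's random-coding bound \cite[Theorem 5.6.2]{gallager68IT} (with $s=1$) to the binary asymmetric channel seen by Bob's (dark-count-afflicted) detector, obtaining $\mathbb{P}_e^{(b)}\leq e^{B-nE_0}$, and then expands $E_0=qC+\mathcal{O}(q^2)$ with an explicit constant $C>0$. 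With $q=\Theta(1/\sqrt{n})$ this gives $nE_0=\Theta(\sqrt{n})$, so choosing $B$ a suitable constant fraction of $nE_0$ drives $\mathbb{P}_e^{(b)}\to0$ while $B=\Theta(\sqrt{n})$. Replacing your appeal to Shannon's theorem by this (or any comparable finite-blocklength or error-exponent) argument closes the proof; without it, the reliability half of the theorem is unsupported.
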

\begin{proof}
\textit{Construction:} 
Let Alice use a coherent state on-off keying (OOK) modulation 
  $\left\{\pi_i, |\psi_i\rangle\langle\psi_i |\right\}$, $i = 1, 2$, where 
  $\pi_1 = 1-q$, $\pi_2 = q$, $|\psi_1\rangle = |0\rangle$, 
  $|\psi_2\rangle = |\alpha\rangle$. 
Alice and Bob generate a random codebook with each codeword symbol chosen 
  i.i.d.~from the above binary OOK constellation.

\textit{Analysis (Willie):}
Willie records vector $\mathbf{y}_w=[y_1,\ldots,y_n]$, where $y_i$ is the
  number of photons observed in the $i^{\text{th}}$ mode.
Denote by $\mathbb{P}_0$ the distribution of $\mathbf{y}_w$ when Alice
  does not transmit and by $\mathbb{P}_1$ the distribution when she transmits.
When Alice does not transmit, Willie's receiver observes a Poisson dark count
  process with rate $\lambda_w$ photons per mode.
Thus, $\{y_i\}$ is independent and identically distributed (i.i.d.) sequence of
  Poisson random variables with rate $\lambda_w$, and 
  $\mathbb{P}_0=\mathbb{P}_w^n$ where $\mathbb{P}_w=\text{Poisson}(\lambda_w)$.
When Alice transmits, although Willie captures all of her transmitted energy
  that does not reach Bob, he does not have access to Alice's and Bob's 
  codebook.
Since the dark counts are independent of the transmitted pulses,
  each observation is a mixture of two independent Poisson random variables.
Thus, each $y_i\sim\mathbb{P}_s$ is i.i.d., with 
  $\mathbb{P}_s=(1-q)\text{Poisson}(\lambda_w)+q\text{Poisson}(\lambda_w+\eta_w|\alpha|^2)$ and $\mathbb{P}_1=\mathbb{P}_s^n$.
By Lemma \ref{lemma:cre_lb},
  $\mathbb{P}_e^{(w)}\geq\frac{1}{2}-\sqrt{\frac{1}{8}D(\mathbb{P}_0\|\mathbb{P}_1)}$.
Since the classical relative entropy is additive for product distributions,
  $D(\mathbb{P}_0\|\mathbb{P}_1)=nD(\mathbb{P}_w\|\mathbb{P}_s)$.
Now,
\begin{align}
\label{eq:dark_kl}D(\mathbb{P}_w\|\mathbb{P}_s)&=-\sum_{y=0}^\infty\frac{\lambda_w^ye^{-\lambda_w}}{y!}\log\left[1-q+q\left(1+\frac{\eta_w|\alpha|^2}{\lambda_w}\right)e^{-\eta_w|\alpha|^2}\right]\\
&\leq\frac{q^2\left(e^{(\eta_w|\alpha|^2)^2/\lambda_w}-1\right)}{2}\nonumber
\end{align}
where the inequality is due to the Taylor's 
  Theorem with the remainder applied to the Taylor series expansion of 
  equation \eqref{eq:dark_kl} with respect to $q$ at $q=0$.
Thus,
\begin{align}
\label{eq:pe_dark}\mathbb{P}_e^{(w)}&\geq\frac{1}{2}-\frac{q}{4}\sqrt{n\left(e^{(\eta_w|\alpha|^2)^2/\lambda_w}-1\right)}.
\end{align}
Therefore, to ensure that $\mathbb{P}_e^{(w)}\geq\frac{1}{2}-\epsilon$, Alice 
  sets
\begin{align}
q&=\frac{4\epsilon}{\sqrt{n\left(e^{(\eta_w|\alpha|^2)^2/\lambda_w}-1\right)}}.
\end{align}

\textit{Analysis (Bob):}
Suppose Bob uses a practical single photon detector (SPD) receiver with
  probability of a dark click per mode $p_D^{(b)}$.
This induces a binary asymmetric channel between Alice and Bob,
  where the click probabilities, conditional on the input, are
  $\mathbb{P}(\text{click~}|\text{~input~}\ket{0})=p_D^{(b)}$
  and 
  $\mathbb{P}(\text{click~}|\text{~input~}\ket{\alpha})=1-e^{-\eta_b|\alpha|^2}(1-p_D^{(b)})$,
  with the corresponding no-click probabilities
  $\mathbb{P}(\text{no-click~}|\text{~input~}\ket{0})=1-p_D^{(b)}$
  and 
  $\mathbb{P}(\text{no-click~}|\text{~input~}\ket{\alpha})=e^{-\eta_b|\alpha|^2}(1-p_D^{(b)})$.
At each mode, a click corresponds to ``1'' and no-click to ``0''.
Let Bob use a maximum likelihood decoder on this sequence.
Then the standard upper bound on Bob's average decoding error
  probability is~\footnote{We use \cite[Theorem 5.6.2]{gallager68IT}, setting
  parameter $s=1$.} $\mathbb{P}_e^{(b)}\leq e^{B-nE_{0}}$,
  where $B$ is the number of transmitted bits, and $E_{0}$ is:
\begin{align*}
E_0&=-\ln\left[\left(1-p_d^{(b)}\right)\left(1-q\left(1-e^{-\frac{\eta|\alpha|^2}{2}}\right)\right)^{2}+\left((1-q)\sqrt{p_D^{(b)}}+q\sqrt{1-\left(1-p_D^{(b)}\right)e^{-\eta|\alpha|^2}}\right)^{2}\right]
\end{align*}
The Taylor series expansion of $E_0$ with respect to $q$ at $q=0$ yields
  $E_0=qC+\mathcal{O}(q^2)$, where
\begin{align*}
C&=2e^{-\eta_n|\alpha|^2/2}\left(e^{\eta_n|\alpha|^2/2}-1+p_D^{(b)}-\sqrt{p_D^{(b)}\left(e^{\eta_n|\alpha|^2/2}-1+p_D^{(b)}\right)}\right)
\end{align*}
is a positive constant.
Since $q=\mathcal{O}(1/\sqrt{n})$, this demonstrates that 
  $\mathcal{O}(\sqrt{n})$ bits can be covertly transmitted from Alice to Bob
  with $\mathbb{P}_e^{(b)}<\delta$ for arbitrary $\delta>0$ given 
  large enough $n$.
\end{proof}

\begin{theorem}\emph{(Dark counts yield square root law under structured modulation)}
\label{th:lpd_ppm_s}

\end{theorem}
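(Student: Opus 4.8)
The plan is to follow the same three-part template as the proof of Theorem~\ref{th:dark_s}, replacing its per-mode i.i.d.\ analysis by a per-\emph{frame} one dictated by the PPM structure. \emph{Construction:} I would partition the $n$ modes into $n/Q$ PPM frames and use the shared secret to (i) select a random subset $\mathcal{S}$ of frames, each included independently with probability $\zeta$, and (ii) draw a vector $\mathbf{k}$ of independent uniform shifts in $\{0,\ldots,Q-1\}$, one per used frame. Alice encodes her message with an error-correcting code (revealed to Willie), adds $\mathbf{k}$ modulo $Q$ to scramble the pulse positions, and places a coherent pulse $\ket{\alpha}$ in the resulting position of each frame in $\mathcal{S}$, leaving every other mode in $\ket{0}$. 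Only $\mathcal{S}$ and $\mathbf{k}$ are kept secret.

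\emph{Analysis (Willie):} The scrambling $\mathbf{k}$ acts as a one-time pad on the positions and $\mathcal{S}$ is random, so I would argue that Willie's record decomposes into $n/Q$ i.i.d.\ per-frame click records. When Alice is silent, each of the $Q$ modes of a frame yields an independent Poisson($\lambda_w$) dark count, giving a product law $\mathbb{Q}_0$; when Alice transmits, a frame is distributed as $\mathbb{Q}_0$ with probability $1-\zeta$ and, with probability $\zeta$, carries its pulse in a uniformly random one of its $Q$ positions, so $\mathbb{Q}_1=(1-\zeta)\mathbb{Q}_0+\tfrac{\zeta}{Q}\sum_{j=1}^{Q}\mathbb{Q}_0^{(j)}$, where $\mathbb{Q}_0^{(j)}$ raises the $j$-th mode's rate to $\lambda_w+\eta_w|\alpha|^2$. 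Applying Lemma~\ref{lemma:cre_lb} to these $n/Q$ i.i.d.\ observations gives $\mathbb{P}_e^{(w)}\geq\frac12-\sqrt{\frac{n}{8Q}D(\mathbb{Q}_0\|\mathbb{Q}_1)}$. The central step is to bound the single-frame relative entropy by Taylor-expanding $D(\mathbb{Q}_0\|\mathbb{Q}_1)=-\mathbb{E}_{\mathbb{Q}_0}[\log(\mathbb{Q}_1/\mathbb{Q}_0)]$ in $\zeta$ about $\zeta=0$: the linear coefficient vanishes because $\mathbb{E}_{\mathbb{Q}_0}[\mathbb{Q}_1/\mathbb{Q}_0-1]=0$ by normalization (exactly as in the OOK case), and Taylor's theorem with remainder leaves the quadratic term, giving $D(\mathbb{Q}_0\|\mathbb{Q}_1)=\mathcal{O}(\zeta^2/Q)$. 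Choosing $\zeta=\mathcal{O}(\sqrt{Q/n})$ then secures $\mathbb{P}_e^{(w)}\geq\frac12-\epsilon$.

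\emph{Analysis (Bob) and accounting:} Bob, who knows $\mathcal{S}$ and $\mathbf{k}$, inspects only the used frames, records an erasure when a frame produces no click and otherwise reads off a (possibly dark-count-corrupted) position, unscrambles by subtracting $\mathbf{k}$ modulo $Q$, and decodes. This induces a $Q$-ary channel dominated by erasures in the low-power regime; I would bound Bob's decoding error either by a Gallager-type random-coding exponent as in Theorem~\ref{th:dark_s}, or directly from the erasure-correcting capability of a Reed--Solomon code, showing that a code of rate bounded away from the per-symbol capacity decodes with vanishing error for large $n$. Since $|\mathcal{S}|=\mathcal{O}(\zeta n/Q)=\mathcal{O}(\sqrt{n/Q})$ frames each convey $\Theta(\log Q)$ bits, Alice reliably transmits $\mathcal{O}(\sqrt{n/Q}\log Q)$ covert bits; specifying the subset $\mathcal{S}$ costs $\mathcal{O}(\sqrt{n/Q}\log n)$ secret bits, which dominates the $\mathcal{O}(\sqrt{n/Q}\log Q)$ bits needed to share the shifts.

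\emph{Main obstacle:} I expect the crux to be the single-frame relative-entropy bound, because $\mathbb{Q}_1$ is now a mixture over the $Q$ scrambled signal positions rather than a two-point OOK mixture; confirming that the linear Taylor coefficient vanishes and extracting the correct $1/Q$ factor in the quadratic term—which is precisely what converts the OOK scaling into the claimed PPM scaling—requires care. A secondary subtlety is rigorously justifying that the one-time-pad shifts render Willie's per-frame records i.i.d.\ with a uniformly distributed signal position, so that no exploitation of the publicly known code structure can sharpen his hypothesis test.
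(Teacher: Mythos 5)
Your proposal follows essentially the same route as the paper's proof: the same secret frame-subset plus modulo-$Q$ position-scrambling construction, the same reduction of Willie's observations to $n/Q$ i.i.d.\ per-frame records with $\mathbb{Q}_1$ the $\left(1-\zeta,\tfrac{\zeta}{Q}\right)$ Poisson mixture, the same Pinsker-based relative-entropy lemma with the Taylor expansion in $\zeta$ (vanishing linear term by normalization, quadratic term carrying the $1/Q$ variance factor), and the same Gallager random-coding argument for Bob. The only deviation is in the final parameter choice, and it is harmless: you take $\zeta=\mathcal{O}(\sqrt{Q/n})$, which matches the theorem's stated $\mathcal{O}(\sqrt{n/Q}\log Q)$ throughput, whereas the supplementary proof solves the covertness constraint exactly and obtains the larger admissible $\zeta=4\epsilon Q\big/\sqrt{n\left(e^{(\eta_w|\alpha|^2)^2/\lambda_w}-1\right)}$; both satisfy the derived bound on $\mathbb{P}_e^{(w)}$.
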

\begin{proof}
\textit{Construction:}
Prior to communication, Alice and Bob secretly choose a random subset 
  $\mathcal{S}$ of PPM frames to use for transmission by selecting each of 
  $n/Q$ available PPM frames independently with probability $\zeta$. 
Alice and Bob then secretly generate a vector $\mathbf{k}$ containing 
  $|\mathcal{S}|$ numbers selected independently uniformly at random from 
  $\{0,1,\ldots,Q-1\}$, where $|\mathcal{S}|$ denotes the cardinality of
  $\mathcal{S}$. 
Alice encodes a message into a codeword of size $|\mathcal{S}|$ using an ECC 
  that may be known to Willie. 
She adds $\mathbf{k}$ modulo $Q$ to this message and transmits it on the 
  PPM frames in $\mathcal{S}$. 
  
\textit{Analysis (Willie):} 
Willie detects each PPM frame received from Alice, recording the 
  photon counts in
  $\mathbf{y}_w=[\mathbf{y}_1^{(w)},\ldots,\mathbf{y}_n^{(w)}]$ where 
  $\mathbf{y}_i^{(w)}=[y_{i,1}^{(w)},\ldots,y_{i,Q}^{(w)}]$ and
  $y_{i,j}^{(w)}$ it the number of photons observed in the
  $j^{\text{th}}$ mode of the $i^{\text{th}}$ PPM frame.
Denote by $\mathbb{P}_0$ the distribution of $\mathbf{y}_w$ when Alice
  does not transmit and by $\mathbb{P}_1$ the distribution when she transmits.
When Alice does not transmit, Willie's receiver observes a Poisson dark count
  process with rate $\lambda_w$ photons per mode, implying that
  $\mathbf{y}_w$ is a vector of $nQ$ i.i.d.~$\text{Poisson}(\lambda_w)$
  random variables.
Therefore, $\{\mathbf{y}_i^{(w)}\}$ is i.i.d.~with 
  $\mathbf{y}_i^{(w)}\sim \mathbb{P}_w$ and
  $\mathbb{P}_0=\mathbb{P}_w^n$, where $\mathbb{P}_w$ is the
  distribution of $Q$ i.i.d.~$\text{Poisson}(\lambda_w)$ random variables with
  p.m.f.:
\begin{align}
\label{eq:ppm_p0}p_0(\mathbf{y}_i^{(w)})&=\prod_{j=1}^Q\frac{\lambda_w^{y_{i,j}^{(w)}}e^{-\lambda_w}}{y_{i,j}^{(w)}!}.
\end{align}

When Alice transmits, by construction, each PPM frame is randomly selected
  for transmission with probability $\zeta$.
In each selected PPM frame, a pulse is transmitted using one of $Q$ modes 
  chosen equiprobably. 
Therefore, in this case $\{\mathbf{y}_i^{(w)}\}$ is also i.i.d.~with
  $\mathbf{y}_i^{(w)}\sim \mathbb{P}_s$ and 
  $\mathbb{P}_1=\mathbb{P}_s^n$, where the p.m.f.~of $\mathbb{P}_s$ is:
\begin{align}
\label{eq:ppm_p1}p_1(\mathbf{y}_i^{(w)})&=(1-\zeta)\prod_{j=1}^Q\frac{\lambda_w^{y_{i,j}^{(w)}}e^{-\lambda_w}}{y_{i,j}^{(w)}!}+\frac{\zeta}{Q}\sum_{m=1}^Q\frac{(\eta_w|\alpha|^2+\lambda_w)^{y_{i,m}^{(w)}}e^{-\eta_w|\alpha|^2-\lambda_w}}{y_{i,m}^{(w)}!}\prod_{\myatop{j=1}{j\neq m}}^Q\frac{\lambda_w^{y_{i,j}^{(w)}}e^{-\lambda_w}}{y_{i,j}^{(w)}!}.
\end{align}

Since the classical relative entropy is additive for product distributions,
  $D(\mathbb{P}_0\|\mathbb{P}_1)=\frac{n}{Q}D(\mathbb{P}_w\|\mathbb{P}_s)$.
Now, denoting by $\mathbf{x}=[x_1,\cdots,x_Q]$ where $x_j\in\mathbb{N}_{0}$,
  we have:
\begin{align}
\label{eq:ppm_kl}D(\mathbb{P}_w\|\mathbb{P}_s)&=-\sum_{\mathbf{x}\in\mathbb{N}_0^Q}\prod_{j=1}^Q\frac{\lambda_w^{x_j}e^{-\lambda_w}}{x_j!}\log\left[1-\zeta+\frac{\zeta}{Q}\sum_{m=1}^Q\left(1+\frac{\eta_w|\alpha|^2}{\lambda_w}\right)^{x_m}e^{-\eta_w|\alpha|^2}\right]\\
&\leq\frac{\zeta^2\left(e^{(\eta_w|\alpha|^2)^2/\lambda_w}-1\right)}{2Q}\nonumber
\end{align}
where the inequality is due to the Taylor's 
  Theorem with the remainder applied to the Taylor series expansion of 
  equation \eqref{eq:ppm_kl} with respect to $\zeta$ at $\zeta=0$.
By Lemma \ref{lemma:cre_lb},
  $\zeta=\frac{4\epsilon Q}{\sqrt{n\left(e^{(\eta_w|\alpha|^2)^2/\lambda_w}-1\right)}}$
  ensures that Willie's error probability is lower-bounded by 
  $\mathbb{P}_e^{(w)}\geq\frac{1}{2}-\epsilon$.

\textit{Analysis (Bob):}
As in the proof of Theorem \ref{th:dark_s}, Bob uses a practical
  SPD receiver with probability of a dark click $p_D^{(b)}$.
Bob examines only the PPM frames in $\mathcal{S}$. 
If two or more clicks are detected in a PPM frame, a PPM symbol is assigned by
  selecting one of the clicks uniformly at random. 
If no clicks are detected, the PPM frame is labeled as an \emph{erasure}.
After subtracting $\mathbf{k}$ modulo $Q$ from this vector of PPM symbols 
  (subtraction is not performed on erasures), the resultant vector is passed 
  to the decoder. 
A random coding argument \cite[Theorem 5.6.2]{gallager68IT} yields
  reliable transmission of $\mathcal{O}\left(\sqrt{\frac{n}{Q}}\log Q\right)$ 
  covert bits. 
\end{proof}
\\

\begin{theorem}\emph{(Converse of the square root law)}\label{th:converse_s}

\end{theorem}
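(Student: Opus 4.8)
The plan is to prove the contrapositive: assuming Alice's scheme is simultaneously \emph{covert} ($\mathbb{P}_e^{(w)}\geq\frac{1}{2}-\epsilon$ against Willie's optimal measurement) and \emph{reliable} ($\mathbb{P}_e^{(b)}\to0$), I will show she can convey only $\mathcal{O}(\sqrt{n})$ bits, so that any attempt at $\omega(\sqrt{n})$ bits must forfeit covertness or reliability. The argument has two halves: covertness caps Alice's transmitted energy, and that energy caps the message length.

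\emph{Covertness caps the energy.} Since Alice must defeat \emph{every} measurement available to Willie, she must in particular defeat the radiometer that records the total count $Y=\sum_{i=1}^n y_i^{(w)}$ at Willie's port. Let $N_T=\langle\hat N\rangle$ be the mean of the total photon-number operator $\hat N=\sum_i\hat n_i$ of Alice's codeword. Attenuation by $\eta_w$ together with the thermal and dark-count backgrounds shifts the mean of $Y$ from $m_0=n(\eta_w\bar n_T+\lambda_w)$ when Alice is silent to $m_1=m_0+\eta_w N_T$ when she transmits, while $\Var_0(Y)=\mathcal{O}(n)$ and $\Var_1(Y)=\mathcal{O}(n)+\eta_w^2\sigma_x^2+\eta_w N_T$. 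A Chebyshev bound on the threshold test placed midway between $m_0$ and $m_1$ makes both the false-alarm and missed-detection probabilities of order $(\Var_0(Y)+\Var_1(Y))/(\eta_w N_T)^2=\mathcal{O}((n+\sigma_x^2+N_T)/N_T^2)$. Under the hypothesis $\sigma_x^2=\mathcal{O}(n)$ this vanishes whenever $N_T=\omega(\sqrt{n})$, so covertness forces $N_T=\mathcal{O}(\sqrt{n})$, an average of $\mathcal{O}(1/\sqrt{n})$ photons per mode.

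\emph{The energy caps the message.} Given the budget $\sum_i\bar n_i=N_T=\mathcal{O}(\sqrt{n})$ (the $\bar n_i$ being the per-mode means, whose sum controls the Holevo information by additivity of the thermal channel's capacity), the number of bits Bob can reliably recover---even with a quantum-optimal receiver, and noting that the shared secret does not raise a point-to-point capacity---is bounded via the Holevo--Schumacher--Westmoreland converse and Fano's inequality by $M(1-\mathbb{P}_e^{(b)})-1\leq\sum_i C(\bar n_i)$. Here $C(\bar n)=g(\eta_b\bar n+\nu)-g(\nu)$ is the per-mode capacity of $\mathcal{E}_{\eta_b}^{\bar n_T}$ with $g(x)=(1+x)\ln(1+x)-x\ln x$ and strictly positive noise floor $\nu=\eta_w\bar n_T>0$. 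The key point is that $C$ is concave, $C(0)=0$, and---precisely because $\bar n_T>0$---its slope $C'(0)=\eta_b g'(\nu)<\infty$ is finite. Jensen's inequality then gives $\sum_i C(\bar n_i)\leq nC(N_T/n)\leq(C'(0)+o(1))N_T=\mathcal{O}(\sqrt{n})$, so $\mathbb{P}_e^{(b)}\to0$ forces $M=\mathcal{O}(\sqrt{n})$, contradicting $M=\omega(\sqrt{n})$.

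The main obstacle is the covertness half for arbitrary---possibly entangled or squeezed---inputs, where no product-coherent-state structure may be assumed. The radiometer circumvents this because its statistics depend on the codeword only through the law of $\hat N$: the detectable mean shift is fixed by $\langle\hat N\rangle$ and the concealing fluctuation by $\Var(\hat N)=\sigma_x^2$. This is exactly why the bounded-variance assumption $\sigma_x^2=\mathcal{O}(n)$ is indispensable---without it a codeword of anomalously large photon-number variance could swamp the mean shift and blunt the radiometer---and why the unbounded-variance case remains open.
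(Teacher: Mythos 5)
Your skeleton is the same as the paper's---Willie runs a total-photon-count (radiometer) test analyzed with Chebyshev's inequality, and an energy cap is then converted into a bit cap via Holevo/Fano---but your first half contains a genuine gap: the claim ``covertness forces $N_T=\mathcal{O}(\sqrt{n})$'' does not follow, because $N_T$ is a per-codeword quantity while covertness constrains only a codebook average. Willie's missed-detection probability is $\mathbb{P}_{\mathrm{MD}}=2^{-M}\sum_u\mathbb{P}_{\mathrm{MD}}(u)$, so a small but constant fraction of codewords can carry arbitrarily large energy while $\mathbb{P}_e^{(w)}$ stays bounded away from zero; your Chebyshev bound, read per codeword, shows only that each codeword with $N_T(u)=\omega(\sqrt{n})$ is individually detectable. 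If instead you read $N_T$ as the ensemble-average energy, your variance formula fails: under $H_1$ the law of $Y$ is a mixture over codewords, so $\Var_1(Y)$ acquires the inter-codeword term $\eta_w^2\Var_u\bigl(N_T(u)\bigr)$, which the hypothesis $\sigma_x^2=\mathcal{O}(n)$ (a bound on the quantum photon-number variance \emph{within} each codeword) does not control. The same conflation breaks your second half: the budget $\sum_i\bar{n}_i=\mathcal{O}(\sqrt{n})$ need not hold for the ensemble, since a few huge-energy codewords can dominate the average input state and void the capacity bound. The repair is exactly the step the paper takes and you omit: conclude only that a constant fraction $\kappa>0$ of codewords lies in $\mathcal{U}=\{u:\bar{n}_u\leq\bar{n}_{\mathcal{U}}\}$ with $\bar{n}_{\mathcal{U}}=\mathcal{O}(\sqrt{n})$, apply Fano and the Holevo bound to the sub-codebook $\mathcal{U}$ alone, and finish with $\mathbb{P}_e^{(b)}\geq\kappa\,\mathbb{P}_e^{(b)}(\mathcal{U})$.

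Two further mismatches with the theorem's scope. First, your finite-slope argument $C'(0)=\eta_b g'(\nu)<\infty$ requires $\nu=\eta_w\bar{n}_T>0$, but this theorem is also the converse for the dark-count results (Theorems 3 and 4), i.e., the regime $\bar{n}_T=0$, $\lambda_w>0$, where the slope at zero is infinite and your Jensen step no longer yields $\mathcal{O}(\sqrt{n})$. Second, writing the $n$-mode converse as $\sum_i C(\bar{n}_i)$ with single-letter capacities of the thermal channel invokes additivity of its Holevo information---the resolution of the Gaussian minimum-output-entropy conjecture---which is a deep result, not a step to cite in passing. The paper sidesteps both issues by granting Bob an unattenuated noiseless channel (which can only weaken the converse's hypothesis, never its conclusion) and bounding the Holevo quantity by the entropy of the average \emph{input} state, using only sub-additivity of von Neumann entropy, the maximum-entropy property of thermal states under a mean-photon-number constraint, and Jensen's inequality; that elementary route is what makes the converse uniform over all noise regimes the theorem covers. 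Your finite-slope observation is genuinely useful when $\bar{n}_T>0$ (it removes a logarithmic factor that the paper's input-entropy bound carries), but as written it cannot stand in for the paper's argument.
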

\begin{proof}
As in the proof of Theorem \ref{th:pureloss_s}, Alice sends one of $2^M$ 
  (equally likely) $M$-bit messages by choosing an element from an arbitrary 
  codebook $\{\hat{\rho}^{A^n}_x,x=1,\ldots,2^M\}$, where a state 
  $\hat{\rho}^{A^n}_x=\ket{\psi_x}^{A^nA^n}\hspace{-4pt}\bra{\psi_x}$
  encodes an $M$-bit message $W_x$.
$\ket{\psi_x}^{A^n}=\sum_{\mathbf{k}\in\mathbb{N}_{0}^n}a_{\mathbf{k}}(x)\ket{\mathbf{k}}$ is a general 
  $n$-mode pure state, where
  $\ket{\mathbf{k}}\equiv\ket{k_1}\otimes\ket{k_2}\otimes\cdots\otimes\ket{k_n}$
  is a tensor product of $n$ Fock states. 
The mean photon number of a codeword $\hat{\rho}^{A^n}_x$ is
  $\bar{n}_x=\sum_{\mathbf{k}\in\mathbb{N}_{0}^n}(\sum_{i=1}^nk_i)|a_{\mathbf{k}}(x)|^2$,
  and the photon number variance is 
  $\sigma^2_x=\sum_{\mathbf{k}\in\mathbb{N}_{0}^n}(\sum_{i=1}^nk_i)^2|a_{\mathbf{k}}(x)|^2-\bar{n}_x^2=\mathcal{O}(n)$.
We limit our analysis to pure input states since, by convexity, using mixed
  states as inputs can only deteriorate the performance (since that is
  equivalent to transmitting a randomly chosen pure state from an ensemble and 
  discarding the knowledge of that choice).

Willie uses a noisy PNR receiver to observe his channel from Alice, and 
  records the total photon count $X_{\textit{tot}}$ over $n$ modes.
For some threshold $S$ that we discuss later, Willie declares that Alice 
  transmitted when $X_{\textit{tot}}\geq S$, and did not transmit when 
  $X_{\textit{tot}}<S$.
When Alice does not transmit, Willie observes noise: 
  $X_{\textit{tot}}^{(0)}=X_D+X_T$, where $X_D$ 
  is the number of dark counts due to the spontaneous emission
  process at the detector, and $X_T$ is the number of photons observed due to 
  the thermal background.
Since the dark counts are modeled by a Poisson process with rate $\lambda_w$ 
  photons per mode, both the mean and variance of the observed dark counts per
  mode is $\lambda_w$.
The mean of the number of photons observed per mode from the thermal 
  background with mean photon number per mode $\bar{n}_T$ is 
  $(1-\eta_w)\bar{n}_T$ and the variance is
  $(1-\eta_w)^2(\bar{n}_T+\bar{n}_T^2)$.
Thus, the mean of the total number of noise photons observed per 
  mode is $\mu_N=\lambda_w+(1-\eta_w) \bar{n}_T$, and, due to the statistical 
  independence of the noise processes, the variance is
  $\sigma^2_N=\lambda_w+(1-\eta_w)^2(\bar{n}_T+\bar{n}_T^2)$.
We upper-bound the false alarm probability using Chebyshev's inequality: 
\begin{align}
\mathbb{P}_{\mathrm{FA}}&=\mathbb{P}(X_{\textit{tot}}^{(0)}\geq S)\nonumber\\
\label{eq:conv_pfa}&\leq\frac{n\sigma^2_N}{(S-n\mu_N)^2},
\end{align}
where equation \eqref{eq:conv_pfa} is due to the memorylessness of the noise
  processes.
Thus, to obtain the desired $\mathbb{P}_{\mathrm{FA}}^*$, Willie sets 
  threshold $S=n\mu_N+\sqrt{n\sigma^2_N/\mathbb{P}_{\mathrm{FA}}^*}$.

When Alice transmits codeword $\hat{\rho}^{A^n}_u$ corresponding to message 
  $W_u$, Willie observes $X_{\textit{tot}}^{(1)}=X_u+X_D+X_T$, 
  where $X_u$ is the count due to Alice's transmission.
We upper-bound the missed detection probability using Chebyshev's inequality:
\begin{align}
\mathbb{P}_{\mathrm{MD}}&=\mathbb{P}(X_{\textit{tot}}^{(1)}<S)\nonumber\\
&\leq\mathbb{P}\left(|X_{\textit{tot}}^{(1)}-\eta_w\bar{n}_u-n\mu_N|\geq \eta_w\bar{n}_u-\sqrt{\frac{n\sigma^2_N}{\mathbb{P}_{\mathrm{FA}}^*}}\right)\nonumber\\
\label{eq:indep_var}&\leq \frac{n\sigma^2_N+\eta_w^2\sigma^2_u}{(\eta_w\bar{n}_u-\sqrt{n\sigma^2_N/\mathbb{P}_{\mathrm{FA}}^*})^2},
\end{align}
  where equation \eqref{eq:indep_var} is due to the independence between the 
  noise and Alice's codeword.
Since $\sigma^2_u=\mathcal{O}(n)$, if $\bar{n}_u=\omega(\sqrt{n})$, then
  $\lim_{n\rightarrow\infty}\mathbb{P}_{\mathrm{MD}}=0$.
Thus, given large enough $n$, Willie can detect Alice's codewords that have
  mean photon number $\bar{n}_u=\omega(\sqrt{n})$ with probability of
  error $\mathbb{P}_e^{(w)}\leq\epsilon$ for any $\epsilon>0$.

Now, if Alice wants to lower-bound $\mathbb{P}_e^{(w)}$, her codebook
  must contain a positive fraction of codewords with mean photon number 
  upper-bounded by $\bar{n}_{\mathcal{U}}=\mathcal{O}(\sqrt{n})$.
Formally, there must exist a subset of the codebook 
  $\left\{\hat{\rho}^{A^n}_u,u\in\mathcal{U}\right\}$, where
  $\mathcal{U}=\left\{u:\bar{n}_u\leq\bar{n}_{\mathcal{U}}\right\}$,
  with $\frac{|\mathcal{U}|}{2^M}\geq\kappa$ and $\kappa>0$.
Suppose Bob has an unattenuated pure-loss channel from Alice ($\eta_b=0$ and
  $\bar{n}_T=0$) and access to any receiver allowed by quantum mechanics.
The decoding error probability $\mathbb{P}_e^{(b)}$ in such scenario 
  clearly lower-bounds the decoding error probability in a practical scenario
  where the channel from Alice is lossy and either the channel or the receiver 
  are noisy.
Denote by $E_{a\rightarrow k}$ the event that a transmitted message $W_a$ is 
  decoded as $W_k\neq W_a$.
Since the messages are 
  equiprobable, the average probability of error for the 
  codebook containing only the codewords in $\mathcal{U}$ is: 
\begin{align}
\mathbb{P}_e^{(b)}(\mathcal{U})&=\frac{1}{|\mathcal{U}|}\sum_{a\in\mathcal{U}}\mathbb{P}\left(\cup_{k\in\mathcal{U}\backslash\{a\}}E_{a\rightarrow k}\right).
\end{align}
Since the probability that a message is sent from $\mathcal{U}$ is $\kappa$,
\begin{align}
\label{eq:conv_pe_lb}\mathbb{P}_e^{(b)}&\geq\kappa\mathbb{P}_e^{(b)}(\mathcal{U}).
\end{align}
Equality holds only when Bob receives messages that are
  not in $\mathcal{U}$ error-free and knows when the messages from 
  $\mathcal{U}$ are sent (in other words, equality holds when the set of 
  messages on which decoder can err is reduced to $\mathcal{U}$).
Denote by $W_a$, $a\in\mathcal{U}$, the message transmitted by Alice, and by 
  $\hat{W}_a$ Bob's decoding of $W_a$.
Then, since each message is equiprobable and $|\mathcal{U}|=\kappa2^M$,
\begin{align}
\log_2\kappa+M&=H(W_a)\\
\label{eq:altproof_2}&=I(W_a;\hat{W}_a)+H(W_a|\hat{W}_a)\\
\label{eq:altproof_fano}&\leq I(W_a;\hat{W}_a)+1+(\log_2 \kappa+M)\mathbb{P}_e^{(b)}(\mathcal{U})\\
\label{eq:altproof_holevo}&\leq\chi\left(\left\{\frac{1}{|\mathcal{U}|},\hat{\rho}^{A^n}_u\right\}\right)+1+(\log_2\kappa+M)\mathbb{P}_e^{(b)}(\mathcal{U})
\end{align}
where \eqref{eq:altproof_2} is from the definition of mutual information,
  \eqref{eq:altproof_fano} is due to classical Fano's inequality
  \cite[Equation~(9.37)]{cover02IT},
  and \eqref{eq:altproof_holevo} is the Holevo bound 
  $I(X;Y)\leq \chi(\{p_X(x),\hat{\rho}_x\})$~\cite{hol97}.
The mutual information $I(X;Y)$ is between a classical input $X$ and a 
  classical output $Y$, which is a function of the prior probability 
  distribution $p_X(x)$, and the conditional probability distribution 
  $p_{Y|X}(y|x)$, with $x \in {\mathcal X}$ and $y \in {\mathcal Y}$.
The classical input $x$ maps to a quantum state ${\hat \rho}_x$. 
A \emph{specific choice} of a quantum measurement, described by POVM 
  elements $\{\Pi_y, y \in {\mathcal Y}\}$, induces the conditional
  probability distribution
  $p_{Y|X}(y|x)=\trace\left[\Pi_y\hat{\rho}_x\right]$.
The Holevo information, 
  $\chi(\left\{p_x,\hat{\rho}_x\right\})=S\left(\sum_{x \in {\mathcal X}}p_x\hat{\rho}_x\right)-\sum_{x \in {\mathcal X}}p_xS(\hat{\rho}_x)$,
  where $S(\hat{\rho})\equiv-\trace[\hat{\rho}\ln\hat{\rho}]$ is the
  von Neumann entropy of the state $\hat{\rho}$, is not a function of the 
  quantum measurement.
Since 
  $\hat{\rho}^{A^n}_u=\ket{\psi_u}^{A^nA^n}\hspace{-4pt}\bra{\psi_u}$
  is a pure state, 
  $\chi\left(\{\frac{1}{|\mathcal{U}|},\hat{\rho}^{A^n}_u\}\right)=S\left(\frac{1}{|\mathcal{U}|}\sum_{u\in\mathcal{U}}\ket{\psi_u}^{A^nA^n}\hspace{-4pt}\bra{\psi_u}\right)$.
Denote the ``average codeword'' in $\mathcal{U}$ by 
  $\bar{\rho}^{A^n}=\frac{1}{|\mathcal{U}|}\sum_{u\in\mathcal{U}}\ket{\psi_u}^{A^nA^n}\hspace{-4pt}\bra{\psi_u}$,
  and the state of the $j^{\text{th}}$ mode of $\bar{\rho}^{A^n}$ by 
  $\bar{\rho}^{A^n}_j$.
We obtain $\bar{\rho}^{A^n}_j$ by tracing out all the other modes in
  $\bar{\rho}^{A^n}$ and denote its mean photon number by $\bar{n}_j$
  (i.e.~$\bar{n}_j$ is the mean photon number of the $j^{\text{th}}$ mode of 
  $\bar{\rho}^{A^n}$).
Finally, denote a coherent state ensemble with a 
  zero-mean circularly-symmetric Gaussian distribution by
  $\hat{\rho}^T_{\bar{n}}=\frac{1}{\pi \bar{n}}\int e^{-{|\alpha|^2}/{\bar{n}}}|\alpha\rangle\langle\alpha |{\rm d}^2\alpha$.
The von Neumann entropy of $\hat{\rho}^T_{\bar{n}}$, 
  $S\left(\hat{\rho}^T_{\bar{n}}\right)=\log_2(1+\bar{n})+\bar{n}\log_2\left(1+\frac{1}{\bar{n}}\right)$.
Now,
\begin{align}
\label{eq:subadditivity}S\left(\bar{\rho}^{A^n}\right)&\leq\sum_{j=1}^nS\left(\bar{\rho}^{A^n}_j\right)\\
\label{eq:gaussianmax}&\leq\sum_{i=1}^n\log_2(1+\bar{n}_j)+\bar{n}_j\log_2\left(1+\frac{1}{\bar{n}_j}\right)\\
\label{eq:concavity}&\leq n\left(\log_2\left(1+\frac{\bar{n}_{\mathcal{U}}}{n}\right)+\frac{\bar{n}_{\mathcal{U}}}{n}\log_2\left(1+\frac{n}{\bar{n}_{\mathcal{U}}}\right)\right),
\end{align}
where \eqref{eq:subadditivity} follows from the sub-additivity of the 
  von Neumann entropy and \eqref{eq:gaussianmax} is due to 
  $\hat{\rho}^T_{\bar{n}}$
  maximizing the von Neumann entropy of a single-mode state with mean photon
  number constraint $\bar{n}$~\cite{giovannetti04cappureloss}.
Now, $S\left(\hat{\rho}^T_{\bar{n}}\right)$ is concave and increasing for 
  $\bar{n}>0$,
  and, since $\sum_{j=1}^n\bar{n}_j\leq\bar{n}_{\mathcal{U}}$ by construction of
  $\mathcal{U}$, the application of Jensen's inequality yields
  \eqref{eq:concavity}.
Combining \eqref{eq:altproof_holevo} and \eqref{eq:concavity} 
  and solving for $\mathbb{P}_e^{(b)}(\mathcal{U})$ yields:
\begin{align}
\label{eq:conv_peu_b}\mathbb{P}_e^{(b)}(\mathcal{U})&\geq1-\frac{\log_2\left(1+\frac{\bar{n}_{\mathcal{U}}}{n}\right)+\frac{\bar{n}_{\mathcal{U}}}{n}\log_2\left(1+\frac{n}{\bar{n}_{\mathcal{U}}}\right)+\frac{1}{n}}{\frac{\log_2\kappa}{n}+\frac{M}{n}}.
\end{align}
Substituting \eqref{eq:conv_peu_b} into \eqref{eq:conv_pe_lb} 
  yields the following lower bound on Bob's decoding error probability:
\begin{align}
\label{eq:conv_pe_b}\mathbb{P}_e^{(b)}&\geq\kappa\left[1-\frac{\log_2\left(1+\frac{\bar{n}_{\mathcal{U}}}{n}\right)+\frac{\bar{n}_{\mathcal{U}}}{n}\log_2\left(1+\frac{n}{\bar{n}_{\mathcal{U}}}\right)+\frac{1}{n}}{\frac{\log_2\kappa}{n}+\frac{M}{n}}\right]
\end{align}
Since Alice transmits $\omega(\sqrt{n})$ bits in $n$ modes,
  $M/n=\omega(1/\sqrt{n})$ bits/symbol.
However, since $\bar{n}_{\mathcal{U}}=\mathcal{O}(\sqrt{n})$, as
  $n\rightarrow\infty$, $\mathbb{P}_e^{(b)}$ is bounded away from 
  zero for any $\kappa>0$.
Thus, Alice cannot transmit $\omega(\sqrt{n})$ bits in $n$ optical modes 
  both covertly and reliably.
\end{proof}

\section{Details of the experimental methods}

Here we provide the mathematical details of the experimental methods.

\subsection*{Calculation of Bob's maximum throughput}
The $Q$-ary PPM signaling combined with Bob's 
  device for assigning symbols to received PPM frames induces a discrete
  memoryless channel described by a conditional distribution $\mathbb{P}(Y|X)$,
  where $X\in\{1,\ldots,Q\}$ is Alice's input symbol and 
  $Y\in\{1,\ldots,Q,\mathcal{E}\}$ is Bob's output symbol with $\mathcal{E}$ 
  indicating an erasure.
Since Bob observes Alice's pulse with probability 
  $1-e^{-\bar{n}^{(b)}_{det}}$,
  $\mathbb{P}(Y|X)$ is characterized as follows:
\begin{align*}
&\mathbb{P}(Y=x|X=x)=\left(1-e^{-\bar{n}^{(b)}_{det}}\right)\sum_{i=0}^{Q-1}\frac{1}{i+1}\left(p_D^{(b)}\right)^i\left(1-p_D^{(b)}\right)^{Q-1-i}+e^{-\bar{n}^{(b)}_{det}}\sum_{i=1}^Q\frac{1}{i}\left(p_D^{(b)}\right)^i\left(1-p_D^{(b)}\right)^{Q-i}\\
&\mathbb{P}(Y=\mathcal{E}|X=x)=e^{-\bar{n}^{(b)}_{det}}\left(1-p_D^{(b)}\right)^{Q}\\
&\mathbb{P}(Y=y,y\notin \{x,\mathcal{E}\}|X=x)=\frac{1-\mathbb{P}(Y=x|x)-\mathbb{P}(Y=\mathcal{E}|x)}{Q-1}
\end{align*}
The symmetry of this channel yields the Shannon capacity~\cite{shannon48it}
  $C_s=I(X;Y)$, where $\mathbb{P}(X=x)=\frac{1}{Q}$ for $x=1,\ldots,Q$
  and $I(X;Y)$ is the mutual information between $X$ and $Y$.
We use the experimentally-observed values from Table \ref{tab:exp_params} 
  to compute $C_s$ 
  for each regime, and plot $\frac{C_s\zeta n}{Q}$ since $\frac{\zeta n}{Q}$ 
  is the expected number of PPM frames selected for transmission.

\subsection*{Derivation of the log-likelihood ratio test statistic}
The log-likelihood ratio test statistic is given by 
  $L=\frac{f_1(\mathbf{x}_w)}{f_0(\mathbf{x}_w)}$, where $f_0(\mathbf{x}_w)$ and
  $f_1(\mathbf{x}_w)$ are the likelihood functions of the click record 
  $\mathbf{x}_w$ corresponding to Alice being quiet and transmitting.
Click record $\mathbf{x}_w$ contains Willie's observations of each PPM frame
  on his channel from Alice
  $\mathbf{x}_w=[\mathbf{x}_1^{(w)},\ldots,\mathbf{x}_{n/Q}^{(w)}]$. 
$\mathbf{x}_i^{(w)}=[x_{i,1}^{(w)},\ldots,x_{i,Q}^{(w)}]$ contains the
  observation of the $i^{\text{th}}$ PPM frame with 
  $x_{i,j}^{(w)}\in\{0,1\}$, where ``0'' and ``1'' indicate the absence and the
  presence of a click, respectively.
When Alice does not transmit, Willie only observes dark clicks.
Thus, each $\mathbf{x}_w$ is a vector of
  i.i.d.~$\text{Bernoulli}\left(p_D^{(w)}\right)$ random variables.
The likelihood function of $\mathbf{x}_w$ under $H_0$ is then: 
\begin{align*}
f_0(\mathbf{x}_w)&=\prod_{i=1}^{n/Q}\left(p_D^{(w)}\right)^{\sum_{j=1}^Qx_{i,j}^{(w)}}\left(1-p_D^{(w)}\right)^{Q-\sum_{j=1}^Qx_{i,j}^{(w)}}.
\end{align*}
  
Now consider the scenario when Alice transmits.
The secret shared between Alice and Bob identifies the random subset 
  $\mathcal{S}$ of the PPM frames used for transmission, and a random vector 
  $\mathbf{k}$ which is modulo-added to the codeword. 
Modulo addition of $\mathbf{k}$ effectively selects a random pulse location
  within each PPM frame.
Note that, while both the construction in the proof of Theorem \ref{th:lpd_ppm}
  and Alice's encoder described in the Methods generate $\mathcal{S}$ first
  and then $\mathbf{k}$, the order of these operations can be reversed: 
  we can first fix a random location of a pulse in each of $n/Q$ PPM
  frames, and then select a random subset of these frames.
Consider Willie's observation of the $i^{\text{th}}$ PPM frame, and suppose 
  that the $l^{\text{th}}$ mode is used if the frame is selected for 
  transmission.
Denote the probability of Willie's detector observing Alice's pulse by
  $p_r^{(w)}=1-e^{-\bar{n}^{(w)}_{\mathit{det}}}$.
By construction, frames are selected for 
  transmission independent of each other with probability $\zeta$.
Willie's detector registers a click on this mode when one of the following 
  disjoint events occurs:
\begin{itemize}
\item{PPM frame is selected and pulse is detected by Willie 
  (probability $\zeta p_r^{(w)}$);}
\item{PPM frame is selected, but Willie, instead of detecting it, records 
  a dark click (probability $\zeta\left(1-p_r^{(w)}\right)p_D^{(w)}$); and,}
\item{PPM frame is not selected, but Willie records a dark click 
  (probability $(1-\zeta)p_D^{(w)}$).}
\end{itemize}
The probability of the union of these events is
\begin{align}
\label{eq:p_s}p_s^{(w)}&=\zeta p_r^{(w)}\left(1-p_D^{(w)}\right)+p_D^{(w)},
\end{align}
and, thus, Willie observes an independent 
  $\text{Bernoulli}\left(p_s^{(w)}\right)$ random 
  variable in the $l^{\text{th}}$ mode of the $i^{\text{th}}$ PPM frame.
Since Alice only uses the $l^{\text{th}}$ mode for transmission, in modes other 
  than the $l^{\text{th}}$, Willie observes a set of 
  $Q-1$  i.i.d.~$\text{Bernoulli}\left(p_D^{(w)}\right)$ random variables 
  corresponding to dark clicks.
The click record $\mathbf{x}_i^{(w)}$ of the $i^{\text{th}}$ PPM frame is 
  independent of other PPM frame and, thus, the 
  likelihood function of $\mathbf{x}_w$ under $H_1$ is
  $\mathbb{P}_1(\mathbf{x}_w=\mathbf{x})=\prod_{i=1}^np_1(\mathbf{x}_i^{(w)})$
  where
\begin{align*}
f_1(\mathbf{x}_w)&=\prod_{i=1}^{n/Q}\frac{1}{Q}\sum_{l=1}^Q\left(p_s^{(w)}\right)^{x_{i,l}^{(w)}}\left(1-p_s^{(w)}\right)^{1-x_{i,l}^{(w)}}\left(p_D^{(w)}\right)^{\sum_{\myatop{j=1}{j\neq l}}^Qx_{i,j}^{(w)}}\left(1-p_D^{(w)}\right)^{Q-1-\sum_{\myatop{j=1}{j\neq l}}^Qx_{i,j}^{(w)}}.
\end{align*}
Evaluation of the likelihood ratio yields:
\begin{align}
\nonumber\frac{f_1(\mathbf{x}_w)}{f_0(\mathbf{x}_w)}&=\prod_{i=1}^{n/Q}\frac{1-\zeta p_r^{(w)}}{Q}\sum_{l=1}^Q\left(1+\frac{\zeta p_r^{(w)}}{\left(1-\zeta p_r^{(w)}\right)p_D^{(w)}}\right)^{x_{i,l}^{(w)}}\\
\label{eq:simplify_ratio}&=\prod_{i=1}^{n/Q}\left[1-\zeta p_r^{(w)}+\frac{\zeta p_r^{(w)}y_i^{(w)}}{Qp_D^{(w)}}\right]
\end{align}
where $y_i^{(w)}=\sum_{l=1}^Qx_{i,l}^{(w)}$ and the simplification 
  yielding \eqref{eq:simplify_ratio} is due to $x_{i,l}^{(w)}\in\{0,1\}$.
Taking the logarithm of equation \eqref{eq:simplify_ratio} yields
  the log-likelihood ratio in \eqref{eq:pure_LRT}.

\subsection*{Gaussian approximation of $\mathbb{P}_e^{(w)}$}
For small $\zeta$, the Taylor series expansion
  of the summand in \eqref{eq:pure_LRT} at $\zeta=0$, 
  $\log\left[1+\zeta p_r^{(w)}\left(\frac{y_i^{(w)}}{Qp_D^{(w)}}-1\right)\right]
  \approx\zeta p_r^{(w)}\left(\frac{y_i^{(w)}}{Qp_D^{(w)}}-1\right)$,
  yields an approximation for the log-likelihood ratio:
$L\approx\frac{\zeta p_r^{(w)}}{Qp_D^{(w)}}\left(\sum_{i=1}^{n/Q}y_i^{(w)}-np_D^{(w)}\right)$.
Thus, effectively, Willie uses the total click count 
  $Y=\sum_{i=1}^{n/Q}y_i^{(w)}$ as a
  test statistic, which explains the lack of sensitivity of our test to 
  the variations in the observed channel characteristics given in Table 
  \ref{tab:exp_params}.

This also provides an analytical approximation of $\mathbb{P}_e^{(w)}$.
First consider the case when Alice is not transmitting.
Then the total click count is a binomial random variable 
  $Y\sim\mathcal{B}\left(y;p_D^{(w)},n\right)$ whose distribution, for large 
  $n$, can be approximated using the central limit theorem
  by a Gaussian distribution $\Phi\left(y;\mu_0,\sigma^2_0\right)$
  with $\mu_0=np_D^{(w)}$ and $\sigma^2_0=np_D^{(w)}\left(1-p_D^{(w)}\right)$,
  where $\Phi\left(x;\mu,\sigma^2\right)=\frac{1}{\sqrt{2\pi}\sigma}\int_{-\infty}^xe^{-\frac{|t-\mu|^2}{2\sigma^2}}dt$
  is the distribution function of a Gaussian random variable 
  $\mathcal{N}(x;\mu,\sigma^2)$.
Now consider the case when Alice is transmitting.
Since $\mathcal{S}$ and $\mathbf{k}$ are unknown to Willie, the total click
  count is the sum of two independent but not identical binomial random
  variables 
  $Y=X+Z$, where $X\sim\mathcal{B}\left(x;p_D^{(w)},n-\frac{n}{Q}\right)$ is the
  number of dark clicks in the $n-\frac{n}{Q}$ modes that Alice never uses in 
  a PPM scheme and $Z\sim\mathcal{B}\left(z;p_s^{(w)},\frac{n}{Q}\right)$ is the
  contribution from the $\frac{n}{Q}$ modes that Alice can use to transmit,
  with $p_s^{(w)}$ defined in \eqref{eq:p_s}.
By the central limit theorem, for large $n$, the distribution of $X$ can be 
  approximated using a Gaussian distribution $\Phi(x;\mu_X,\sigma^2_X)$ where
  $\mu_X=\left(n-\frac{n}{Q}\right)p_D^{(w)}$ and 
  $\sigma^2_X=\left(n-\frac{n}{Q}\right)p_D^{(w)}\left(1-p_D^{(w)}\right)$.
Similarly, the distribution of $Z$ can be approximated by a Gaussian 
  distribution $\Phi\left(z;\mu_Z,\sigma^2_Z\right)$ where
  $\mu_Z=\frac{n}{Q}\left(\zeta p_r^{(w)}+\left(1-\zeta p_r^{(w)}\right)p_D^{(w)}\right)$ and 
  $\sigma^2_Z=\frac{n}{Q}\left(\zeta p_r^{(w)}+\left(1-\zeta p_r^{(w)}\right)p_D^{(w)}\right)\left(1-\zeta  p_r^{(w)}\right)\left(1-p_D^{(w)}\right)$.
Thus,  the distribution of $Y$ can be approximated by a Gaussian distribution 
  $\Phi\left(y;\mu_1,\sigma^2_1\right)$
  with $\mu_1=\mu_X+\mu_Z$ and $\sigma^2_1=\sigma^2_X+\sigma^2_Z$
  via the additivity property of independent Gaussian random variables.
Willie's probability of error is thus approximated by:
\begin{align}
\label{eq:pe_approx}\tilde{\mathbb{P}}_e^{(w)}=\frac{1}{2}\min_S(1-\Phi(S;\mu_0,\sigma^2_0)+\Phi(S;\mu_1,\sigma^2_1)).
\end{align}
The value of the threshold $S^*$ that minimizes the RHS of \eqref{eq:pe_approx} 
  satisfies
  $\frac{|S^*-\mu_0|^2}{\sigma^2_0}-\log(\sigma^2_1/\sigma^2_0)=\frac{|S^*-\mu_1|^2}{\sigma^2_1}$.

\end{document}